\patchcmd{\thmhead}{(#3)}{#3}{}{}
\DeclareMathOperator{\ini}{in} 
\DeclareMathOperator{\ev}{ev} 
\DeclareMathOperator{\Tr}{Tr}
\DeclareMathOperator{\PRS}{PRS}
\DeclareMathOperator{\PRM}{PRM}
\DeclareMathOperator{\RM}{RM}
\DeclareMathOperator{\RS}{RS}
\DeclareMathOperator{\wt}{wt}
\newcommand{\F}{{\mathbb{F}}}
\newcommand{\fq}{\mathbb{F}_q}
\newcommand{\fqs}{\mathbb{F}_{q^s}}
\newcommand{\PP}{{\mathbb{P}}}
\newcommand{\PM}{{\mathbb{P}^{m}}}
\newcommand{\Z}{{\mathbb{Z}}}
\newcommand{\II}{{\mathfrak{I}}}
\newcommand{\T}{{\mathcal{T}}}
\newcommand{\zms}{{\mathbb{Z}^{m}_{q^s}}}
\newcommand{\zmsdos}{{\mathbb{Z}^{2}_{q^s}}}
\newcommand{\funf}{f_{a_2}^r}
\newcommand{\fung}{g_{a_2}^r}
\newcommand{\fungg}{g'_{a_2}}
\newcommand{\funh}{h_{a_2}^r}
\newcommand{\funl}{l_{a_2}}
\newcommand{\funll}{l'_{a_2}}
\newcommand{\condicion}{\bigcup_{c_2\in \II_{a_2},c_2>d-(q^s-1)}\II_{(d-c_2,c_2)}\subset \Delta_{\leq d}}
\newcommand{\condicionn}{\bigcup_{c_2\in \II_{a_2}}\II_{(d-c_2,c_2)}\subset \Delta_{\leq d}}
\newcommand{\condicionnn}{\bigcup_{c_2\in \II_{b_2},c_2>d-(q^s-1)}\II_{(d-c_2,c_2)}\subset \Delta_{\leq d}}
\newcommand{\notcondicion}{\bigcup_{c_2\in \II_{a_2},c_2>d-(q^s-1)}\II_{(d-c_2,c_2)}\not\subset \Delta_{\leq d}}
\DeclarePairedDelimiter\abs{\lvert}{\rvert}%
\DeclarePairedDelimiter\norm{\lVert}{\rVert}%
\let\oldabs\abs
\def\abs{\@ifstar{\oldabs}{\oldabs*}}
\let\oldnorm\norm
\def\norm{\@ifstar{\oldnorm}{\oldnorm*}}
\newtheorem{thm}{Theorem}[section]
\newtheorem{prop}[thm]{Proposition}
\newtheorem{cor}[thm]{Corollary}
\newtheorem*{cor*}{Corollary}
\newtheorem{lem}[thm]{Lemma}
\theoremstyle{definition}
\newtheorem{defn}[thm]{Definition} 
\newtheorem{rem}[thm]{Remark} 
\newtheorem{ex}[thm]{Example}
\title[Subfield subcodes of projective Reed-Muller codes]{Subfield subcodes of projective Reed-Muller codes}
\author{Philippe Gimenez, Diego Ruano and Rodrigo San-José}
\curraddr{
\texttt{Philippe Gimenez, Diego Ruano, Rodrigo San-José:} IMUVA-Mathematics Research Institute, Universidad de Valladolid, 47011 Valladolid (Spain).
}
\email{pgimenez@uva.es;  diego.ruano@uva.es; rodrigo.san-jose@uva.es}
\date{}
\thanks{This work was supported in part by the following grants:
Grant TED2021-130358B-I00 funded by MCIN/AEI/10.13039/501100011033 and by the ``European Union NextGenerationEU/PRTR'', Grants PID2022-138906NB-C21 and PID2022-137283NB-C22 funded by MCIN/AEI/10.13039/501100011033 and by ERDF ``A way of making Europe'', and Grant FPU20/01311 funded by the Spanish Ministry of Universities.}
\subjclass[2020]{Primary: 11T71. Secondary: 94B05, 14G50, 13P25}
\keywords{Evaluation codes, linear codes, projective Reed-Muller codes, subfield subcodes, trace}
\begin{document}

\maketitle

\begin{abstract}
Explicit bases for the subfield subcodes of projective Reed-Muller codes over the projective plane and their duals are obtained. In particular, we provide a formula for the dimension of these codes. For the general case over the projective space, we generalize the necessary tools to deal with this case as well: we obtain a universal Gröbner basis for the vanishing ideal of the set of standard representatives of the projective space and we show how to reduce any monomial with respect to this Gröbner basis. With respect to the parameters of these codes, by considering subfield subcodes of projective Reed-Muller codes we obtain long linear codes with good parameters over a small finite field. 
\end{abstract}

\section{Introduction}

The subfield subcode of a linear code $C\subset \F_{q^s}^n$, with $s\geq 1$, is the linear code $C\cap \F_q^n$. This is a standard procedure that may be used to construct long linear codes over a small finite field. For instance, BCH codes can be seen as subfield subcodes of Reed-Solomon codes. In the multivariate case, the subfield subcodes of $J$-affine variety codes are well known \cite{galindo1} (in particular, the subfield subcodes of Reed-Muller codes) and have been used for several applications \cite{galindolcd,galindostabilizer}. The main problem that arises when working with subfield subcodes is the computation of a basis for the code, which also gives the dimension. In this paper, we compute bases for the subfield subcodes of projective Reed-Muller codes over the projective plane $\mathbb{P}^2$ and for their duals, and we also give tools to study the general case of projective Reed-Muller codes over the projective space $\mathbb{P}^m$. 

Projective Reed-Muller codes are evaluation codes obtained by evaluating multivariate homogeneous polynomials in the projective space. Arguing as in \cite{lachaud}, when one considers the sum of the rate and the relative minimum distance as a measure of how good the parameters of a code are, we obtain that projective Reed-Muller codes outperform Reed-Muller codes. It is therefore natural to pose the problem of studying the subfield subcodes of projective Reed-Muller codes, in particular, the problem of obtaining bases for the subfield subcode and its dual. As we stated previously, this has been done for different families of evaluation codes over the affine space \cite{galindo1,ssctoric}, but for evaluation codes over the projective space this has only been studied for evaluation codes over certain subsets of the projective line \cite{sanjoseSSCPRS}. In particular, the subfield subcodes of $J$-affine variety codes have been used for constructing quantum codes with good parameters \cite{galindo1,galindodistancestabilizer}, and one can expect that the subfield subcodes of projective Reed-Muller will also perform well in that setting.

In Section \ref{secp2}, we study the subfield subcode of a projective Reed-Muller code over the projective plane $\mathbb{P}^2$ and its dual. Comparing with projective Reed-Muller codes over $\PP^m$, with $m>2$, the case $m=2$ is usually the most interesting one because it can give rise to long codes with competitive parameters, which is similar to what happens in the affine case with Reed-Muller codes. For the case $m=2$, we provide explicit bases for both the subfield subcode of a projective Reed-Muller code over the projective plane $\mathbb{P}^2$ and its dual. In order to construct the basis for the dual, we consider Delsarte's Theorem \ref{delsarte}, which shows that we can generate the dual of the subfield subcode of a projective Reed-Muller code of degree $d$ by considering the evaluation of the traces of monomials of degree $d$. Then we can obtain a basis for the code by extracting a maximal linearly independent set of vectors, and we do this by using the vanishing ideal of the projective plane from Lemma \ref{vanishingidealP2} and the division by a Gröbner basis of this ideal. For the primary code, we study some polynomials obtained by combining traces of monomials and such that they can be regarded as homogeneous polynomials of degree $d$. We show that the set formed by their evaluations is linearly independent, and we conclude that this set is a basis for the code by a dimension argument, as we already have a basis of the dual code. 

We generalize some of the previous ideas to the general setting of the projective space $\mathbb{P}^m$ in Section \ref{secpm}. When we consider a larger $m$, we usually increase the length at the cost of having worse relative parameters, and also the analysis gets more complicated. Nevertheless, we are able to deal with this case as well. We give the vanishing ideal of a certain set of representatives of the points of $\mathbb{P}^m$. We prove that the set of generators that we give is a universal Gröbner basis of the ideal by using Buchberger's criterion \cite[\S 9 Thm. 3, Chapter 2]{cox} and showing that all the $S$-polynomials of the generators reduce to $0$, for any monomial order. From this result, we obtain the initial ideal and a basis for the quotient ring. Moreover, we provide a way to obtain the remainder of the division algorithm by this Gröbner basis for any monomial. This can be proved by checking that the remainder that we state is equivalent in the quotient ring to the original monomial, i.e., both have the same evaluation, and then checking that all the monomials in the support of the remainder are part of the basis given for the quotient ring. Particular cases of these ideas have been used previously for the projective line and the projective plane \cite{sanjoseSSCPRS,decodingRMP}, and we showcase them in full generality. With these tools, it is possible to deal with the general case of computing bases for the subfield subcodes of projective Reed-Muller codes over $\mathbb{P}^m$ and their duals, although getting explicit results as in the case $m=2$ seems out of reach as it gets too technical.

In Section \ref{secexamples}, we provide some examples of subfield subcodes of projective Reed-Muller codes. We compare their parameters with the codes from \cite{codetables}, and we see that some of the codes that we obtain have the best known parameters for the binary and ternary case. When considering longer codes, it is thus expected to also achieve good parameters, although the absence of tables for long codes makes comparisons difficult. One way to see that some of the longer codes also have good parameters is to consider the Gilbert-Varshamov bound \cite[Thm. 2.8.1]{huffman}. We provide a table with several of the codes that we obtain that exceed it. 

\section{Preliminaries}

We consider a finite field $\fq$ of $q$ elements with characteristic $p$, and its degree $s$ extension $\F_{q^s}$, with $s>1$. We consider the projective space $\PM$ over $\F_{q^s}$ and the polynomial ring $S=\fqs [x_0,\dots,x_m]$. Throughout this work, we will fix representatives for the points of $\PM$: for each point in $\PM$, we choose the representative whose first nonzero coordinate is equal to 1, starting from the left. We will denote by $P^m$ the set of representatives that we have chosen (seen as points in the affine space $\mathbb{A}^{m+1}$) and we will call them \textit{standard representatives}. Let $n=\abs{P^m}=\frac{q^{s(m+1)}-1}{q^s-1}$. We consider the following evaluation map:
$$
\ev_d:S_d \rightarrow \fqs^{n},\:\: f\mapsto \left(f(Q_1),\dots,f(Q_n)\right)_{Q_i \in P^m},
$$
where $S_d$ denotes the homogeneous polynomials of degree $d$. If $m=1$, the image of this evaluation map is the \textit{projective Reed-Solomon code} of degree $d$ (also called doubly extended Reed-Solomon code), and we will denote it by $\PRS_d$. The parameters of these codes are $[q^s+1,d+1,q^s-d+1]$. If $m>1$, then the image of the previous evaluation map is the \textit{projective Reed-Muller code} of degree $d$, which we will denote by $\PRM_d(m)$. This is another well known family of codes \cite{lachaud,sorensen}.

Given a code $C\subset \fqs^n$, its subfield subcode with respect to the extension $\fqs\supset \fq$ is defined as $C^\sigma:=C\cap \fq^n$. Subfield subcodes of projective Reed-Solomon codes, denoted by $\PRS_d^\sigma$, were studied in \cite{sanjoseSSCPRS}, and in this paper we are interested in studying the subfield subcodes of projective Reed-Muller codes and their dual codes, denoted by $\PRM_d^{\sigma}(m)$ and $\PRM_d^{\sigma,\perp}(m)$, respectively.  Before studying the projective case, let us show what happens in the affine case.

\subsection{Subfield subcodes of affine Reed-Muller codes}
The subfield subcodes of affine Reed-Muller, and, more generally, $J$-affine variety codes, are well known \cite{galindolcd,galindostabilizer}. We introduce now some of the basic techniques that are used to study the subfield subcodes of Reed-Muller codes, which we will denote by $\RM^\sigma_d(m)$.

Let $m\geq 1$ be an integer. We consider the ideal $I_{q^s}$ in the ring $R=\fqs[x_1,\dots,x_m]$ generated by $x_j^{q^s}-x_j$. It is clear that the finite set of points defined by $I_{q^s}$ is precisely the whole affine space $\mathbb{A}^m$ over $\fqs$. 

Let $n=q^{sm}$. Consider the quotient ring $R_{q^s}=R/I_{q^s}$ and the evaluation map $\ev_{\mathbb{A}^m}:R_{q^s}\rightarrow \F_{q^s}^{n}$ given by 
$$
\ev_{\mathbb{A}^m}(f)=(f(Q_1),f(Q_2),\dots,f(Q_{n}))_{Q_i\in \mathbb{A}^m}.
$$
This map is well defined and is clearly an isomorphism of vector spaces because $I_{q^s}$ is the vanishing ideal of $\mathbb{A}^m$. When working over quotient rings, we will use the same letter $f$ to denote the equivalence class and any polynomial representing it.

For $m=1$, the image by the evaluation map of $R_{\leq d}$, the polynomials of degree less than or equal to $d$, is the Reed-Solomon code of degree $d$ (sometimes called extended Reed-Solomon code), which we denote by $\RS_d$. For $m\geq 2$, the image by the evaluation map of $R_{\leq d}$ is the Reed-Muller code of degree $d$.

We introduce now multivariate cyclotomic sets, which are useful for computing the subfield subcodes of Reed-Muller codes. We consider $\mathbb{Z}/\langle q^s-1 \rangle$, where  we represent the classes of $\mathbb{Z}/\langle q^s-1 \rangle$ by $\{1,2,\dots,q^s-1\}$, and we define $\mathbb{Z}_{q^s}=\{0\}\cup \mathbb{Z}/\langle q^s-1 \rangle$, where we represent its classes by $\{0,1,\dots,q^s-1\}$. We will call a subset $\II$ of the Cartesian product $\mathbb{Z}^{m}_{q^s}:=\prod_{i=1}^m \mathbb{Z}_{q^s}$ a \textit{cyclotomic set} with respect to $q$ if $q\cdot c \in \II$ for any $c\in \II$.
Furthermore, $\II$ is said to be \textit{minimal} (with respect to $q$) if it can be expressed as $\II=\{q^i\cdot c,i=1,2,\dots\}$ for a fixed $c\in \II$, and in that situation we will write $\II_c:=\II$ and $n_c=\abs{\II_c}$. 

Now we define the following lexicographic order in the Cartesian product $\zms$: $a=(a_1,\dots,a_m) < (b_1,\dots,b_m)=b$ if and only if the rightmost entry of $b-a$, viewing this vector in $\mathbb{Z}^m$, is positive. We say that $a\in \II_c$ is a  \textit{minimal representative} of $\II_c$ if $a$ is the least element in $\II_c$ according to the order that we have given, and we will say that $b\in \II_c$ it is a \textit{maximal representative} of $\II_c$ if it is the biggest element. We will denote by $\mathcal{A}$ the set of minimal representatives of the minimal cyclotomic sets, and by $\mathcal{B}$ the set of maximal representatives of the minimal cyclotomic sets.

We can introduce a notion of degree for the elements in $\zms$. Given an integer $d\geq 1$, we define $\Delta_d=\{ c=(c_1,c_2,\dots,c_m)\in \zms \mid \sum_{i=1}^m c_i=d \}$, $\Delta_{<d}=\{ c=(c_1,c_2,\dots,c_m)\in \zms \mid \sum_{i=1}^m c_i<d \}$ and $\Delta_{\leq d}=\{ c=(c_1,c_2,\dots,c_m)\in \zms \mid \sum_{i=1}^m c_i\leq d \}$. We will also denote by $\mathcal{A}_{< d}$ and $\mathcal{A}_{\leq d}$ the elements $a \in \mathcal{A}$ such that $\II_a\subset \Delta_{<d}$ and $\II_a\subset \Delta_{\leq d}$, respectively.

\begin{ex}\label{exciclo}
Consider the extension $\F_4\supset \F_2$ with $m=2$. We have $q=2$ and $q^s=2^2=4$. Therefore, $\Z_{4}=\{0\}\cup \Z/\langle 3 \rangle$. We have the following minimal cyclotomic sets:
$$
\II_{(0,0)}=\{(0,0)\}, \II_{(1,0)}=\{(1,0),(2,0)\}, \II_{(0,1)}=\{(0,1),(0,2)\}, \II_{(1,1)}=\{(1,1),(2,2)\},
$$
$$
\II_{(3,0)}=\{(3,0)\}, \II_{(0,3)}=\{(0,3)\}, \II_{(3,3)}=\{(3,3)\}, \II_{(2,1)}=\{(2,1),(1,2)\},
$$
$$
\II_{(1,3)}=\{(1,3),(2,3)\}, \II_{(3,1)}=\{(3,1),(3,2)\}.
$$
The set of minimal representatives is $$\mathcal{A}=\{(0,0),(1,0),(0,1),(1,1),(3,0),(0,3),(3,3),(2,1),(1,3),(3,1) \},
$$
and the set of maximal representatives is:
$$
\mathcal{B}=\{(0,0),(2,0),(0,2),(2,2),(3,0),(0,3),(3,3),(1,2),(2,3),(3,2) \}.
$$
\end{ex}

For each $a\in \mathcal{A}$, we define the following trace map:
$$
\mathcal{T}_a:R_{q^s}\rightarrow R_{q^s},\:\: f\mapsto f+f^q+\cdots + f^{q^{(n_a-1)}},
$$
where we fix representatives in $R_{q^s}$ as follows: we will choose the representative of $f$ (and $\mathcal{T}_a(f)$) such that the monomials $x_1^{\gamma_1}\cdots x_m^{\gamma_m}$ in its support have their exponents reduced modulo $q^s-1$, i.e., $0\leq \gamma_i\leq q^s-1$, $1\leq i \leq m$. We will represent elements of $R_{q^s}$ and $R$ in the same way (simply as polynomials). Therefore, sometimes we consider $\mathcal{T}_a(f)$ as a polynomial in $R$ (the representative that we have chosen), which can be seen in other quotient spaces (such as the one we will define for the projective case).

\begin{ex}
Continuing with Example \ref{exciclo}, let us consider $a=(2,1)$ and compute $\T_a(x_1^2x_2)$. We have $n_a=2$ and, since $x_1^4=x_1$ in $R_4=\F_4[x_1,x_2]/\langle x_1^4-x_1,x_2^4-x_2\rangle$, then $\T_a(x_1^2x_2)=x_1^2x_2+x_1x_2^2$ which is the representative of  $x_1^2x_2+x_1^4x_2^2$ in $R_4$ with its exponents reduced modulo $q^s-1=3$.
\end{ex}

The following result gives a basis for the subfield subcodes of Reed-Muller codes (and also Reed-Solomon codes) \cite[Thm. 11]{galindolcd}, which we will denote by $\RM_d^\sigma(m)$.

\begin{thm}\label{baseafin}
Set $\xi_a$ a primitive element of the field $\F_{q^{n_a}}$. A basis for the vector space $\RM^\sigma_d(m)$ is obtained by considering the images under the map $\ev_{\mathbb{A}^m}$ of the set 
$$
\bigcup_{a\in \mathcal{A}_{\leq d}}\{\mathcal{T}_a(\xi_a^r x^a)\mid 0\leq r\leq n_a-1 \}.
$$
\end{thm}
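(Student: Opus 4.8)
The plan is to leverage the classical description of subfield subcodes via the trace. By Delsarte's theorem, $C^\sigma = (\Tr(C^\perp))^\perp$, and dualizing twice one gets that $C^\sigma$ is spanned by all codewords of $C$ that are fixed by the coordinatewise Frobenius $\sigma\colon (c_i)\mapsto (c_i^q)$. Since $\ev_{\mathbb{A}^m}$ is an isomorphism of $\fqs$-vector spaces intertwining the Frobenius on coordinates with the map $f\mapsto f^q$ on $R_{q^s}$ (using the chosen reduced representatives, so that $(x^c)^q$ is again a reduced monomial with exponent $q\cdot c\in\zms$), the code $\RM_d^\sigma(m)$ is the image under $\ev_{\mathbb{A}^m}$ of the $\fq$-subspace of $(R_{q^s})_{\leq d}$ consisting of elements fixed by $f\mapsto f^q$. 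So it suffices to produce an $\fq$-basis of this fixed subspace.

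First I would decompose $(R_{q^s})_{\leq d}$ as an $\fqs$-vector space into the direct sum $\bigoplus_{a\in\mathcal{A}_{\leq d}} V_a$, where $V_a$ is the span of the monomials $x^c$ with $c\in\II_a$; this is legitimate because the monomials $\{x^c : c\in\zms,\ \sum c_i\le d\}$ form an $\fqs$-basis of $(R_{q^s})_{\leq d}$, and the minimal cyclotomic sets $\II_a$, $a\in\mathcal{A}_{\leq d}$, partition the exponent set $\Delta_{\le d}$. The Frobenius $f\mapsto f^q$ permutes the monomials within each $\II_a$ cyclically (it sends $x^c\mapsto x^{q\cdot c}$), so it preserves each $V_a$ and acts on $V_a$ as a single $q$-semilinear cyclic shift of length $n_a$. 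Hence the fixed subspace splits as $\bigoplus_{a\in\mathcal{A}_{\leq d}} V_a^{\sigma}$, and it is enough to exhibit, for each $a$, an $\fq$-basis of $V_a^\sigma$ of size $n_a$ (matching the known dimension count).

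Next, for a fixed $a\in\mathcal{A}_{\leq d}$, I would identify $V_a$ with $\fqs\otimes_{\F_{q^{n_a}}}\F_{q^{n_a}}$ in the following way: the map $\F_{q^{n_a}}\to V_a$, $\beta\mapsto \sum_{i=0}^{n_a-1}\beta^{q^i} (x^a)^{q^i} = \T_a(\beta x^a)$, is an $\fq$-linear injection whose image lies in $V_a^\sigma$, since applying $f\mapsto f^q$ just cyclically relabels the sum (using $n_a\mid s$, so $q^{n_a}$ acts trivially). Taking a normal basis, or more simply the basis $\{1,\xi_a,\xi_a^2,\dots,\xi_a^{n_a-1}\}$ of $\F_{q^{n_a}}$ over $\fq$ afforded by a primitive element $\xi_a$, produces the $n_a$ elements $\T_a(\xi_a^r x^a)$, $0\le r\le n_a-1$. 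I would check these are $\fq$-linearly independent: a dependence relation pushes, via the invertibility of the "trace/Frobenius" Vandermonde-type matrix $(\xi_a^{rq^i})$ over $\F_{q^{n_a}}$ (the rows are distinct Galois conjugates, hence the determinant is a nonzero Moore/Vandermonde determinant), to a dependence among the monomials $(x^a)^{q^i}$, which are distinct, a contradiction. A dimension count then shows $\dim_{\fq} V_a^\sigma = n_a$: indeed $\dim_{\fqs} V_a = n_a$, and one checks $\dim_{\fq} V_a^\sigma = \dim_{\fqs} V_a$ because $V_a^\sigma$ is the $\fq$-form of $V_a$ cut out by a $q$-semilinear cyclic operator of full order (equivalently, $V_a\cong \fqs\otimes_{\fq} V_a^\sigma$). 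Therefore the $\T_a(\xi_a^r x^a)$ form a basis of $V_a^\sigma$, and assembling over $a\in\mathcal{A}_{\leq d}$ and applying $\ev_{\mathbb{A}^m}$ gives the claimed basis of $\RM_d^\sigma(m)$.

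The step I expect to be the main obstacle is the linear independence over $\fq$ of the $\T_a(\xi_a^r x^a)$ together with the matching dimension bound $\dim_{\fq}V_a^\sigma = n_a$: this is where one must be careful about the interplay between the $\fq$-structure and the $\fqs$-structure, about the fact that $\T_a$ is a trace for the subextension $\F_{q^{n_a}}/\fq$ rather than for $\fqs/\fq$, and about verifying that the relevant Moore-type determinant over $\F_{q^{n_a}}$ is nonzero; the rest (Delsarte duality, the cyclotomic decomposition, and compatibility of Frobenius with evaluation) is essentially bookkeeping.
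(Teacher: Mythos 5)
The paper does not prove Theorem~\ref{baseafin} itself; it cites it as \cite[Thm.~11]{galindolcd}, so there is no in-paper proof to compare against. Your overall strategy — reduce $\RM_d^\sigma(m)$ to the Frobenius-fixed subspace of $(R_{q^s})_{\le d}$ via the isomorphism $\ev_{\mathbb{A}^m}$, decompose by cyclotomic orbits, and exhibit $n_a$ independent traces per orbit — is the right one, and the per-orbit analysis (that $V_a^\sigma$ has $\fq$-dimension $n_a$ with basis $\{\mathcal{T}_a(\xi_a^r x^a)\}_{r=0}^{n_a-1}$, via injectivity of $\mu\mapsto\mathcal{T}_a(\mu x^a)$ on $\F_{q^{n_a}}$ or via the Moore determinant) is correct.

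However, there is a genuine gap in the decomposition step. You assert that the minimal cyclotomic sets $\II_a$ with $a\in\mathcal{A}_{\le d}$ partition $\Delta_{\le d}$, and hence that $(R_{q^s})_{\le d}=\bigoplus_{a\in\mathcal{A}_{\le d}}V_a$. This is false: by the paper's definition, $a\in\mathcal{A}_{\le d}$ requires the \emph{entire} orbit $\II_a$ to lie in $\Delta_{\le d}$, whereas $\Delta_{\le d}$ is not closed under $c\mapsto q\cdot c$. (For instance, with $q=2$, $s=2$, $m=1$, $d=1$: $1\in\Delta_{\le 1}=\{0,1\}$ but $q\cdot 1=2\notin\Delta_{\le 1}$, so $\II_1=\{1,2\}\not\subset\Delta_{\le 1}$ and $\mathcal{A}_{\le 1}=\{0\}$, giving $\bigoplus_a V_a=\langle 1\rangle\subsetneq\langle 1,x\rangle=(R_{q^s})_{\le 1}$.) In particular $(R_{q^s})_{\le d}$ is not stable under $f\mapsto f^q$, so "the fixed subspace splits as $\bigoplus V_a^\sigma$" does not follow by inheriting a direct-sum decomposition. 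The missing argument is a \emph{containment}: writing a fixed element as $f=\sum_{c\in\Delta_{\le d}}\lambda_c x^c$ and comparing the coefficient of $x^{q\cdot c}$ in $f$ and $f^q$, one finds $\lambda_c^q=0$ (hence $\lambda_c=0$) whenever $q\cdot c\notin\Delta_{\le d}$; iterating, $\lambda_c\neq 0$ forces $\II_c\subset\Delta_{\le d}$, so the fixed subspace is contained in $\bigoplus_{a\in\mathcal{A}_{\le d}}V_a$. With this lemma in place, your orbit-by-orbit argument applies verbatim and the stated basis (and the dimension formula $\sum_{a\in\mathcal{A}_{\le d}}n_a$) follows.
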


As a consequence, we have that 
$$
\dim \RM_d^{\sigma}(m)=\sum_{a\in \mathcal{A}_{\leq d}}n_a.
$$

\begin{rem}\label{remafin}
Theorem \ref{baseafin} implies that, for different cyclotomic sets $\II_a\neq \II_b$, the evaluation of the polynomials in the sets $\{\mathcal{T}_a(\xi_a^r x^a)\mid 0\leq r\leq n_a-1 \}$ and $\{\mathcal{T}_b(\xi_b^r x^b)\mid 0\leq r\leq n_b-1 \}$ are linearly independent. Moreover, if we have $\II_a=\II_b$, then the previous sets generate the same vector space. 
\end{rem}

\subsection{Subfield subcodes of projective Reed-Muller codes}

Now we introduce the techniques that we will use to compute subfield subcodes of evaluation codes over the projective space. We had previously defined the usual evaluation map $\ev_d$ over the projective space, which can be generalized to the evaluation map $\ev:S\rightarrow \fqs^{n}$ given by
$$
\ev(f)=(f(Q_1),f(Q_2),\dots,f(Q_{n}))_{Q_i\in P^m}.
$$
It is clear that the kernel of the evaluation map is precisely the vanishing ideal of $P^m$, denoted by $I(P^m)$. If we consider $\ev(S_d)$ (corresponds to projective Reed-Solomon codes or projective Reed-Muller codes), the resulting code will be isomorphic to $S_d/(I(P^m)\cap S_d)\cong (S_d+I(P^m))/I(P^m)$. As we will see throughout this work, the vanishing ideal $I(P^m)$ gives plenty of information about these codes. 

\begin{rem}
Throughout the rest of the paper, given a set of polynomials $B$, we will refer to the set $\{\ev(f)\mid f \in B\}\subset \fqs^n$ as the evaluation of the set $B$.
\end{rem}

We will say that $f\in S$ evaluates to $\fq$ in $P^m$ if $\ev(f)\in \fq^n$. The following result gives us conditions for a polynomial to evaluate to $\fq$ in $P^m$.

\begin{lem}\label{lemassc}
One has that $f\in k[x_0,\dots,x_m]$ evaluates to $\fq$ in $P^m$ if and only if  $f(1,x_1,\dots,x_m)$, $f(0,1,x_2,\dots,x_m)$, $f(0,0,1,x_3,\dots,x_m)$,$\dots$, and $f(0,0,\dots,0,1,x_m)$ evaluate to $\fq$ in $\mathbb{A}^m, \mathbb{A}^{m-1}, \mathbb{A}^{m-2},\dots,\mathbb{A}$, respectively, and $f(0,\dots,0,1)\in\fq$.
\end{lem}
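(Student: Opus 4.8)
The plan is to decompose the set of standard representatives $P^m$ according to the position of the leftmost nonzero coordinate, and observe that evaluating $f$ on each piece is the same as evaluating a suitable dehomogenization of $f$ on a full affine space. Concretely, a point of $P^m$ is either of the form $(1,x_1,\dots,x_m)$ with $(x_1,\dots,x_m)$ ranging over all of $\mathbb{A}^m$, or of the form $(0,1,x_2,\dots,x_m)$ with $(x_2,\dots,x_m)\in\mathbb{A}^{m-1}$, or $\dots$, or $(0,\dots,0,1,x_m)$ with $x_m\in\mathbb{A}^1$, or finally the single point $(0,\dots,0,1)$. This is exactly the standard stratification $P^m=\bigsqcup_{i=0}^m \{(\underbrace{0,\dots,0}_{i},1,x_{i+1},\dots,x_m):(x_{i+1},\dots,x_m)\in\mathbb{A}^{m-i}\}$, and it follows immediately from our choice of standard representatives (first nonzero coordinate equal to $1$, starting from the left).

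First I would note that $\ev(f)\in\fq^n$ holds if and only if $f(Q)\in\fq$ for every $Q\in P^m$, simply by looking at the definition of $\ev$ coordinate by coordinate. Then, using the stratification above, $f(Q)\in\fq$ for all $Q\in P^m$ is equivalent to the conjunction, over $i=0,\dots,m$, of the statement ``$f(0,\dots,0,1,x_{i+1},\dots,x_m)\in\fq$ for all $(x_{i+1},\dots,x_m)\in\mathbb{A}^{m-i}$''. For $0\le i\le m-1$ this last statement is precisely the assertion that the polynomial $f(0,\dots,0,1,x_{i+1},\dots,x_m)\in\fqs[x_{i+1},\dots,x_m]$ evaluates to $\fq$ on $\mathbb{A}^{m-i}$, which is the $(i+1)$-th condition in the statement of the lemma; and for $i=m$ it is the condition $f(0,\dots,0,1)\in\fq$. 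Putting these equivalences together yields the claim.

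The main point to be careful about — rather than a genuine obstacle — is that the decomposition of $P^m$ into affine strata is a set-theoretic partition with no overlaps and no omissions, so that the ``for all $Q\in P^m$'' quantifier really does split as a finite conjunction of ``for all points in $\mathbb{A}^{m-i}$'' quantifiers; this is where the specific normalization defining the standard representatives is used. Everything else is a direct unwinding of definitions, and no Gröbner-basis or dimension input is needed here.
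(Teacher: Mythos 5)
Your proof is correct and follows essentially the same route as the paper: both decompose $P^m$ into the affine strata indexed by the position of the leftmost nonzero coordinate (the paper calls them $A_i$), and then identify evaluation of $f$ on each stratum with evaluation of the appropriate dehomogenization on $\mathbb{A}^{m-i}$.
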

\begin{proof}
We can decompose $P^m$ as the following union of affine spaces: $P^m=\bigcup_{i=0}^m A_i$, where $A_i=\{ Q=[Q_0:\dots:Q_m]\in P^m\mid Q_0=\cdots=Q_{i-1}=0,Q_i=1 \} $ if $1\leq i \leq m$, and $A_0=\{  Q=[Q_0:\dots:Q_m]\in P^m\mid  Q_0=1 \}$. Therefore, $f$ evaluates to $\fq$ in $P^m$ if and only if $f$ evaluates to $\fq$ in each set $A_i$, $0\leq i\leq m$. The evaluation of $\fq$ at each of the points of the set $A_i$ is the same as the evaluation of $f(0,\dots,0,1,x_{i+1},\dots,x_m)$, and the evaluation of this polynomial at the points of $A_i$ is the same as its evaluation in $\mathbb{A}^{m-i}$.
\end{proof}

In order to construct polynomials that evaluate to $\fq$ in $P^m$ we consider homogenizations of traces of polynomials. Given a polynomial $f\in R=\fqs[x_1,\dots,x_m]$, and a degree $d\geq \deg(f)$, we define the homogenization of $f$ up to degree $d$ as $f^h=x_0^df(x_1/x_0,x_2/x_0,\dots,x_m/x_0)\in S_d=\fqs[x_0,\dots,x_m]_d$. In what follows, we will always consider some fixed degree $d$, and, unless stated otherwise, we will assume that we homogenize up to degree $d$.

Let $d\geq 1$ and let $a\in \mathcal{A}_{\leq d}$. We are interested in homogenizing the polynomials from the basis from Theorem \ref{baseafin}. The condition $a\in \mathcal{A}_{\leq d}$ ensures that, with the fixed representatives that we have chosen for $\mathcal{T}_a(f)$ (the exponents of the monomials are reduced modulo $q^s-1$), we have $\deg(\mathcal{T}_a(f))\leq d$. Now we can define the following homogenization: 

\begin{equation}\label{trazahomo}
\mathcal{T}^h_a:R \rightarrow S/I(P^m),\; f\mapsto (\mathcal{T}_a(f))^h,
\end{equation}
where we homogenize up to degree $d$, and we consider that $\mathcal{T}_a(f)\in R$ is the representative that we have chosen in $R_{q^s}$. Note that the homogenization is not well defined in general for a class in $R_{q^s}$, which is why we had to fix a representative for $\mathcal{T}_a(f)$. 

These homogenized traces have already been used to obtain bases for the subfield subcode of a projective Reed-Solomon code and its dual in \cite{sanjoseSSCPRS}. With respect to the dual code of a subfield subcode, we have the following result by Delsarte \cite{delsarte}:

\begin{thm}\label{delsarte}
Let $C\subset \F_{q^s}^n$ be a linear code.
$$
(C\cap \fq^n)^\perp=\Tr(C^\perp),
$$
where $\Tr:\F_{q^s} \rightarrow \fq$ maps $x$ to $x+x^q+\cdots + x^{q^{s-1}}$ and is applied componentwise to $C^\perp$.
\end{thm}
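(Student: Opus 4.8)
The plan is to prove the two inclusions $\Tr(C^\perp) \subseteq (C\cap \fq^n)^\perp$ and $(C\cap \fq^n)^\perp \subseteq \Tr(C^\perp)$ separately, using the fact that the field trace $\Tr\colon \fqs \to \fq$ is $\fq$-linear and nondegenerate, so that $\fq^n$ is identified with its own annihilator inside $\fqs^n$ via the $\fqs$-bilinear pairing. First I would fix notation: write $\langle\,,\,\rangle$ for the standard $\fqs$-bilinear form on $\fqs^n$, and recall that for $x\in\fqs$ one has $\Tr(x)=0$ for all conjugates in the obvious sense, and that $\Tr$ applied componentwise to a vector $v\in\fqs^n$ is denoted $\Tr(v)$.

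For the inclusion $\Tr(C^\perp) \subseteq (C\cap \fq^n)^\perp$: let $v \in C^\perp$ and let $u \in C\cap \fq^n$. Since $u$ has all entries in $\fq$, the component $u_i$ is fixed by the Frobenius $x\mapsto x^q$, so $\langle u, v^{q^j}\rangle = \langle u, v\rangle^{q^j}$ for every $j$ (expanding the sum and using that $q$-th power is a field homomorphism of $\fqs$ and that $u_i^q=u_i$). Because $v\in C^\perp$ and $C$ is $\fqs$-linear, each conjugate $v^{q^j}$ also lies in $C^\perp$ (the dual is Frobenius-stable), hence $\langle u,v^{q^j}\rangle=0$ for all $j$; actually it suffices that $\langle u,v\rangle=0$, whence $\langle u,v\rangle^{q^j}=0$. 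Summing over $j=0,\dots,s-1$ gives $\langle u, \Tr(v)\rangle = \Tr(\langle u,v\rangle) = 0$. Since $\Tr(v)\in\fq^n$ and $u\in C\cap\fq^n$ was arbitrary, $\Tr(v)\in (C\cap\fq^n)^\perp$.

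For the reverse inclusion, the cleanest route is a dimension count combined with the first inclusion. Set $D = \Tr(C^\perp)\subseteq\fq^n$, an $\fq$-linear code. By the first inclusion, $D\subseteq (C\cap\fq^n)^\perp$, so it suffices to show $\dim_{\fq} D = \dim_{\fq}(C\cap\fq^n)^\perp = n - \dim_{\fq}(C\cap\fq^n)$. One computes $\dim_{\fq}(C\cap\fq^n)$ by viewing $C$ as the kernel-theoretic object it is: choosing a parity-check matrix $H$ for $C$ over $\fqs$ and expanding each entry in an $\fq$-basis of $\fqs$, the code $C\cap\fq^n$ is the $\fq$-kernel of the resulting $\fq$-matrix, and a rank argument (the ``trace expansion'' / ``subfield subcode versus trace code'' duality) shows that $\dim_{\fq}\Tr(C^\perp) + \dim_{\fq}(C\cap\fq^n) = n$. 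I would carry this out by picking an $\fq$-basis $\{\alpha_1,\dots,\alpha_s\}$ of $\fqs$, writing the $\fqs$-linear map ``inner product with the rows of a generator matrix of $C^\perp$'' in coordinates, and identifying $\Tr(C^\perp)$ and $C\cap\fq^n$ as the row space and the (left) kernel of one and the same $\fq$-matrix, so their dimensions sum to $n$. Combined with $D\subseteq (C\cap\fq^n)^\perp$ and $\dim_{\fq}(C\cap\fq^n)^\perp = n-\dim_{\fq}(C\cap\fq^n)$, equality of dimensions forces $D = (C\cap\fq^n)^\perp$.

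The main obstacle is the dimension bookkeeping in the last paragraph: making precise the claim that the trace code $\Tr(C^\perp)$ and the subfield subcode $C\cap\fq^n$ are orthogonal complements of each other over $\fq$ requires setting up the $\fq$-linear algebra of field extension carefully (choosing dual bases of $\fqs/\fq$ with respect to $\Tr$, and checking that the Frobenius-twisted parity-check relations cut out exactly $C\cap\fq^n$ with the expected codimension). Everything else — the first inclusion and the nondegeneracy of $\Tr$ — is routine. Since this is a classical result of Delsarte, I would most likely present only the short first inclusion in detail and cite \cite{delsarte} (or a standard text) for the dimension equality, rather than reproduce the full argument.
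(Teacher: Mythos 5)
The paper does not prove this theorem: it is quoted as a classical result of Delsarte, with the reference \cite{delsarte}, and the authors rely on it as a black box. So there is no internal proof to compare yours against.

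Evaluating your sketch on its own merits: it is correct and follows the standard route. The first inclusion $\Tr(C^\perp)\subseteq (C\cap\fq^n)^\perp$ only uses that $\Tr(\langle u,v\rangle)=\langle u,\Tr(v)\rangle$ when $u\in\fq^n$ (which follows from $\fq$-bilinearity of the pairing and the Frobenius fixing the entries of $u$), together with $\langle u,v\rangle=0$; note you do not actually need Frobenius-stability of $C^\perp$, as you yourself observe, since $\langle u,v\rangle=0$ already forces each $q$-power to vanish. The second inclusion is correctly reduced to the dimension identity $\dim_{\fq}\Tr(C^\perp)+\dim_{\fq}(C\cap\fq^n)=n$, and your plan to prove it — choose an $\fq$-basis of $\fqs$ (most cleanly a trace-dual pair of bases), expand a parity-check matrix $H$ of $C$ componentwise, identify $C\cap\fq^n$ as the $\fq$-kernel and $\Tr(C^\perp)$ as the $\fq$-row space of the resulting stacked matrix — does work: with $H=\sum_i\beta_iH_i$ in the $\beta$-basis and $\alpha_i$ the trace-dual basis, one has $\Tr(\alpha_iH)=H_i$, so $\Tr(C^\perp)$ is the row space of $[H_1;\dots;H_s]$ while $C\cap\fq^n$ is its kernel over $\fq$, giving the identity by rank–nullity. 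Your closing remark — that in a paper one would cite \cite{delsarte} rather than reproduce this bookkeeping — matches exactly what the authors actually do.
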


In \cite{sanjoseSSCPRS}, a basis for the dual of the subfield subcode of a projective Reed-Solomon code was obtained by using the previous result. In the following sections we will generalize these ideas to deal with the case $P^m$, with $m>1$.

\section{Codes over the projective plane}\label{secp2}
In this section, we focus on the case $X=P^2$, where we can give precise results, although it gets much more technical than the case $m=1$ from \cite{sanjoseSSCPRS}. The goal is to compute bases for $\PRM_d^{\sigma,\perp}(2)$ and $\PRM_d^\sigma(2)$ and, in particular, their dimensions. We set $S=\fqs[x_0,x_1,x_2]$, and consider cyclotomic sets in two coordinates. Here, $\mathcal{A}$ will be the set of minimal representatives of cyclotomic sets in two coordinates, and we will usually use the letters $a$ and $c$ to denote elements $(a_1,a_2)$ and $(c_1,c_2)$ of some cyclotomic sets $\II_a$ or $\II_c$. We will also use the univariate cyclotomic sets in this context, and we define $\mathcal{A}^1:=\{a_2\mid (a_1,a_2)\in \mathcal{A}\}$. Because of the choice of the ordering of the elements in $\zmsdos$, $a=(a_1,a_2)\in \mathcal{A}$ verifies that $a_2$ is a minimal representative of the cyclotomic set $\II_{a_2}$ in one coordinate. Therefore, $\mathcal{A}^1$ is also the set of minimal representatives of cyclotomic sets in one coordinate. We will use letters $a_2$ or $c_2$ (or a letter that clearly corresponds to an integer) to denote the elements of the cyclotomic sets $\II_{a_2}$ in one coordinate. 

The next result summarizes the main consequences of the results of this section. The definitions of $\overline{d}$ and $Y$ can be found in Definition \ref{defbar} and (\ref{defY}), respectively.

\begin{thm}
Let $1\leq d\leq 2(q^s-1)$. Then the subfield subcode of the projective Reed-Muller code, $\PRM_d^\sigma(2)$, is a code with length $n=\abs{P^m}=\frac{q^{m+1}-1}{q-1}$, and dimension
$$
\dim(\PRM_d^\sigma(2))=\displaystyle \sum_{a\in \mathcal{A}_{<d}} n_a + \sum_{a_2\in Y} n_{a_2}+\epsilon,
$$
where, if we consider $b_2\in \mathcal{A}^1$ with $\II_{b_2}=\II_{\overline{d}}$, then $\epsilon=n_{\overline{d}}+1$ if $\II_{(q^s-1,\overline{d})}\subset \Delta_{\leq d}$; $\epsilon=1$ if $\II_{(q^s-1,\overline{d})}\not \subset \Delta_{\leq d}$ and $\condicionnn$; and $\epsilon=0$ otherwise. Moreover, the minimum distance is bounded by
$$
\wt(\PRM_d^\sigma(2))\geq \wt(\PRM_d(2))=(q^s-t)q^{s(1-r)},
$$
where $d-1=r(q^s-1)+t$, with $0\leq t<q^s-1$.
\end{thm}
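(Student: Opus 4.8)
The plan is to establish the two assertions separately: first the dimension formula, then the minimum distance bound, the latter being essentially immediate from the fact that passing to a subfield subcode cannot decrease the minimum distance.

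\emph{The dimension formula.} I would build a basis for $\PRM_d^\sigma(2)$ out of evaluations of homogenized traces $\mathcal{T}^h_a(\xi_a^r x^a)$, mimicking the affine Theorem \ref{baseafin} but accounting for the genuinely projective phenomena. The starting point is Lemma \ref{lemassc}: a polynomial $f\in S_d$ evaluates to $\fq$ on $P^2$ exactly when its three ``affine charts'' $f(1,x_1,x_2)$, $f(0,1,x_2)$ and $f(0,0,1)$ evaluate to $\fq$ on $\mathbb{A}^2$, $\mathbb{A}^1$ and the point, respectively. For a minimal cyclotomic set $\II_a$ with $a=(a_1,a_2)\in\mathcal{A}_{<d}$, the homogenization up to degree $d$ of $\mathcal{T}_a(\xi_a^r x^a)$ lands strictly below degree $d$ in the dehomogenization, so it automatically satisfies the chart conditions and contributes an independent block of size $n_a$; the total over $\mathcal{A}_{<d}$ gives the first summand. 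The subtlety is the ``boundary'' degree $d$: here I must carefully determine which homogenized traces survive the chart conditions. Dehomogenizing $x_0^{d-c_1-c_2}x_1^{c_1}x_2^{c_2}$ to the chart $x_0=0$ kills everything with $c_1+c_2<d$, so on that chart only the part of the trace of top $x_1$-$x_2$-degree $d$ matters, and for such contributions one needs the univariate trace condition in the variable $x_2$ — this is where the univariate cyclotomic sets $\II_{a_2}$, the set $\mathcal{A}^1$, the quantity $\overline{d}$ from Definition \ref{defbar}, and the auxiliary set $Y$ from (\ref{defY}) enter, and where the hypotheses ``$\II_{(q^s-1,\overline{d})}\subset\Delta_{\leq d}$'' etc.\ get used to decide exactly how many extra independent vectors appear; this bookkeeping is what produces $\sum_{a_2\in Y} n_{a_2}+\epsilon$. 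Having exhibited such a spanning set and shown its evaluations are linearly independent (using Remark \ref{remafin} for independence across distinct cyclotomic sets, together with a dimension count against the dual basis already constructed in this section), the formula follows.

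\emph{The minimum distance bound.} This is the easy half. For any linear code $C\subset\fqs^n$ one has $\wt(C\cap\fq^n)\ge\wt(C)$, since $C\cap\fq^n\subseteq C$; applying this with $C=\PRM_d(2)$ gives $\wt(\PRM_d^\sigma(2))\ge\wt(\PRM_d(2))$. It then remains only to quote the known value of the minimum distance of projective Reed-Muller codes over $\mathbb{P}^2$ in the range $1\le d\le 2(q^s-1)$, namely $(q^s-t)q^{s(1-r)}$ with $d-1=r(q^s-1)+t$ and $0\le t<q^s-1$, which is due to Sørensen \cite{sorensen} (and recovered by the general projective Reed-Muller minimum-distance formula valid for $d<m(q^s-1)$, here $m=2$). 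No new argument is needed beyond citing that reference.

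\emph{Where the difficulty lies.} The only genuinely hard part is the boundary analysis at degree exactly $d$: one must track how the homogenization-then-dehomogenization interacts with the three charts of $P^2$, identify precisely which combinations of two-variable and one-variable traces remain linearly independent after imposing the chart conditions, and verify that the three cases defining $\epsilon$ exhaust all possibilities. Everything else — the bulk term $\sum_{a\in\mathcal{A}_{<d}}n_a$, the independence across cyclotomic sets, and the entire minimum-distance statement — is routine given the earlier results and the cited parameters of $\PRM_d(2)$.
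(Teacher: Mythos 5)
Your proposal matches the paper's argument essentially step for step: the paper builds the basis as $B=B_1\cup B_2\cup B_3$ (Lemmas \ref{lemab1}, \ref{lemab1b2}, \ref{lemab1b2b3}) with $|B_1|=\sum_{a\in\mathcal{A}_{<d}}n_a$, $|B_2|=\sum_{a_2\in Y}n_{a_2}$ and $|B_3|=\epsilon$, shows direct linear independence via the chart decomposition of Lemma \ref{lemassc}, and then proves maximality by the cardinality match $|B|+|D|=n$ against the dual basis (Lemmas \ref{suma1}--\ref{suma3}, Theorem \ref{baseprmprimario}), while the minimum-distance bound is, exactly as you say, the trivial inclusion $\PRM_d^\sigma(2)\subset\PRM_d(2)$ combined with S\o rensen's formula. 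This is the paper's route, and your identification of where the real difficulty lies (the degree-$d$ boundary and the case analysis for $\epsilon$) is accurate.
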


The formula for the dimension in the previous result can be found in Corollary \ref{dimprimario}. The dimension of $\PRM_d^{\sigma,\perp}(2)$ can be derived from the previous result, but we also provide another formula in Corollary \ref{dimdual}. Moreover, in Theorem \ref{baseprmprimario} and Theorem \ref{basedualPRM} we provide bases for $\PRM_d^{\sigma}(2)$ and $\PRM_d^{\sigma,\perp}(2)$, respectively.

\subsection{Dual codes of the subfield subcodes of projective Reed-Muller codes}

We start by computing a basis for the dual of the subfield subcode of a projective Reed-Muller code since it is slightly easier due to the nature of Delsarte's Theorem, Theorem \ref{delsarte}. For this we need the following result from \cite{sorensen} about the dual of a projective Reed-Muller code.

\begin{thm}\label{dualPRM}
Let $m\geq 2$, $1\leq d\leq m(q^s-1)$ and $d^\perp=m(q^s-1)-d$. Then
$$
\begin{aligned}
&\PRM_d^\perp(m)=\PRM_{d^\perp}(m) &\text{ for } d\not\equiv 0\bmod (q^s-1), \\
&\PRM_d^\perp(m)=\PRM_{d^\perp}(m)+\langle (1,\dots,1) \rangle &\text{ for } d\equiv 0\bmod (q^s-1).
\end{aligned}
$$
\end{thm}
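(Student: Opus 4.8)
The plan is to reconstruct this classical result of Sørensen. Set $n=|P^m|=(q^{s(m+1)}-1)/(q^s-1)$, write $\mathbf{1}=(1,\dots,1)\in\fqs^n$ for the all-ones vector, and recall that $\PRM_d^\perp(m)=\ev_d(S_d)^\perp$ has dimension $n-\dim\PRM_d(m)$. The argument has two independent parts: a ``soft'' inclusion coming from a character-sum computation, and a dimension count that forces it to be an equality, the latter relying on Sørensen's formula for $\dim\PRM_d(m)$ \cite{sorensen}. Everything hinges on the following claim: \emph{for every $\alpha=(\alpha_0,\dots,\alpha_m)\in\Z_{\ge 0}^{m+1}$ with $1\le|\alpha|\le m(q^s-1)$ and $|\alpha|\equiv 0\pmod{q^s-1}$, one has $\sum_{Q\in P^m}Q^\alpha=0$.}

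To prove the claim, decompose $P^m=\bigsqcup_{i=0}^m A_i$ as in the proof of Lemma \ref{lemassc}, so that $A_i$ consists of the standard representatives of the form $(0,\dots,0,1,Q_{i+1},\dots,Q_m)$ with $Q_j\in\fqs$, i.e.\ $A_i\cong\A^{m-i}$. On $A_i$ the monomial $x^\alpha$ vanishes identically unless $\alpha_0=\cdots=\alpha_{i-1}=0$, in which case $\sum_{Q\in A_i}Q^\alpha=\prod_{j=i+1}^m\big(\sum_{\beta\in\fqs}\beta^{\alpha_j}\big)$; and $\sum_{\beta\in\fqs}\beta^e$ equals $-1$ if $e\ge 1$ and $(q^s-1)\mid e$, and equals $0$ otherwise (including $e=0$, where the sum is $q^s\equiv 0$). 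Adding these contributions over $i$ leaves a telescoping alternating sum of signs; if $i_0$ denotes the least index with $\alpha_{i_0}\ne 0$, a short bookkeeping shows that the surviving terms cancel in pairs — the hypothesis $|\alpha|\equiv 0\pmod{q^s-1}$ being exactly what excludes the only potentially unbalanced configuration, namely $\alpha_{i_0}\not\equiv 0\pmod{q^s-1}$ while $\alpha_{i_0+1},\dots,\alpha_m$ are all positive multiples of $q^s-1$.

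The inclusions follow at once. For $f\in S_d$ and $g\in S_{d^\perp}$, the product $fg$ lies in $S_{m(q^s-1)}$ and $m(q^s-1)\equiv 0\pmod{q^s-1}$, so $\langle\ev_d(f),\ev_{d^\perp}(g)\rangle=\sum_{Q\in P^m}(fg)(Q)=0$ by the claim; hence $\PRM_{d^\perp}(m)\subseteq\PRM_d^\perp(m)$. If in addition $d\equiv 0\pmod{q^s-1}$, then applying the claim with $|\alpha|=d$ gives $\langle\mathbf{1},\ev_d(f)\rangle=\sum_{Q\in P^m}f(Q)=0$ for every $f\in S_d$, so $\mathbf{1}\in\PRM_d^\perp(m)$ and therefore $\PRM_{d^\perp}(m)+\langle\mathbf{1}\rangle\subseteq\PRM_d^\perp(m)$.

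It remains to see that these inclusions are equalities, which is a matter of comparing dimensions. Using Sørensen's formula for $\dim\PRM_d(m)$, one verifies the combinatorial identity that $\dim\PRM_d(m)+\dim\PRM_{d^\perp}(m)=n$ when $d\not\equiv 0\pmod{q^s-1}$, that it equals $n-1$ when $d\equiv 0\pmod{q^s-1}$ and $0<d<m(q^s-1)$, and that it equals $n$ at the endpoint $d=m(q^s-1)$ (with the convention $\PRM_0(m)=\langle\mathbf{1}\rangle$). In the first case this is precisely $\dim\PRM_d^\perp(m)=\dim\PRM_{d^\perp}(m)$, so the inclusion above is an equality. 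In the second case one still needs $\mathbf{1}\notin\PRM_{d^\perp}(m)$, so that $\dim(\PRM_{d^\perp}(m)+\langle\mathbf{1}\rangle)=\dim\PRM_{d^\perp}(m)+1=n-\dim\PRM_d(m)$; this holds because $n\equiv 1\pmod p$ (as $q^s\equiv 0\pmod p$), hence $\langle\mathbf{1},\mathbf{1}\rangle=n\ne 0$ in $\fqs$, while the claim — valid here since $d^\perp\ge 1$ — gives $\mathbf{1}\perp\PRM_{d^\perp}(m)$, and the two are incompatible with $\mathbf{1}\in\PRM_{d^\perp}(m)$. Finally, at $d=m(q^s-1)$ the right-hand side is just $\langle\mathbf{1}\rangle$, of dimension $1=n-\dim\PRM_{m(q^s-1)}(m)$. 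The main obstacle is this last step: extracting the dimension identity from Sørensen's formula — above all the off-by-one that separates the case $d\equiv 0\pmod{q^s-1}$ from the others — and handling correctly the endpoint at which $\PRM_{d^\perp}(m)$ already contains $\mathbf{1}$. The bookkeeping in the character-sum claim is delicate but entirely elementary.
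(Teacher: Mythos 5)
The paper does not prove Theorem~\ref{dualPRM}: it is quoted verbatim from S{\o}rensen's paper \cite{sorensen}, where it appears as a theorem. So there is no in-paper proof to compare against; what you have written is a reconstruction of S{\o}rensen's result.

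That said, your outline is essentially sound, but the central character-sum claim has a gap you have glossed over. You say the only potentially unbalanced configuration is $\alpha_{i_0}\not\equiv 0\pmod{q^s-1}$ with $\alpha_{i_0+1},\dots,\alpha_m$ positive multiples of $q^s-1$, and that the hypothesis $\lvert\alpha\rvert\equiv 0$ rules it out. That is correct when $i_0\ge 1$, because then the contributions from $A_{i_0}$ and $A_{i_0-1}$ either both vanish or cancel in pairs. But when $i_0=0$ there is no $A_{-1}$ to cancel against, and the contribution from $A_0$ equals $(-1)^m$ as soon as $\alpha_1,\dots,\alpha_m$ are all positive multiples of $q^s-1$ --- \emph{regardless} of whether $\alpha_0\equiv 0\pmod{q^s-1}$. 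The congruence hypothesis alone does not exclude this. What does exclude it is the degree bound $\lvert\alpha\rvert\le m(q^s-1)$: with $\alpha_0\ge 1$ and $\sum_{j\ge 1}\alpha_j\ge m(q^s-1)$, the total degree exceeds $m(q^s-1)$, a contradiction. You state the degree bound in your claim but never invoke it in the bookkeeping; without that step the argument is incomplete. The other pieces (orthogonality by linearity of the inner product once the claim holds for monomials, the need to confirm $\mathbf{1}\notin\PRM_{d^\perp}(m)$ via $n\not\equiv 0\pmod p$, the endpoint $d=m(q^s-1)$) are fine in structure, though the reduction to S{\o}rensen's dimension formula is itself the hard half of his theorem and is left entirely unverified.
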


Setting $m=2$ now, in order to use Delsarte's Theorem \ref{delsarte}, it is useful to introduce the following trace map 
$$
\mathcal{T}:S/I(P^2)\rightarrow S/I(P^2), f\mapsto f+f^q+\cdots+f^{q^{s-1}}.
$$ 
With this definition, it is clear that $\ev \circ \mathcal{T}=\Tr\circ \ev$. Hence, the trace code $ \Tr(\PRM_d^\perp(m))$ can be seen as the code generated by the evaluation of some traces in this case. In particular, we can consider $\T(S_{d^\perp})$ (if $d\equiv 0 \bmod q^s-1$, we also consider $\T(\lambda \cdot 1)$, $\lambda\in \fqs$). The image by the evaluation map of $\T(S_{d^\perp})$ is a system of generators of $ \Tr(\PRM_d^\perp(m))$ if $d\not \equiv 0\bmod q^s-1$. If we extract a maximal linearly independent set of polynomials from $\T(S_{d^\perp})$, then its image by $\ev$ will be a basis for the dual of the subfield subcode.

As we said before, the kernel of the evaluation map is precisely $I(P^2)$, and we have an isomorphism of the primary code with $S/I(P^2)$. The ideal $I(P^2)$ will play a crucial role in understanding linear independence of the polynomials in $\T(S_d)$. Hence, it is helpful to obtain a Gröbner basis for this ideal and a basis for the quotient $S/I(P^2)$. The following result is a consequence of Theorem \ref{vanishingideal} and Lemma \ref{basePm}, which will be proven in Section \ref{secpm}. 
\begin{lem}\label{vanishingidealP2}
The following polynomials form a universal Gröbner basis of $I(P^2)$:
$$
I(P^2)=\langle  x_0^2-x_0,x_1^{q^s}-x_1,x_2^{q^s}-x_2,(x_0-1)(x_1^2-x_1),(x_0-1)(x_1-1)(x_2-1) \rangle.
$$
Moreover, the set of monomials $\{x_1^{a_1}x_2^{a_2}, x_0x_2^{a_2},x_0x_1 \mid 0\leq a_i\leq q^s-1,1\leq i\leq 2\}$ is a basis for $S/I(P^2)$.
\end{lem}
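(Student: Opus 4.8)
The plan is to deduce Lemma~\ref{vanishingidealP2} as the special case $m=2$ of the general results stated (and proven later) in Section~\ref{secpm}. Specifically, Theorem~\ref{vanishingideal} asserts that a certain explicit generating set is a universal Gröbner basis for $I(P^m)$, and Lemma~\ref{basePm} describes a monomial basis for the quotient $S/I(P^m)$; substituting $m=2$ into those statements should yield exactly the two assertions of this lemma. So the proof is essentially a verification that the generic description specializes correctly, plus a sanity check that the quotient basis is consistent with the Gröbner basis.

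First I would recall the shape of the standard representatives $P^2$: the affine chart $x_0=1$ contributes all points $(1,x_1,x_2)$ with $x_i\in\fqs$, the chart $x_0=0,x_1=1$ contributes $(0,1,x_2)$, and finally the single point $(0,0,1)$. The generators in the statement reflect exactly this stratification: $x_1^{q^s}-x_1$ and $x_2^{q^s}-x_2$ vanish on the $\fqs$-valued coordinates of the big chart; $x_0^2-x_0$ encodes that the $x_0$-coordinate of every standard representative is $0$ or $1$; $(x_0-1)(x_1^2-x_1)$ says that whenever $x_0=0$ the coordinate $x_1$ is forced to lie in $\{0,1\}$; and $(x_0-1)(x_1-1)(x_2-1)$ says that if additionally $x_1=0$ then $x_2=1$. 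One should check these five polynomials indeed lie in $I(P^2)$ (immediate from the case analysis above) and that they generate the whole vanishing ideal — but this, together with the universal Gröbner basis property, is precisely what Theorem~\ref{vanishingideal} gives once we note that the generating set there, written out for $m=2$, coincides term-by-term with the one displayed here. I would make that identification explicit, being careful about the indexing conventions (the ``first nonzero coordinate equal to $1$, starting from the left'' choice) so that the nested products $(x_0-1),(x_0-1)(x_1-1),\dots$ match.

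For the quotient basis, I would invoke Lemma~\ref{basePm} with $m=2$. Its general monomial basis, specialized, should read: monomials $x_1^{a_1}x_2^{a_2}$ with $0\le a_1,a_2\le q^s-1$ (the ``fully affine'' part, of size $(q^s)^2$), monomials $x_0x_2^{a_2}$ with $0\le a_2\le q^s-1$ (the ``$x_0=0,x_1=1$'' stratum, of size $q^s$), and the single monomial $x_0x_1$ (the point $(0,0,1)$). As an internal consistency check I would verify the count: $(q^s)^2+q^s+1=\frac{q^{3s}-1}{q^s-1}=n=\abs{P^2}$, which matches $\dim_{\fqs} S/I(P^2)$ since $I(P^2)$ is the vanishing ideal of $n$ points; and I would confirm that each listed monomial is a standard monomial with respect to the initial ideal coming from the Gröbner basis (for any monomial order, leading terms include $x_0^2$, $x_1^{q^s}$, $x_2^{q^s}$, and one of $\{x_0x_1^2, x_0x_1x_2\}$ up to choice of order — in every case the listed monomials avoid all of these). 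This last point is really where one must be slightly attentive: the Gröbner basis is \emph{universal}, so the leading term of $(x_0-1)(x_1^2-x_1)$ and of $(x_0-1)(x_1-1)(x_2-1)$ can vary with the order, and one needs that the set of standard monomials is the same for every order — which is exactly the content packaged into Lemma~\ref{basePm}.

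I do not expect a genuine obstacle here, since all the real work is deferred to Section~\ref{secpm}; the only thing requiring care is bookkeeping — making sure the $m=2$ instance of the general generating set and the general quotient basis are transcribed with the correct variable indices and that the displayed polynomials are literally (not just up to reindexing) the ones produced by the general theorem. I would therefore keep the proof short: cite Theorem~\ref{vanishingideal} and Lemma~\ref{basePm}, write out their $m=2$ specializations, observe they coincide with the two displayed assertions, and record the dimension count $\,(q^s)^2+q^s+1=n\,$ as the consistency check closing the argument.
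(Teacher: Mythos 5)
Your proposal follows the same route as the paper: Lemma~\ref{vanishingidealP2} is presented in the paper with no separate argument, stated as a direct consequence of Theorem~\ref{vanishingideal} and Lemma~\ref{basePm} specialized to $m=2$, exactly as you describe, and your dimension count $(q^s)^2+q^s+1=n$ is the right consistency check. One small correction to your parenthetical aside: the leading terms of $(x_0-1)(x_1^2-x_1)$ and $(x_0-1)(x_1-1)(x_2-1)$ do \emph{not} vary with the monomial order --- each is the unique top-degree monomial of its polynomial, namely $x_0x_1^2$ and $x_0x_1x_2$ respectively, and \emph{both} lie in $\ini(I(P^2))$ for every order, which is exactly why the given Gröbner basis is universal (it is not that one of the two appears ``up to choice of order'').
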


\begin{rem}\label{remequiscero}
Because of the generator $x_0^2-x_0$ of the previous ideal, any positive power of $x_0$ is equivalent to $x_0$ in the quotient ring. Therefore, any monomial $x_0^{a_0}x_1^{a_1}x_2^{a_2}$ with $a_0>0$ is equivalent to $x_0x_1^{a_1}x_2^{a_2}$ in $S/I(P^2)$.
\end{rem}

In what follows, we assume $d\not \equiv 0\bmod q^s-1$ to avoid making exceptions due to Theorem \ref{dualPRM} (we will recover this case later). By Theorem \ref{delsarte} and Theorem \ref{dualPRM}, we have that $\PRM_d^{\sigma,\perp}(2)$ can be generated by the image by the evaluation map of traces (using the map $\T$) of multiples of the monomials of degree $d^\perp$. We show next that, to obtain a basis for the dual code, it is enough to consider the trace maps $\mathcal{T}_a$ instead of $\T$, which we extend from $R_{q^s}$ to $S/I(P^2)$ in the following way:
$$
\mathcal{T}_a:S/I(P^2)\rightarrow S/I(P^2),\:\: f\mapsto f+f^q+\cdots + f^{q^{(n_a-1)}},
$$
for a certain $a\in \mathcal{A}$.

We consider the trace map from $\F_{q^s}$ to $\F_{q^{l}}$, $\Tr_{\F_{q^s}/\F_{q^{l}}}$ (with $l\mid s$): $\Tr_{\F_{q^s}/\F_{q^{l}}}(x)=x+x^{q^{l}}+\cdots +x^{q^{l (\frac{s}{l}-1)}}$. By Theorem \ref{delsarte}, Theorem \ref{dualPRM}, and the previous discussion, we have that $\Tr(\PRM_d^\perp(2))$ is generated by the evaluation of $\T(S_{d^\perp})$, which is generated by the set $\{\mathcal{T}(\lambda x^\gamma), \lambda\in \F_{q^s}^*, x^\gamma \in S_{d^\perp}\}$. Let $\lambda \in \F_{q^s}^*$, $\gamma=(\gamma_0,\gamma_1,\gamma_2)$ and $\hat{\gamma}=(\gamma_1,\gamma_2)$. We consider the cyclotomic set $\II_{\hat{\gamma}}$, and we have that
\begin{equation}\label{trazadelatraza}
\begin{split}
\mathcal{T}(\lambda x^\gamma)\equiv &\lambda x^\gamma + \lambda^q x^{q\gamma}+\cdots+\lambda^{q^{n_{\hat{\gamma}}-1}}x^{q^{n_{\hat{\gamma}}-1}\gamma }\\
&+\lambda^{q^{n_{\hat{\gamma}}}}x^\gamma+\lambda^{q^{n_{\hat{\gamma}}+1}}x^{q\gamma}+\cdots+\lambda^{q^{2n_{\hat{\gamma}}-1}}x^{q^{n_{\hat{\gamma}}-1}\gamma }+\cdots\\
\equiv & \Tr_{\F_{q^s}/\F_{q^{n_{\hat{\gamma}}}}}(\lambda)x^\gamma + (\Tr_{\F_{q^s}/\F_{q^{n_{\hat{\gamma}}}}}(\lambda))^q x^{q\gamma}+\cdots \\
\equiv&\mathcal{T}_{\hat{\gamma}}\left(\Tr_{\F_{q^s}/\F_{q^{n_{\hat{\gamma}}}}}(\lambda)x^\gamma \right) \bmod I(P^2),
\end{split}
\end{equation}
where, if $\gamma_0>0$, we can reduce the exponent of $x_0$ in each monomial to 1 (see Remark \ref{remequiscero}), and we are using that $(x_1^{\gamma_1} x_2^{\gamma_2})^{q^{n_{\hat{\gamma}}}}\equiv x_1^{\gamma_1} x_2^{\gamma_2}\bmod S/I(P^2)$. Equation (\ref{trazadelatraza}) shows that, for each monomial $x^\gamma$, it is enough to consider the traces
\begin{equation}\label{trazasgeneradoras}
\{ \mathcal{T}_{\hat{\gamma}}(\xi_{\hat{\gamma}}^r x^\gamma)\mid 0\leq r\leq n_{\hat{\gamma}}-1\}.
\end{equation}
This is because the trace function is surjective, which means that every element of $\F_{q^{n_{\hat{\gamma}}}}$ is obtained as $\Tr_{\F_{q^s}/\F_{q^{n_{\hat{\gamma}}}}}(\lambda)$ for some $\lambda\in \F_{q^s}$. Taking into account the linearity of the trace function, and the fact that $\{ 1,\xi_{\hat{\gamma}},\dots,\xi_{\hat{\gamma}}^{n_{\hat{\gamma}}-1}\}$ constitutes a basis for $\F_{q^{n_{\hat{\gamma}}}}$, we obtain what we stated. 

Thus, for computing a basis for $\PRM_d^{\sigma,\perp}(2)$, we just need to consider the union of the sets in (\ref{trazasgeneradoras}), and extract a maximal linearly independent set. In principle, we will not see the dual code as the image by the evaluation map of a set of homogeneous polynomials. This makes Lemma \ref{vanishingidealP2} specially valuable in order to argue about linear independence when we consider polynomials of different degree (for homogeneous polynomials, the homogeneous ideal $I(\mathbb{P}^m)$ from \cite{mercier} can be used to discuss linear independence).

We note that, for $d>2(q-1)$, $\PRM_d(2)$ is the whole space. Hence, we will always assume that $d\leq 2(q-1)$ in what follows. We introduce now the following sets which play a crucial role in grouping the polynomials in $S_d$ with linearly dependent traces.

\begin{defn}\label{defbar}
Let $1\leq d \leq 2(q-1)$. For $0\leq b\leq 2(q-1)$, we define $\overline{b}$ as the representative of $b\bmod (q^s-1)$ between 1 and $q^s-1$ if $b\neq 0$, and 0 otherwise. For $a=(a_1,a_2)\in \mathcal{A}$, we define 
$$
M_a(d)=\langle x_0^{b_0}x_1^{b_1}x_2^{b_2}\mid (\overline{b_1},\overline{b_2})\in \II_a, b_0+b_1+b_2=d\rangle \subset S_d.
$$
\end{defn}

It is clear that the union of these sets contains all the monomials of $S_d$, which implies that $S_d=\langle \bigcup_{a\in \mathcal{A}} M_a(d) \rangle$. Therefore, we have that $\mathcal{T}(S_d)=\langle\bigcup_{a\in \mathcal{A}}\mathcal{T}(\langle M_a(d) \rangle)\rangle$, where we have used the linearity of $\T$. Thus, in order to obtain a set of polynomials such that its image by the evaluation map is a basis for $\PRM_d^{\sigma,\perp}(2)$, we are going to obtain a basis for $\mathcal{T}(M_a(d))$, for each $a\in \mathcal{A}$, and then consider the union of these bases which, by the previous argument, will generate $\mathcal{T}(S_d)$. We will then extract a basis from this union.

To achieve that, we first introduce the following definition that we use throughout this section.

\begin{defn}
Let $1\leq d \leq 2(q^s-1)$. We will say that $M_a(d)$ \textit{contains monomials of the two types} if there are monomials $m_1,m_2\in M_a(d)$ such that $x_0\mid m_1$ and $x_0 \nmid m_2$.
\end{defn}

Using all the previous notation, we have the following result which translates some conditions on cyclotomic sets into conditions on the sets $M_a(d)$.

\begin{lem}\label{condicionesM}
Let $1\leq d \leq 2(q^s-1)$. We have the following:
\begin{enumerate}
\item $M_a(d)$ is not empty if and only if $\II_a\cap \Delta_{\leq d}\neq \emptyset$.
\item $x_0$ divides some monomial in $M_a(d)$ if and only if $\II_a\cap \Delta_{<d}\neq \emptyset$.
\item $x_0$ does not divide all the monomials in $M_a(d)$ if and only if $\II_a\cap(\Delta_d\cup\Delta_{\overline{d}})\neq \emptyset$.
\item\label{itemdostipos} $M_a(d)$ contains monomials of the two types if and only if $\II_a\cap \Delta_{< d}\neq \emptyset$ and $\II_a\cap (\Delta_d\cup \Delta_{\overline{d}})\neq \emptyset$. 
\item $x_0$ does not divide any monomial in $M_a(d)\neq \emptyset$ if and only if $\II_a\cap \Delta_{\leq d}\subset \Delta_d$.
\end{enumerate}
\end{lem}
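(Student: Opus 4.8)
The plan is to reduce all five equivalences to a single description of which values of $b_1+b_2$ can occur among the monomials generating $M_a(d)$.

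First I would record the elementary dictionary. A monomial $x_0^{b_0}x_1^{b_1}x_2^{b_2}$ of degree $d$ is one of the generators of $M_a(d)$ exactly when $(\overline{b_1},\overline{b_2})\in\II_a$; it is divisible by $x_0$ iff $b_0\geq 1$, i.e.\ iff $b_1+b_2\leq d-1$, and it fails to be divisible by $x_0$ iff $b_1+b_2=d$. Since $0\leq b_i\leq d\leq 2(q^s-1)$, the value $\overline{b_i}$ is the reduction of Definition \ref{defbar}, and the fibre of $b\mapsto\overline{b}$ over $c_i\in\zs$ is $\{0\}$ if $c_i=0$ and $\{c_i,\,c_i+(q^s-1)\}$ if $c_i\neq 0$. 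Hence, for a fixed $c=(c_1,c_2)$, the set of values taken by $b_1+b_2$ over the monomials $x_0^{b_0}x_1^{b_1}x_2^{b_2}\in S_d$ with $(\overline{b_1},\overline{b_2})=c$ equals
$$
\{\,\deg(c)+j(q^s-1)\ :\ 0\leq j\leq \nu(c)\,\}\cap\{0,1,\dots,d\},
$$
where $\deg(c)=c_1+c_2$ and $\nu(c)\in\{0,1,2\}$ is the number of nonzero coordinates of $c$; in particular the least attainable value is $\deg(c)$, realized by $x_0^{d-\deg(c)}x_1^{c_1}x_2^{c_2}$ whenever $\deg(c)\leq d$.

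The key observation is that the hypothesis $d\leq 2(q^s-1)$ forces $\deg(c)+2(q^s-1)>d$ whenever $\nu(c)=2$, so this ``$j=2$'' value never occurs in $S_d$. Thus, for $(\overline{b_1},\overline{b_2})=c$, the only attainable values of $b_1+b_2$ in $S_d$ are $\deg(c)$ (always, when $\deg(c)\leq d$) and, when $c\neq(0,0)$, also $\deg(c)+(q^s-1)$ (when $\deg(c)\leq d-(q^s-1)$). Consequently $b_1+b_2=d$ is attainable for $c$ exactly when $\deg(c)=d$, or $c\neq(0,0)$ and $\deg(c)=d-(q^s-1)$; a short case check (using $\overline{d}=d$ for $d\leq q^s-1$, $\overline{d}=d-(q^s-1)$ for $d\geq q^s$, and that $\deg(c)=d-(q^s-1)\geq 1$ automatically forces $c\neq(0,0)$) identifies this with $c\in\Delta_d\cup\Delta_{\overline{d}}$.

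With this in hand the five statements follow by bookkeeping, using that $\II_a$ is a single cyclotomic orbit, so that ``$M_a(d)$ contains a monomial with property $P$'' is the same as ``some $c\in\II_a$ can be realized by a monomial of $S_d$ obeying the constraint on $b_1+b_2$ corresponding to $P$''. For (1), $M_a(d)\neq\emptyset$ iff some $c\in\II_a$ is realized with $b_1+b_2\leq d$, iff $\deg(c)\leq d$ for some $c\in\II_a$, iff $\II_a\cap\Delta_{\leq d}\neq\emptyset$. For (2), $x_0$ divides some monomial of $M_a(d)$ iff some $c\in\II_a$ is realized with $b_1+b_2\leq d-1$, iff $\deg(c)<d$ for some $c\in\II_a$, iff $\II_a\cap\Delta_{<d}\neq\emptyset$. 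For (3), $x_0$ does not divide every monomial of $M_a(d)$ iff some $c\in\II_a$ is realized with $b_1+b_2=d$, which by the key observation means $\II_a\cap(\Delta_d\cup\Delta_{\overline{d}})\neq\emptyset$. Statement (4) is by definition the conjunction of (2) and (3). For (5), assuming $M_a(d)\neq\emptyset$, the negation of (2) reads $\II_a\cap\Delta_{<d}=\emptyset$; combined with $\II_a\cap\Delta_{\leq d}\neq\emptyset$ (from (1)) and $\Delta_{\leq d}=\Delta_{<d}\sqcup\Delta_d$, this gives $\emptyset\neq\II_a\cap\Delta_{\leq d}=\II_a\cap\Delta_d$, i.e.\ $\II_a\cap\Delta_{\leq d}\subset\Delta_d$.

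The only non-routine step is the key observation: one must pin down the fibres of $b\mapsto\overline{b}$, check that $d\leq 2(q^s-1)$ genuinely rules out the ``$j=2$'' jump, and correctly match the surviving ``$j=1$'' possibility with membership in $\Delta_{\overline{d}}$, paying attention to the boundary cases $d=q^s-1$ and $c=(0,0)$. Everything after that is formal manipulation of the conditions.
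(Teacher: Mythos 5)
Your proof is correct and uses essentially the same underlying computation as the paper: it hinges on characterizing which values of $b_1+b_2$ can occur among degree-$d$ monomials realizing a given $c\in\II_a$, observing that $\overline{b_i}\leq b_i$ with $\overline{b_i}\equiv b_i\bmod(q^s-1)$ and that $d\leq 2(q^s-1)$ rules out the ``double jump,'' which is exactly the content of the paper's proof of item (4). The only difference is organizational: the paper proves (4) explicitly and dispatches (2), (3), (5) as ``adapting the argument,'' while you extract the computation as a standalone key observation and derive all five items uniformly from it, which is a cleaner but mathematically identical route.
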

\begin{proof}
The first one is clear from the definitions. We prove (\ref{itemdostipos}) first. If $M_a(d)$ contains monomials of the two types, then $M_a(d)$ is not empty, and there is a monomial $x_1^{b_1}x_2^{b_2}\in M_a(d)$. This means that $(\overline{b_1},\overline{b_2})\in \II_a$, and we have $\overline{b_1}+\overline{b_2}\equiv d\bmod (q^s-1)$. Hence, $(\overline{b_1},\overline{b_2})\in \II_a\cap (\Delta_d\cup \Delta_{\overline{d}})\neq \emptyset$. There is also a monomial $x_0^{c_0}x_1^{c_1}x_2^{c_2}\in M_a(d)$ with $c_0>0$, which implies that $\overline{c_1}+\overline{c_2}<d$ and $(\overline{c_1},\overline{c_2})\in \II_a\cap \Delta_{< d}$.

Conversely, if we have $c\in \II_a$ such that $c_1+c_2\equiv d\bmod(q^s-1)$, this means that, if $c_1>0$, $x_1^{c_1+\lambda (q^s-1)}x_2^{c_2}$ has degree $d$ for some $\lambda\in \{0,1\}$, which means that this monomial would be in $M_a(d)$. If $c_1=0$, the same reasoning proves that the monomial $x_2^{c_2+\lambda (q^s-1)}$ would be in $M_a(d)$ for some $\lambda\in \{0,1\}$. Taking into account the condition $\II_a\cap \Delta_{< d}\neq \emptyset$, there is an element $u\in \II_a$ such that $x_1^{u_1}x_2^{u_2}$ is of degree less than $d$. Thus, $x_0^{u_0}x_1^{u_1}x_2^{u_2}\in M_a(d)$, where $u_0=d-u_1-u_2$. This proves (\ref{itemdostipos}).

By adapting the previous argument, it is easy to prove (2) and (3), and (5) is the negation of (2), taking (1) into account.
\end{proof}

\begin{ex}
We can consider the extension $\F_{16}\supset \F_2$ ($q=2$, $s=4$), and the cyclotomic set $\II_{(0,3)}=\langle(0,3),(0,6),(0,9),(0,12)\rangle$. For $1\leq d\leq 2$ we have that $M_{(0,3)}(d)=\emptyset$ since $\II_{(0,3)}\cap \Delta_{\leq 2}=\emptyset$. For $d=3$, we have $M_{(0,3)}(3)=\langle x_2^3\rangle$, i.e., $x_0$ does not divide any monomial in $M_{(0,3)}(3)$ (due to the fact that $\II_{(0,3)}\cap \Delta_{\leq 3}=\{(0,3)\}\subset \Delta_3$). For $d=5$ (similarly for $d=4$), we have that $M_{(0,3)}(5)=\langle x_0^2x_2^3\rangle$, i.e., $x_0$ divides all the monomials in $M_{(0,3)}(5)$ (precisely because $\II_{(0,3)}\cap \Delta_5=\emptyset$). For $d=6$ we have $M_{(0,3)}(6)=\langle x_0^3x_2^3,x_2^6\rangle$, i.e., $M_{(0,3)}(6)$ contains monomials of the two types, since we have $(0,3)\in \II_{(0,3)}\cap \Delta_{<6}$ and $(0,6)\in \II_a\cap \Delta_6$. Lastly, if we consider a degree higher than $q^s=16$, we have to take into account $\overline{d}$. For example, for $d=18$, we have $M_{(0,3)}(18)=\langle x_0^{15}x_2^3,x_2^{18},x_0^{12}x_2^6,x_0^{9}x_2^9,x_0^6x_2^{12}\rangle$. We see that $M_{(0,3)}(18)$ contains monomials of the two types, as we have that $\overline{d}=3$ and $(0,3)\in \II_{(0,3)}\cap \Delta_{3}$, which means that we can consider the monomial $x_2^{18}\equiv x^3\bmod I(P^2)$, which does not have $x_0$ in its support.
\end{ex}

The following result is a consequence of Lemma \ref{divisionPm}, which is proved in Section \ref{secpm}. It will allow us to obtain a basis for $\T(M_a(d))$, for each $a\in \mathcal{A}$, and it can be understood as the remainder after using the multivariate division algorithm of a monomial with respect to the Gröbner basis from Lemma \ref{vanishingidealP2}. 

\begin{lem}\label{divisionP2}
Let $a_0,a_1,a_2$ be integers, with $a_0,a_1>0$. We have that
$$
\begin{aligned}
x_0^{a_0}x_1^{a_1}x_2^{a_2}&\equiv x_1^{a_1}x_2^{a_2}+x_0x_2^{a_2}-x_2^{a_2}+x_0x_1-x_0-x_1+1 \bmod I(P^2)\\
&\equiv x_1^{a_1}x_2^{a_2}+(x_0-1)(x_2^{a_2}+x_1-1) \bmod I(P^2).
\end{aligned}
$$
\end{lem}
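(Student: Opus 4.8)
The plan is to perform the multivariate division of $x_0^{a_0}x_1^{a_1}x_2^{a_2}$ by the universal Gröbner basis of Lemma~\ref{vanishingidealP2}, using only the three generators that involve $x_0$, namely $x_0^2-x_0$, $(x_0-1)(x_1^2-x_1)$ and $(x_0-1)(x_1-1)(x_2-1)$. First, since $a_0>0$, Remark~\ref{remequiscero} (a consequence of $x_0^2-x_0\in I(P^2)$) gives $x_0^{a_0}x_1^{a_1}x_2^{a_2}\equiv x_0x_1^{a_1}x_2^{a_2}\bmod I(P^2)$. Next, multiplying $(x_0-1)(x_1^2-x_1)\in I(P^2)$ by $x_1^{a_1-2}$ and telescoping (this is where $a_1>0$ is needed) yields $(x_0-1)x_1^{a_1}\equiv (x_0-1)x_1$, i.e. $x_0x_1^{a_1}\equiv x_0x_1+x_1^{a_1}-x_1 \bmod I(P^2)$; multiplying by $x_2^{a_2}$ this becomes
\[
x_0x_1^{a_1}x_2^{a_2}\equiv x_0x_1x_2^{a_2}+x_1^{a_1}x_2^{a_2}-x_1x_2^{a_2}\bmod I(P^2).
\]
Finally, multiplying $(x_0-1)(x_1-1)(x_2-1)\in I(P^2)$ by $x_2^{a_2-1}$ and telescoping down to exponent $0$ gives $(x_0-1)(x_1-1)x_2^{a_2}\equiv (x_0-1)(x_1-1)\bmod I(P^2)$ (trivially true when $a_2=0$); expanding this and isolating $x_0x_1x_2^{a_2}$ produces
\[
x_0x_1x_2^{a_2}\equiv x_0x_2^{a_2}+x_1x_2^{a_2}-x_2^{a_2}+x_0x_1-x_0-x_1+1\bmod I(P^2).
\]

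Substituting the last congruence into the previous one, the terms $x_1x_2^{a_2}$ cancel, leaving
\[
x_0^{a_0}x_1^{a_1}x_2^{a_2}\equiv x_1^{a_1}x_2^{a_2}+x_0x_2^{a_2}-x_2^{a_2}+x_0x_1-x_0-x_1+1\bmod I(P^2),
\]
which is the first claimed identity. The second form is then immediate from the factorization $(x_0-1)(x_2^{a_2}+x_1-1)=x_0x_2^{a_2}-x_2^{a_2}+x_0x_1-x_0-x_1+1$.

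I do not expect a genuine obstacle here: the whole argument is an elementary chain of reductions modulo the explicit Gröbner basis, and no characteristic issues arise since only exact polynomial identities are used. The only points needing a little attention are getting the base cases of the two telescoping inductions right (in particular the degenerate case $a_2=0$, where $x_2^{a_2}$ is read as $1$) and the bookkeeping of the cancellations in the final substitution. Alternatively, once Lemma~\ref{divisionPm} is established in Section~\ref{secpm}, this statement is exactly its $m=2$ specialization and can simply be quoted.
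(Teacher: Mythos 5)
Your proof is correct. It takes a genuinely different route from the paper: the paper proves Lemma~\ref{divisionPm} for general $P^m$ by an evaluation argument (decompose $P^m=\bigcup_r A_r$ and check that both sides agree on every affine piece, using the fact that two polynomials are congruent modulo $I(P^m)$ iff they have the same evaluation on $P^m$), and then obtains Lemma~\ref{divisionP2} as the $m=2$ specialization. You instead carry out an explicit division-algorithm reduction modulo the universal Gröbner basis of Lemma~\ref{vanishingidealP2}, applying the three generators that involve $x_0$ in three telescoping steps: $x_0^2-x_0$ to collapse $x_0^{a_0}$ to $x_0$, $(x_0-1)(x_1^2-x_1)$ to collapse $(x_0-1)x_1^{a_1}$ to $(x_0-1)x_1$, and $(x_0-1)(x_1-1)(x_2-1)$ to collapse $(x_0-1)(x_1-1)x_2^{a_2}$ to $(x_0-1)(x_1-1)$. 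Both proofs are valid; the evaluation argument scales uniformly to all $m$ (which is why the paper takes that route), while your direct reduction is more self-contained and constructive, in that it actually exhibits the multipliers used in the Gröbner-basis division and makes no appeal to the $P^m$ machinery of Section~\ref{secpm}. You also correctly note the alternative of simply quoting Lemma~\ref{divisionPm}, which is exactly what the paper does; the paper does not give an independent proof of Lemma~\ref{divisionP2}.
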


We recall that the kernel of $\ev$ is $I(P^2)$. This implies that a set of classes (polynomials) in $S/I(P^2)$ is linearly independent if and only if the evaluation of this set is linearly independent. This is why, in the following, we may argue about linear independence both from the point of view of polynomials in $S/I(P^2)$ and vectors in $\fqs^n$. 

\begin{lem}\label{lemaindependence}
Let $a=(a_1,a_2)\in \mathcal{A}$, let $\xi_a$ be a primitive element of $\F_{q^{n_a}}$ and let $\xi_{a_2}$ be a primitive element of $\F_{q^{n_{a_2}}}$. Then the following polynomials constitute a basis in $S/I(P^2)$ for $\mathcal{T}(M_a(d))=\langle \mathcal{T}(\lambda x_0^{b_0}x_1^{b_1}x_2^{b_2}),\lambda\in \fqs, x_0^{b_0}x_1^{b_1}x_2^{b_2}\in M_a(d) \rangle$:

\begin{enumerate}
    \item \label{solox0} If $x_0$ divides all the monomials in $M_a(d)\neq\emptyset$:
    $$
    \{ \mathcal{T}_a(\xi_a^r x_0x_1^{a_1}x_2^{a_2})\mid 0\leq r \leq n_a-1 \}.
    $$
    \item \label{solosinx0} If $x_0$ does not divide any monomial in $M_a(d)\neq \emptyset$:
    $$
    \{ \mathcal{T}_a(\xi_a^r x_1^{a_1}x_2^{a_2})\mid 0\leq r \leq n_a-1 \}.
    $$
    \item \label{sinx1} If $M_a(d)$ contains monomials of the two types, and $a_1=0$:
    $$
    \{ \mathcal{T}_a(\xi_a^r x_2^{a_2})\mid 0\leq r \leq n_a-1 \} \cup \{ \mathcal{T}_a(\xi_a^r x_0x_2^{a_2})\mid 0\leq r \leq n_a-1 \}. 
    $$
    \item \label{comp}  If $M_a(d)$ contains monomials of the two types, and $a_1>0$:
    $$
    \{ \mathcal{T}_a(\xi_a^r x_1^{a_1}x_2^{a_2})\mid 0\leq r \leq n_a-1 \} \cup \{(x_0-1)(\mathcal{T}_{a_2}(\xi_{a_2}^rx_2^{a_2})+\mathcal{T}_{a_2}(\xi_{a_2}^r)(x_1-1))\mid 0\leq r\leq n_{a_2}-1\}.
    $$
\end{enumerate}
\end{lem}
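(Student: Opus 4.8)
The plan is to handle the four cases separately, but with a common strategy: in each case, show first that the listed polynomials lie in $\mathcal{T}(M_a(d))$ and span it, and then show they are linearly independent in $S/I(P^2)$ (equivalently, their evaluations are independent). For cases \eqref{solox0} and \eqref{solosinx0}, the situation is essentially the affine one. When $x_0$ divides every monomial of $M_a(d)$, each such monomial is $x_0^{b_0}x_1^{b_1}x_2^{b_2}$ with $(\overline{b_1},\overline{b_2})\in\II_a$ and $b_0>0$; using Remark \ref{remequiscero} we reduce $x_0^{b_0}$ to $x_0$, and then $\mathcal{T}(\lambda x_0 x_1^{b_1}x_2^{b_2})$ is, by the same computation as in \eqref{trazadelatraza}, congruent to $\mathcal{T}_a(\Tr_{\F_{q^s}/\F_{q^{n_a}}}(\lambda) x_0 x_1^{a_1}x_2^{a_2})$ modulo $I(P^2)$ (here we also use $x_i^{q^s}\equiv x_i$ to move the exponent back into the cyclotomic set, and surjectivity plus linearity of the trace to replace $\Tr_{\F_{q^s}/\F_{q^{n_a}}}(\lambda)$ by an arbitrary element of $\F_{q^{n_a}}$, expressed in the basis $1,\xi_a,\dots,\xi_a^{n_a-1}$). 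The same argument with $x_0$ removed gives \eqref{solosinx0}. Linear independence in both cases follows from Remark \ref{remafin}: the $\mathcal{T}_a(\xi_a^r x^a)$ for $0\le r\le n_a-1$ already have linearly independent images under $\ev_{\mathbb{A}^2}$ (the affine evaluation on $A_0\cong\mathbb{A}^2$ in case \eqref{solox0}, where multiplication by $x_0=1$ is harmless, and on the chart where $x_0=0$ in case \eqref{solosinx0}), hence are independent in $S/I(P^2)$.

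For case \eqref{sinx1}, where $a_1=0$ and $M_a(d)$ has monomials of both types, the monomials of $M_a(d)$ are of the form $x_2^{b_2}$ (with $b_2\equiv d$, so $(\overline{0},\overline{b_2})=(0,\overline{b_2})\in\II_a$) and $x_0^{b_0}x_2^{b_2}$ (with $b_0>0$, $\overline{b_2}<d$). Reducing powers of $x_0$ to $1$ and running \eqref{trazadelatraza} in each block shows $\mathcal{T}(M_a(d))$ is spanned by the $\mathcal{T}_a(\xi_a^r x_2^{a_2})$ and the $\mathcal{T}_a(\xi_a^r x_0 x_2^{a_2})$, $0\le r\le n_a-1$, which is exactly the listed set. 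For independence: a dependence among these $2n_a$ vectors, evaluated on the chart $x_0=0$ of $P^2$, kills the second family and forces, by Remark \ref{remafin}, all coefficients of the first family to vanish; the remaining dependence among the $\mathcal{T}_a(\xi_a^r x_0 x_2^{a_2})$ is then killed on the chart $A_0$ (where $x_0=1$), again by Remark \ref{remafin}. So all $2n_a$ are independent.

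Case \eqref{comp} is the delicate one and I expect it to be the main obstacle. Here $a_1>0$ and $M_a(d)$ contains monomials $x_1^{b_1}x_2^{b_2}$ (type without $x_0$) and $x_0^{b_0}x_1^{c_1}x_2^{c_2}$ (type with $x_0$), all with the relevant reduced exponent pair in $\II_a$. For the type-with-$x_0$ monomials, Lemma \ref{divisionP2} gives $x_0^{b_0}x_1^{c_1}x_2^{c_2}\equiv x_1^{c_1}x_2^{c_2}+(x_0-1)(x_2^{c_2}+x_1-1)\bmod I(P^2)$ whenever $c_1>0$; applying $\mathcal{T}$ and using that $\mathcal{T}$ commutes with the relevant Frobenius powers, the trace of such a monomial is congruent to a combination of $\mathcal{T}_a(\xi_a^r x_1^{a_1}x_2^{a_2})$ (coming from the $x_1^{c_1}x_2^{c_2}$ part, via \eqref{trazadelatraza}) plus $(x_0-1)\big(\mathcal{T}_{a_2}(\xi_{a_2}^r x_2^{a_2})+\mathcal{T}_{a_2}(\xi_{a_2}^r)(x_1-1)\big)$ terms — the point being that $\mathcal{T}(x_2^{c_2})$ and $\mathcal{T}(x_1-1)$ collapse to the one-variable trace $\mathcal{T}_{a_2}$ because the $x_1$-exponent is now $0$ (resp. $1$), so the orbit length is $n_{a_2}$, not $n_a$. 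One must also treat the corner cases $c_1\equiv 0$ or $c_2\equiv 0$ separately (then $x_0^{b_0}x_1^{c_1}x_2^{c_2}$ reduces differently, via $x_0^2\equiv x_0$ and $(x_0-1)(x_1^2-x_1)$, or is already one of the listed forms), checking they produce nothing outside the span. This shows the listed $n_a+n_{a_2}$ polynomials span $\mathcal{T}(M_a(d))$. For linear independence, evaluate a putative dependence on the chart $x_0=0$: the $(x_0-1)(\cdots)$ terms survive as $-(\mathcal{T}_{a_2}(\xi_{a_2}^r x_2^{a_2})+\cdots)$ restricted to $\mathbb{A}^2\cap\{x_0=0\}\cong\mathbb{A}^1$ in the $x_2$-coordinate, while the $\mathcal{T}_a(\xi_a^r x_1^{a_1}x_2^{a_2})$ vanish there because $a_1>0$; by Remark \ref{remafin} applied to the one-variable cyclotomic set $\II_{a_2}$ all coefficients of the $(x_0-1)(\cdots)$ family vanish, and then the surviving dependence among the $\mathcal{T}_a(\xi_a^r x_1^{a_1}x_2^{a_2})$ dies on $A_0$ by Remark \ref{remafin}. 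Hence the $n_a+n_{a_2}$ polynomials form a basis of $\mathcal{T}(M_a(d))$, which is what is claimed. The bookkeeping of which reduced exponent pairs actually occur in $M_a(d)$ (so that the trace orbits have the stated lengths and the claimed polynomials genuinely lie in $\mathcal{T}(M_a(d))$) is the part requiring care; everything else is a direct application of \eqref{trazadelatraza}, Lemma \ref{divisionP2}, and Remark \ref{remafin}.
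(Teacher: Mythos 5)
Your plan follows the paper's proof in all essentials: reduce each monomial's traces to the traces of a representative monomial (one divisible by $x_0$, one not) via the computation behind (\ref{trazasgeneradoras}); apply Lemma \ref{divisionP2} in case (\ref{comp}) and collapse $\mathcal{T}_a(\xi_a^rx_2^{a_2})$ to the one-variable traces $\mathcal{T}_{a_2}$; and prove independence by restricting evaluations to affine pieces of $P^2$.

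There is, however, a genuine misstep in case (\ref{solosinx0}). You propose to prove independence of $\{\mathcal{T}_a(\xi_a^rx_1^{a_1}x_2^{a_2})\mid 0\le r\le n_a-1\}$ by restricting to the chart $x_0=0$. When $a_1>0$ this fails: setting $x_1=1$ collapses the family to the one-variable traces $\mathcal{T}_{a_2}\bigl(\Tr_{\F_{q^{n_a}}/\F_{q^{n_{a_2}}}}(\xi_a^r)x_2^{a_2}\bigr)$, whose span has rank at most $n_{a_2}$, and $n_{a_2}$ can be strictly smaller than $n_a$. For example, with $q=2$, $s=4$, $a=(1,5)$, $d=6$ we are in case (\ref{solosinx0}) with $n_a=4>n_{a_2}=2$, and the four traces are visibly dependent on the line $x_0=0$. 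The fix is to restrict, as in case (\ref{solox0}) and as the paper does, to $A_0=[\{1\}\times\fqs^2]$: there the traces coincide with their affine evaluation and Theorem \ref{baseafin} gives independence, which suffices because the values of an $x_0$-free polynomial on the rest of $P^2$ are determined by its values on $A_0$. Your restriction to $x_0=0$ does work in case (\ref{sinx1}), where $a_1=0$ forces $n_a=n_{a_2}$; there the paper argues instead by disjointness of supports within the monomial basis from Lemma \ref{vanishingidealP2}, but both routes are fine. Finally, the corner case ``$c_1\equiv 0$'' you flag in case (\ref{comp}) cannot occur, since $a_1>0$ forces $q^la_1\neq 0$ for all $l$, so Lemma \ref{divisionP2} applies to every monomial of $M_a(d)$ divisible by $x_0$.
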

\begin{proof}
The fact that the polynomials of each set $\{\mathcal{T}_a(\xi_a^r x_0^{a_0}x_1^{a_1}x_2^{a_2}) , 0\leq r\leq n_a\}$ are linearly independent can easily be seen since the evaluation of each set  $\{\mathcal{T}_a(\xi_a^r x_0^{a_0}x_1^{a_1}x_2^{a_2}) , 0\leq r\leq n_a\}$ in $[\{1\}\times\F_{q^s}^2]$ is the same as the evaluation of  $\{\mathcal{T}_a(\xi_a^r x_1^{a_1}x_2^{a_2}) , 0\leq r\leq n_a\}$ in $\fqs^2$, and we know that the evaluation of this set in $\fqs^2$ is linearly independent it is part of the basis given in Theorem \ref{baseafin} for the affine case. For each monomial $x_0^{b_0}x_1^{b_1}x_2^{b_2}\in M_a(d)$, because of the discussion that led to (\ref{trazasgeneradoras}), we know that, instead of considering the set $\{\mathcal{T}(\lambda x_0^{b_0}x_1^{b_1}x_2^{b_2}),\lambda\in \fqs\}$, it is enough to consider the set $\{\mathcal{T}_a(\xi_a^r x_0^{b_0}x_1^{b_1}x_2^{b_2}) , 0\leq r\leq n_a\}$. 

Therefore, if we consider $x_0^{b_0}x_1^{b_1}x_2^{b_2},x_0^{c_0}x_1^{c_1}x_2^{c_2}\in M_a(d)$, with $b_0,c_0>0$, we know that it is sufficient to consider the traces $\{\mathcal{T}_a(\xi_a^r x_0^{b_0}x_1^{b_1}x_2^{b_2}) , 0\leq r\leq n_a-1\}$ and $\{\mathcal{T}_b(\xi_b^r x_0^{c_0}x_1^{c_1}x_2^{c_2}) , 0\leq r\leq n_a-1\}$ for each monomial, respectively. However, the evaluations of these sets of traces generate the same space in $[\{1\}\times\F_{q^s}^2]$ due to Theorem \ref{baseafin} and Remark \ref{remafin}, and in the rest of the points both sets of polynomials evaluate to 0. For the case with $b_0=c_0=0$, we just need to observe that the evaluation of any polynomial $f(x_1,x_2)$ in $P^2$ is fixed by its evaluation in $[\{1\}\times \F_{q^s}^2]$. By the same argument as before, the evaluations of the two sets of polynomials we are considering in $[\{1\}\times \F_{q^s}^2]$ generate the same space, and by the previous observation this implies that their evaluations over $P^2$ generate the same vector space.

Hence, if we consider the traces of the monomials in $M_a(d)$, it is enough to consider the traces of a monomial divisible by $x_0$ (if any) and the traces of a monomial not divisible by $x_0$ (if any). In fact, we can assume that we are considering the monomials $x_0x_1^{a_1}x_2^{a_2}$ and $x_1^{a_1}x_2^{a_2}$, as any other choice for a monomial that is divisible by $x_0$ and a monomial that is not divisible by $x_0$, respectively, would span the same space when considering the space generated by the traces. In the case where $M_a(d)$ only has monomials of one of those types, we know that those traces are linearly independent and we obtain the cases (\ref{solox0}) and (\ref{solosinx0}). Another easy case is when $a_1=0$, in which, if $M_a(d)$ contains monomials of the two types, we just obtain the polynomials
$$
    \{ \mathcal{T}_a(\xi_a^r x_2^{a_2})\mid 0\leq r \leq n_a-1 \} \cup \{ \mathcal{T}_a(\xi_a^r x_0x_2^{a_2})\mid 0\leq r \leq n_a-1 \}. 
$$
We have seen that both of these sets are linearly independent, and when we consider the union we still keep the linear independence since the monomials of each of these traces are part of the basis in Lemma \ref{vanishingidealP2} and both sets have disjoint support for their polynomials. This corresponds to the case (\ref{sinx1}).

The case where $a_1>0$ and $M_a(d)$ contains monomials of the two types is more involved. By the previous discussion, it is enough to consider the sets of polynomials $\{\mathcal{T}_a(\xi_a^r x_0 x_1^{a_1}x_2^{a_2}) , 0\leq r\leq n_a-1\}$ and $\{\mathcal{T}_a(\xi_a^r x_1^{a_1}x_2^{a_2}) , 0\leq r\leq n_a-1\}$ for generating $\mathcal{T}(M_a(d))$, and we are interested in knowing how many linearly independent polynomials in $S/I(P^2)$ there are in the union of those sets. In order to construct a basis for the space generated by all these polynomials, we start with the polynomials in $\{\mathcal{T}_a(\xi_a^r x_1^{a_1}x_2^{a_2}) , 0\leq r\leq n_a\}$, and we will check which polynomials from the other set can be included without losing linear independence. First, by Lemma \ref{divisionP2}, we have that $$
x_0^{q^l a_0}x_1^{q^l a_1}x_2^{q^l a_2}\equiv x_1^{q^l a_1}x_2^{q^l a_2}+(x_0-1)(x_2^{q^l a_2}+x_1-1)\bmod I(P^2).
$$
Thus, for $a=(a_1,a_2)$ with $a_1>0$, we consider $\II_a$ and $\xi_a$ a primitive element of $\F_{q^{n_a}}$, and we obtain
\begin{equation}\label{trazap2}
\begin{split}
\mathcal{T}_a(\xi_a^r x_0^{a_0}x_1^{a_1}x_2^{a_2})&\equiv \mathcal{T}_a(\xi_a^r x_1^{a_1}x_2^{a_2})+(x_0-1)\sum_{l=0}^{n_a-1}\xi_a^{q^l r}(x_2^{q^l a_2}+x_1-1)\bmod I(P^2)\\
&\equiv \mathcal{T}_a(\xi_a^r x_1^{a_1}x_2^{a_2})+(x_0-1)(\mathcal{T}_a(\xi_a^rx_2^{a_2})+\mathcal{T}_a(\xi_a^r)(x_1-1)) \bmod I(P^2).
\end{split}
\end{equation}
By (\ref{trazap2}), we obtain that we have to see which polynomials of the type
\begin{equation}\label{whichtoadd}
(x_0-1)(\mathcal{T}_a(\xi_a^rx_2^{a_2})+\mathcal{T}_a(\xi_a^r)(x_1-1))=(x_0-1)\mathcal{T}_a(\xi_a^rx_2^{a_2})+(x_0-1)(x_1-1)\mathcal{T}_a(\xi_a^r)
\end{equation}
can be included in the basis that we are constructing without losing linear independence. We note that the exponents of $x_2$ in these polynomials are precisely the elements of $\II_{a_2}$. In fact, these polynomials are closely related to the corresponding traces of $\II_{a_2}$. Arguing as we did to get (\ref{trazadelatraza}), we obtain that
\begin{equation}\label{2trazadelatraza}
\mathcal{T}_a(\xi_a^rx_2^{a_2})=\mathcal{T}_{a_2}\left( \Tr_{\F_{q^{n_a}}/\F_{q^{n_{a_2}}}}(\xi_a^r)x_2^{a_2} \right).
\end{equation}
By the argument we used to get (\ref{trazasgeneradoras}), we see that the set $\{ \mathcal{T}_{a_2}( \xi_{a_2}^r x_2^{a_2} ) \mid 0\leq r\leq n_{a_2}-1\}$, where $\xi_{a_2}$ is a primitive element of $\F_{q^{n_{a_2}}}$, generates the same vector space as $\{ \mathcal{T}_{a}( \xi_a^r x_2^{a_2} ) \mid 0\leq r\leq n_{a}-1\}$. This implies that the set of polynomials 
$$\{(x_0-1)(\mathcal{T}_a(\xi_a^rx_2^{a_2})+\mathcal{T}_a(\xi_a^r)(x_1-1))\mid 0\leq r\leq n_a\}$$
generates the same space as the set 
$$\{(x_0-1)(\mathcal{T}_{a_2}(\xi_{a_2}^rx_2^{a_2})+\mathcal{T}_{a_2}(\xi_{a_2}^r)(x_1-1))\mid 0\leq r\leq n_{a_2}\}.$$
This is because the same linear combination that expresses $\mathcal{T}_a(\xi_a^rx_2^{a_2})$ in terms of the traces $\mathcal{T}_{a_2}( \xi_{a_2}^r x_2^{a_2} )$ also gives $\mathcal{T}_a(\xi_a^r)$ in terms of the traces $ \mathcal{T}_{a_2}(\xi_{a_2}^r)$ (we just evaluate $x_2=1$), and vice versa. Thus, when considering the polynomials from (\ref{whichtoadd}) that we have to include, is is enough to consider
\begin{equation}\label{conjunto2}
\{(x_0-1)(\mathcal{T}_{a_2}(\xi_{a_2}^rx_2^{a_2})+\mathcal{T}_{a_2}(\xi_{a_2}^r)(x_1-1))\mid 0\leq r\leq n_{a_2}-1\},
\end{equation}
which are linearly independent since they coincide with the univariate affine case from Theorem \ref{baseafin} when we evaluate in the points of $[\{0\}\times \{1\}\times\F_{q^s}]$. Finally, when we consider the union of those polynomials with the set $\{\mathcal{T}_a(\xi_a^r x_1^{a_1}x_2^{a_2}) , 0\leq r\leq n_a\}$, we see that they are linearly independent because the polynomials from (\ref{conjunto2}) evaluate to the zero vector in $[\{1\}\times\fqs^2]$, while the polynomials from the set $\{\mathcal{T}_a(\xi_a^r x_1^{a_1}x_2^{a_2}) , 0\leq r\leq n_a\}$ give linearly independent vectors when evaluating in $[\{1\}\times \F_{q^s}^2]$.
\end{proof}

By Lemma \ref{lemaindependence}, if $x_0$ divides all the monomials from $M_a(d)$, or does not divide any of the monomials in $M_a(d)$, we only have to consider $n_a$ polynomials for each $a\in \mathcal{A}$ to construct a basis for $\mathcal{T}(M_a(d))$. However, if $M_a(d)$ contains monomials of the two types we have to consider $n_a+n_{a_2}$ polynomials (note that for $a_1=0$ we have $a=(0,a_2)$ and $2n_a=2n_{a_2}=n_a+n_{a_2}$).

\begin{rem}\label{remcerod}
We note that if $a_1=0$ and $M_a(d)$ contains monomials of the two types, this means that $x_2^d\in M_a(d)$, which implies that $\overline{d}\in \II_{a_2}$. Therefore, the case (\ref{sinx1}) in Lemma \ref{lemaindependence} applies only to $(0,\overline{d})\in \mathcal{A}$, and only when $M_{(0,\overline{d})}$ contains monomials of the two types.
\end{rem}

Let $d^\perp=2(q-1)-d$. We introduce the following notation to state the main result of this section. For each $a\in \mathcal{A}$ such that $M_a(d^\perp)\neq \emptyset$, let $\xi_a$ be a primitive element in $\F_{q^{n_a}}$, and consider the following set:

\begin{itemize}
\item[(a)] If $x_0$ divides all the monomials from $M_a(d^\perp)$, we set
$$
T_a=\{\mathcal{T}_a(\xi_a^r x_0x_1^{a_1}x_2^{a_2})\mid 0\leq r \leq n_a-1\}.
$$
\item[(b)] We set 
$$
T_a=\{\mathcal{T}_a(\xi_a^r x_1^{a_1}x_2^{a_2})\mid 0\leq r \leq n_a-1\}
$$
otherwise.
\end{itemize}

The reasoning behind $T_a$ is that for any $a\in \mathcal{A}$ such that $M_a(d^\perp)\neq \emptyset$, from Lemma \ref{lemaindependence} we obtain that $T_a$ is a set of linearly independent polynomials in $\mathcal{T}(M_a(d^\perp))$. We define $U=\{ a\in \mathcal{A}\mid M_a(d^\perp) \neq \emptyset \}$, and we consider the union of the previous sets:
$$
D_1=\bigcup_{a\in U}T_a.
$$
This is one of the sets that we will consider for constructing a basis for $\PRM_d^{\sigma,\perp}(2)$. 

If $M_a(d^\perp)$ contains monomials of the two types, then, besides $T_a$, Lemma \ref{lemaindependence} states that there are more linearly independent polynomials in $\mathcal{T}(M_a(d^\perp))$. Thus, we turn our attention now to the case (\ref{comp}) from Lemma \ref{lemaindependence}. For each $a_2\in \mathcal{A}^1$, let $\xi_{a_2}$ be a primitive element in $\F_{q^{n_{a_2}}}$, and we consider the set
$$
T_{a_2}=\{(x_0-1)(\mathcal{T}_{a_2}(\xi_{a_2}^rx_2^{a_2})+\mathcal{T}_{a_2}(\xi_{a_2}^r)(x_1-1))\mid 0\leq r\leq n_{a_2}-1\}.
$$
Let $V=\{a_2\in \mathcal{A}^1\mid \II_{a_2}\neq \II_{\overline{d^\perp}} \text{ and } \exists \; c\in\mathcal{A} \text{ with } c_2=a_2 \text{ and } M_c(d^\perp) \text{ contains monomials }$ $ \text {of the two types} \}$, and we consider the set
$$
D_2=\bigcup_{a_2\in V}T_{a_2}.
$$
If we want to generate all the polynomials in $\bigcup_{a\in \mathcal{A}}\mathcal{T}(M_a(d^\perp))$, from Lemma \ref{lemaindependence} we see that we still have to consider the polynomials corresponding to $a\in \mathcal{A}$ such that $\II_{a_2}=\II_{\overline{d}}$. Let us define a set $D_3$ that will contain the polynomials corresponding to this case and that we will consider for constructing a basis for $\PRM_d^{\sigma,\perp}(2)$. Let $a_2\in \mathcal{A}^1$ such that $\II_{a_2}=\II_{\overline{d^\perp}}$, and $\xi_{a_2}$ a primitive element in $\F_{q^{n_{a_2}}}$.
\begin{enumerate}
    \item[(a)]\label{casoa} If $M_{(0,\overline{d^\perp})}(d^\perp)=M_{(0,a_2)}(d^\perp)$ contains monomials of the two types:
    \begin{enumerate}
        \item[(a.1)]\label{casoa1} If there is an element $c\in\mathcal{A}$ such that $c_2=a_2$, $\II_c\neq \II_{(0,\overline{d^\perp})}$, and $M_c(d^\perp)$ contains monomials of the two types, we set
        $$
        D_3=\{ \mathcal{T}_{a_2}(\xi_{a_2}^r x_0x_2^{a_2})\mid 0\leq r \leq n_{a_2}-1 \}\cup \{(x_0-1)(x_1-1)\}.
        $$
        \item[(a.2)]\label{casoa2} We set 
        $$       D_3=\{ \mathcal{T}_{a_2}(\xi_{a_2}^r x_0x_2^{a_2})\mid 0\leq r \leq n_{a_2}-1 \}.
        $$
        otherwise.
    \end{enumerate}
    \item[(b)]\label{casob} We set 
        $$
        D_3=\emptyset
        $$
    otherwise.
\end{enumerate}

We note that the case (b) happens if and only if $x_0$ does not divide any monomial in $M_{(0,\overline{d^\perp})}(d^\perp)$. The precise reason why we define $D_3$ in this way will be clear in the proof of Theorem \ref{basedualPRM}, which we will state after defining one last set, which we are considering just to cover the case in which $d\equiv 0 \bmod q^s-1$. In that case, we also have the evaluation of $1$ in the dual code of $\PRM_d(2)$ by Theorem \ref{dualPRM}. If $d=q^s-1$, we define $D_4=\{1\}$, and $D_4=\emptyset$ otherwise.

\begin{thm}\label{basedualPRM}
Let $d\geq 1$ and $d^\perp=2(q^s-1)-d$. For each $a\in \mathcal{A}$, let $\xi_a$ be a primitive element in $\F_{q^{n_a}}$ such that $\T_{a}(\xi_{a})\neq 0$, and for each $a_2\in \mathcal{A}^1$, let $\xi_{a_2}$ be a primitive element in $\F_{q^{n_{a_2}}}$ such that $\T_{a_2}(\xi_{a_2})\neq 0$ (one can always assume this \cite{cohen}). 
Using the previous definitions, we consider the set
$$
D=D_1\cup D_2\cup D_3\cup D_4.
$$
Then we have that the image by the evaluation map of $D$ forms a basis for $\PRM_d^{\sigma,\perp}(2)$.
\end{thm}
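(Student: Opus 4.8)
The plan is to establish the two halves of the basis claim separately: that the span of $\ev(D)$ equals $\PRM_d^{\sigma,\perp}(2)$, and that $\ev(D)$ is linearly independent; together these give the result and, by counting $|D|$, the dimension formula of Corollary \ref{dimdual}.

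For the generation step I would first reduce by Delsarte's Theorem \ref{delsarte} and Theorem \ref{dualPRM}: $\PRM_d^{\sigma,\perp}(2)$ is generated by the image under $\ev$ of $\T(S_{d^\perp})$ together with $\ev(1)$ (which is $D_4$) when $d\equiv 0\bmod q^s-1$, and every element of $D$ is of that form, so $\ev(D)\subseteq\PRM_d^{\sigma,\perp}(2)$. Since $S_{d^\perp}=\langle\bigcup_{a\in\mathcal{A}}M_a(d^\perp)\rangle$ and $\T$ is linear, $\T(S_{d^\perp})=\langle\bigcup_{a\in\mathcal{A}}\T(M_a(d^\perp))\rangle$, and Lemma \ref{lemaindependence} gives an explicit basis of each $\T(M_a(d^\perp))$, so it remains to show every such basis vector lies in the span of $\ev(D)$. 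I would match the four cases of Lemma \ref{lemaindependence} with the pieces of $D$: the ``$\T_a(\xi_a^r x^{(a_1,a_2)})$''-type vectors (cases (1), (2), and the first halves of (3), (4)) are exactly the sets $T_a$, hence $D_1$; the extra ``$(x_0-1)\big(\T_{a_2}(\xi_{a_2}^r x_2^{a_2})+\T_{a_2}(\xi_{a_2}^r)(x_1-1)\big)$''-type vectors from case (4) with $\II_{a_2}\neq\II_{\overline{d^\perp}}$ are exactly the sets $T_{a_2}$ (which depend only on $a_2$), hence $D_2$, once one checks, using the remark after (\ref{2trazadelatraza}), that here $a_2\in V$; and the vectors with $\II_{a_2}=\II_{\overline{d^\perp}}$ — which, by Remark \ref{remcerod}, include the only instance of case (3) with $a_1=0$, namely $a=(0,\overline{d^\perp})$ — are produced by $D_3$. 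For this last point one uses, as in (\ref{trazap2}) together with $\T_{a_2}(\xi_{a_2}^r x_0 x_2^{a_2})\equiv x_0\,\T_{a_2}(\xi_{a_2}^r x_2^{a_2})\bmod I(P^2)$, the identity expressing $\T_{a_2}(\xi_{a_2}^r x_0 x_2^{a_2})$ as a combination of $(x_0-1)\big(\T_{a_2}(\xi_{a_2}^r x_2^{a_2})+\T_{a_2}(\xi_{a_2}^r)(x_1-1)\big)$, of $\T_{a_2}(\xi_{a_2}^r x_2^{a_2})$, and of $(x_0-1)(x_1-1)$; since $T_{(0,\overline{d^\perp})}\subseteq D_1$ already supplies the $\T_{a_2}(\xi_{a_2}^r x_2^{a_2})$, this shows $D_3$ spans all the generators attached to $\II_{\overline{d^\perp}}$, and the subcase split (a.1)/(a.2)/(b) merely records when the vector $(x_0-1)(x_1-1)$ is actually needed and when $D_3$ may be empty — its correctness following from Lemma \ref{condicionesM} (if $x_0$ divides no monomial of $M_{(0,\overline{d^\perp})}(d^\perp)$ then $\II_{(0,\overline{d^\perp})}\cap\Delta_{<d^\perp}=\emptyset$, forcing $\II_c\cap\Delta_{<d^\perp}=\emptyset$, hence $M_c(d^\perp)$ not of two types, for every $c$ with second coordinate a representative of $\II_{\overline{d^\perp}}$).

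For linear independence I would use Lemma \ref{vanishingidealP2} and the reduction of Lemma \ref{divisionP2} (with Remark \ref{remequiscero}) to write each element of $D$ in the monomial basis of $S/I(P^2)$, and then test a relation $\sum_i\lambda_i\ev(e_i)=0$ ($e_i\in D$) by restricting successively to the affine charts $A_0=[\{1\}\times\F_{q^s}^2]$, $A_1=[\{0\}\times\{1\}\times\F_{q^s}]$ and $A_2=\{[0:0:1]\}$ of Lemma \ref{lemassc}, which together cover $P^2$. On $A_0$ every $(x_0-1)$-type polynomial vanishes while the elements of $D_1$, the main part of $D_3$ and $D_4$ become two-variable affine Reed-Muller basis elements of Theorem \ref{baseafin} (or the constant $1$); since distinct cyclotomic sets are disjoint, this forces $\lambda_i=0$ for all $e_i\in D_1$ except those attached to the cyclotomic sets $\II_{\overline{d^\perp}}$ and $\{(0,0)\}$, and yields linear constraints relating the $T_{(0,\overline{d^\perp})}$-coefficients to the $D_3$-main ones and the $T_{(0,0)}$-coefficient to the $D_4$-one. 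Restricting next to $A_1$ kills the main part of $D_3$, the element $x_0$ (the only member of $T_{(0,0)}$), and $(x_0-1)(x_1-1)$, and turns $T_{(0,\overline{d^\perp})}$ and the $D_2$-elements into univariate affine Reed-Muller basis elements for $\II_{\overline{d^\perp}}$ and for $\II_{a_2}$ ($a_2\in V$); disjointness of these cyclotomic sets then kills the coefficients of $T_{(0,\overline{d^\perp})}$, of $D_3$-main (via the $A_0$-constraint), and of every $D_2$-element with $a_2\neq 0$. One is left with at most the four vectors $x_0$, $(x_0-1)x_1$ (the sole element of $T_0$, present only if $0\in V$), $(x_0-1)(x_1-1)$ (present only in subcase (a.1)) and $1$ (present only if $d=q^s-1$); evaluating these on $A_2$ and inspecting their reduced forms in the monomial basis shows they are linearly independent — the one genuinely dependent configuration, all four at once, cannot occur because $d=q^s-1$ forces $0\notin V$ and $D_3=\emptyset$ — so all $\lambda_i$ vanish and $\ev(D)$ is a basis.

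The generation step is essentially bookkeeping over Lemmas \ref{lemaindependence}, \ref{condicionesM} and \ref{divisionP2}; the delicate part is linear independence, precisely because the four families $D_1,D_2,D_3,D_4$ overlap on individual charts — most visibly $T_{(0,\overline{d^\perp})}\subseteq D_1$ and the main part of $D_3$ agree on $A_0$, and the ``small'' vectors $x_0$, $(x_0-1)x_1$, $(x_0-1)(x_1-1)$, $1$ all occupy only the monomials $x_0$, $x_1$, $x_0x_1$ and the constant. No single chart separates them, so the argument must run the filtration ``$A_0$, then $A_1$, then $A_2$'' while carefully tracking, through Lemmas \ref{vanishingidealP2} and \ref{divisionP2}, which reduced monomials each family occupies, and must verify that the case structure of $V$ and $D_3$ rules out the lone dependent combination; that verification around the exceptional cyclotomic set $\II_{\overline{d^\perp}}$ is the real obstacle.
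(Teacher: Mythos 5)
Your proposal is correct and follows essentially the same route as the paper: reduce via Delsarte's Theorem and Theorem \ref{dualPRM} to computing a basis of $\T(S_{d^\perp})$, match the four cases of Lemma \ref{lemaindependence} against $D_1,D_2,D_3$ (with $D_4$ handling the extra constant), and establish linear independence by filtering through the charts $A_0,A_1,A_2$. The only differences are organizational (the paper proves containment, then independence, then generation, while you prove generation first) and that your filtration argument for independence is run a bit more systematically, explicitly catching that on $A_1$ the $T_{(q^s-1,0)}$-element $x_1^{q^s-1}$ also reduces to a constant and must be killed via $A_0$ — a small point the paper's discussion of $D_4$ glosses over; the two minor edge-case imprecisions in your write-up (e.g.\ for $d=2(q^s-1)$ one has $T_{(0,0)}=\{1\}$ rather than $\{x_0\}$) are harmless and do not affect the argument.
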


\begin{proof}
Firstly, by Theorem \ref{dualPRM} we know that $\PRM_d^\perp(2)$ is equal to $\PRM_{d^\perp}(2)$, except when $d\equiv 0\bmod (q^s-1)$, in which case we also have to consider the evaluation of the constant 1. If $d\not \equiv 0\bmod (q^s-1)$, by Delsarte's Theorem, Theorem \ref{delsarte}, $\PRM_d^{\sigma,\perp}(2)=\Tr(\PRM_{d^\perp}(2))$, and due to the fact that we have $\Tr\circ \ev = \ev\circ \mathcal{T}$, we see that if we consider $\mathcal{T}(S_{d^\perp})$ (and possibly the constant 1), we obtain a system of generators for $\PRM_d^{\sigma,\perp}(2)$. Therefore, in order to obtain a basis, we just need to study linear independence between these polynomials. In fact, we have $S_{d^\perp}=\langle \bigcup_{a\in\mathcal{A}} M_a(d^\perp) \rangle $, which means that we can consider the union of the bases given for each $\mathcal{T}(M_a(d^\perp))$ from Lemma \ref{lemaindependence}, and we can obtain obtain a basis for $\PRM_d^{\sigma,\perp}(2)$ by extracting a maximal linearly independent set. We focus first on computing a basis for $\mathcal{T}(S_{d^\perp})$, and we will consider the cases where $d\equiv 0\bmod q^s-1$ later. 

In what follows, for each $a\in \mathcal{A}$ we consider $\xi_a$ a primitive element in $\F_{q^{n_a}}$, and for each $a_2\in \mathcal{A}^1$ we consider $\xi_{a_2}$ a primitive element in $\F_{q^{n_{a_2}}}$. By construction, it is clear that we have $D_1\cup D_2\subset \mathcal{T}(S_{d^\perp})$. We show now that also $D_3$ is contained in $\mathcal{T}(S_{d^\perp})$, and $D_4$ is contained in $\mathcal{T}(S_{d^\perp}+\langle 1 \rangle )$ when $D_4\neq \emptyset$.

Let $a_2\in \mathcal{A}^1$ such that $\II_{a_2}=\II_{\overline{d^\perp}}$. For $D_3$, we have to justify that, if $M_{(0,\overline{d^\perp})}(d^\perp)$ contains monomials of the two types and there is an element $c\in \mathcal{A}$ such that $c_2=a_2$, $\II_c\neq \II_{(0,\overline{d^\perp})}$ and $M_c(d^\perp)$ contains monomials of the two types, then $(x_0-1)(x_1-1)$ is in $\T(S_{d^\perp})$. Under these assumptions, by Lemma \ref{lemaindependence} we have that the following sets are in $\T(S_{d^\perp})$:
\begin{equation}\label{polpeso1}
\begin{aligned}
&\{ \mathcal{T}_{(0,a_2)}(\xi_{(0,a_2)}^r x_2^{a_2})\mid 0\leq r \leq n_{(0,a_2)}-1 \} \cup \{ \mathcal{T}_{(0,a_2)}(\xi_{(0,a_2)}^r x_0x_2^{a_2})\mid 0\leq r \leq n_{(0,a_2)}-1 \},\\
&\{ \mathcal{T}_c(\xi_a^r x_1^{c_1}x_2^{c_2})\mid 0\leq r \leq n_c-1 \} \cup \{(x_0-1)(\mathcal{T}_{c_2}(\xi_{c_2}^rx_2^{c_2})+\mathcal{T}_{c_2}(\xi_{c_2}^r)(x_1-1))\mid 0\leq r\leq n_{c_2}-1\}.
\end{aligned}
\end{equation}
Taking into account that $c_2=a_2$, if we assume that $\xi_{(0,a_2)}=\xi_{a_2}$ (note that $n_{a_2}=n_{(0,a_2)}$), then $\mathcal{T}_{(0,a_2)}(\xi_{(0,a_2)}^r x_2^{a_2})=\mathcal{T}_{c_2}(\xi_{c_2}^rx_2^{c_2})$ and  $\mathcal{T}_{(0,a_2)}(\xi_{(0,a_2)}^r x_0x_2^{a_2})=x_0\mathcal{T}_{c_2}(\xi_{c_2}^rx_2^{c_2})$. By assumption, we have that $\T_{c_2}(\xi_{c_2})\neq 0$. Hence, taking into account that we can generate the polynomial $(1-x_0)\mathcal{T}_{c_2}(\xi_{c_2}x_2^{c_2})$ with the first union of sets in (\ref{polpeso1}), we see that with the first union of sets and the last set from (\ref{polpeso1}) we can generate $(x_0-1)(x_1-1)$. Thus, $D_1\cup D_2\cup D_3\subset \T(S_{d^\perp})$. On the other hand, if $d=q^s-1$, we have $D_4=\{1\}$, and it is clear that $D_4\subset \T(S_{d^\perp}+\langle 1\rangle )$. Therefore, we have seen that the image by the evaluation map of $D$ is always in $\PRM_d^{\sigma,\perp}(2)$.

Now we justify that the evaluation of the polynomials in $D$ is linearly independent. If we consider the monomials $x_0^{a_0}x_1^{a_1}x_2^{a_2}$, $x_0^{b_0}x_1^{b_1}x_2^{b_2}$, of degree $d^\perp$, with $\II_a\neq \II_b$ (for $a=(a_1,a_2)$, $b=(b_1,b_2)$), then we have that the sets $\{\mathcal{T}_a(\xi_a^r x_0^{a_0}x_1^{a_1}x_2^{a_2}) , 0\leq r\leq n_a-1\}$ and $\{\mathcal{T}_b(\xi_b^r x_0^{b_0}x_1^{b_1}x_2^{b_2}) , 0\leq r\leq n_b-1\}$ are linearly independent since in $[\{1\}\times\F_{q^s}^2]$ they are linearly independent by the affine case from Theorem \ref{baseafin} in two variables. Using Lemma \ref{lemaindependence} we see that the polynomials in $D_1$ are linearly independent. 

Each polynomial $(x_0-1)(\mathcal{T}_{a_2}(\xi_{a_2}^rx_2^{a_2})+\mathcal{T}_{a_2}(\xi_{a_2}^r)(x_1-1))$, with $0\leq r\leq n_{a_2}-1$, has the same evaluation in $[\{0\}\times \{1\}\times\fqs]$ as $-\mathcal{T}_{a_2}(\xi_{a_2}^rx_2^{a_2})$ in $\fqs$. Hence, the evaluation of the polynomials in $D_2$ is linearly independent by Theorem \ref{baseafin} in one variable. Moreover, these polynomials evaluate to $0$ in $[\{1\}\times\fqs^2]$, while the polynomials from $D_1$ have linearly independent evaluation in $[\{1\}\times\fqs^2]$, which means that the evaluation of $D_1\cup D_2$ is also linearly independent.

We show now that a similar reasoning proves that the evaluation of $D_1\cup D_2\cup D_3$ is also linearly independent. Looking at the definition of $D_3$, if we are in the case (a.1), the evaluation of the polynomial $(x_0-1)(x_1-1)$ is linearly independent from the evaluation of the rest of polynomials in $D_1\cup D_2\cup D_3$ as it is the only one that evaluates to $0$ in $[\{1\}\times\fqs^2]$ and $[\{0\}\times \{1\}\times\fqs]$, and the rest of polynomials have linearly independent evaluations in those sets. Let $a_2\in \mathcal{A}^1$ such that $\II_{a_2}=\II_{\overline{d^\perp}}$. The evaluation of $\mathcal{T}_{a_2}(\xi_{a_2}^r x_0x_2^{a_2})$, for some $0\leq r \leq n_{a_2}-1$, is linearly independent from the evaluation of any polynomial in $D_1$, besides $\mathcal{T}_{a_2}(\xi_{a_2}^r x_2^{a_2})$, due to the argument we used to discuss linear independence between elements in $D_1$. But its evaluation is also linearly independent from the evaluation of $\mathcal{T}_{a_2}(\xi_{a_2}^r x_2^{a_2})$ by Lemma \ref{lemaindependence} (\ref{sinx1}). The same argument that we used to prove that the evaluation of the polynomials in $D_1\cup D_2$ is linearly independent proves that the evaluation of $\mathcal{T}_{a_2}(\xi_{a_2}^r x_0x_2^{a_2})$ is linearly independent from the evaluation of the polynomials in $D_2$. Thus, in this case, the evaluation of $D_1\cup D_2\cup D_3$ is linearly independent. The same arguments prove that $D_1\cup D_2\cup D_3$ is linearly independent in the other cases that appear in the definition of $D_3$. 

We study now the cases in which we have $D_4\neq \emptyset$, i.e., the case where $d=q^s-1$. The evaluation of the constant 1 is linearly independent from the evaluation of the rest of polynomials in this case since, if we look at the evaluation in $[\{0\}\times \{1\}\times\fqs]$, the constant 1 is linearly independent from the evaluation of the rest of univariate traces by Theorem \ref{baseafin}. Hence if we had a linear combination of polynomials from $D_1\cup D_2\cup D_3$ with the same evaluation as 1 in $P^2$, when setting $x_0=0,x_1=1$, the result would be the constant 1.
If we look at the polynomials that we have in $D_1\cup D_2\cup D_3$, the only polynomial that would have a constant in its support after setting $x_0=0,x_1=1,$ would be the only polynomial in $T_{0}$: $(x_0-1)(1+(x_1-1))=(x_0-1)x_1$. However, we only consider this polynomial in $D_2$ if there is some $b\in\mathcal{A}$ such that $\II_{b_2}=\II_{0}=\{0\}$ and if $M_b(d^\perp)=M_b(q^s-1)$ contains monomials of the two types. Therefore, $b_2=0$, and we must have $b_1=q^s-1$ if we want to have some monomial that is not divided by $x_0$ in $M_b(q^s-1)$ by Lemma \ref{condicionesM}. However, $M_{(q^s-1,0)}(q^s-1)=\{x_1^{q^s-1}\}$ does not have monomials of the two types. Thus, the polynomial $(x_0-1)x_1$ is not in $D_1\cup D_2\cup D_3$ in this case and the evaluation of $D=D_1\cup D_2\cup D_3\cup D_4$ is linearly independent.

The only thing left to prove for asserting that $D$ is a basis is that this set is a maximal linearly independent set, or, equivalently, that $D$ generates $\mathcal{T}(S_{d^\perp})$ if $d\not\equiv 0 \bmod q^s-1$, and $D$ generates $\mathcal{T}(S_{d^\perp}+\langle 1\rangle )$ otherwise. To see that $D$ generates $\mathcal{T}(S_{d^\perp})$ when $d\not\equiv 0 \bmod q^s-1$, we have seen that it is enough to check that we can generate all the bases for the sets $\mathcal{T}(M_a(d^\perp))$ from Lemma \ref{lemaindependence}. Let $a\in \mathcal{A}$ such that $M_a(d^\perp)\neq \emptyset$.  If $M_a(d^\perp)$ does not have monomials of the two types, then we see that the basis for $\T(M_a(d^\perp))$ from Lemma \ref{lemaindependence} is contained in $D_1$. If $M_a(d^\perp)$ contains monomials of the two types, then we are in case (\ref{sinx1}) or case (\ref{comp}) from Lemma \ref{lemaindependence}.

Due to the ordering of the elements in $\zmsdos$, $a\in \mathcal{A}$ implies that $a_2\in \mathcal{A}^1$. We consider now the case (\ref{comp}) and we assume first that $\II_{a_2}\neq \II_{\overline{d^\perp}}$. In this situation, it is clear by the definitions that the basis for $\mathcal{T}(M_a(d))$ from Lemma \ref{lemaindependence} is contained in $D_1\cup D_2$.

Now we study the case (\ref{sinx1}) from Lemma \ref{lemaindependence}, and also the case (\ref{comp}) when $\II_{a_2}=\II_{\overline{d^\perp}}$, which are the only cases left. By Remark \ref{remcerod}, in both situations we have that $\II_{a_2}=\II_{\overline{d^\perp}}$. Instead of studying the sets $\mathcal{T}(M_c(d^\perp))$, with $c\in \mathcal{A}$ and $c_2=a_2$, one by one, we consider them together in this case, and we will see that we can generate $\bigcup_{c\in \mathcal{A}\mid c_2=a_2}\mathcal{T}(M_c(d^\perp))$. For each $c\in \mathcal{A}$ with $c_2=a_2$ and $M_c(d^\perp)\neq \emptyset$, if $M_c(d^\perp)$ does not have monomials of the two types, we have already seen that the basis for $\mathcal{T}(M_c(d^\perp))$ from Lemma \ref{lemaindependence} is contained in $D_1$. And if $M_c(d^\perp)$ contains monomials of the two types, then it is also clear that the first set of polynomials that appears in cases (\ref{sinx1}) and (\ref{comp}) from Lemma \ref{lemaindependence} is contained in $D_1$. Thus, we focus on the second set of polynomials from those cases in Lemma \ref{lemaindependence}.

If $M_{(0,\overline{d^\perp})}(d^\perp)=M_{(0,a_2)}(d^\perp)$ contains monomials of the two types, by the definition of $D_3$ we have that the basis for $\mathcal{T}(M_{(0,a_2)}(d^\perp))$ from Lemma \ref{lemaindependence} is contained in $D_1\cup D_3$. If we also have some $c\in \mathcal{A}$, $\II_c\neq \II_{(0,a_2)}$, with $c_2=a_2$, and such that $M_c(d^\perp)$ contains monomials of the two types, then we have that $(x_0-1)(x_1-1)\in D_3$, and by the reasoning that we did after (\ref{polpeso1}) it is clear that we can generate the basis of $\T(M_c(d^\perp))$ given in Lemma \ref{lemaindependence} with the polynomials in $D_1\cup D_2\cup D_3$.
 
If $M_{(0,a_2)}(d^\perp)$ does not have monomials of the two types, we clearly have the basis from Lemma \ref{lemaindependence} for $\mathcal{T}(M_{(0,a_2)}(d^\perp))$ contained in $D_1\cup D_3$. We also note that, by Lemma \ref{condicionesM}, $M_{(0,a_2)}(d^\perp)$ does not have monomials of the two types if and only if $d^\perp=a_2$, i.e., $d^\perp$ is the minimal element in $\II_{d^\perp}$. Hence, for any $c\in\mathcal{A}$ with $c_2=a_2=d^\perp$, $\II_c\neq \II_{(0,d^\perp)}$, we obtain that, for each $\gamma\in \II_c$, we have $\gamma_1\neq 0$ and  $\gamma_1+\gamma_2>c_2=d^\perp$, which means that $M_c(d^\perp)=\emptyset$.

Finally, we have to consider the cases where $d\equiv 0\bmod q^s-1$. If $d=q^s-1$, we already have $1\in D_4$. For the case $d=2(q^s-1)$, we have $\mathcal{T}_{(0,0)}(x_1^0x_2^0)=1$ in $D_1$, which means that we also have the evaluation of the constant 1 when evaluating the polynomials in $D$. Therefore, we have proved that the image by the evaluation map of $D$ is a basis for $\PRM_d^{\sigma,\perp}(2)$. 
\end{proof}

\begin{cor}\label{dimdual}
Let $d\geq 1$ and $d^\perp=2(q^s-1)-d$. Let $U=\{ a\in \mathcal{A}\mid M_a(d^\perp) \neq \emptyset \}$ and $V=\{a_2\in \mathcal{A}^1\mid \II_{a_2}\neq \II_{\overline{d^\perp}} \text{ and } \exists \; c\in\mathcal{A} \text{ with } c_2=a_2 \text{ and } M_c(d^\perp) \text{ contains monomials }$ $ \text {of the two types} \}$ as before. The dimension of $\PRM_d^{\sigma,\perp}(2)$ is
$$
\dim(\PRM_d^{\sigma,\perp}(2))=\abs{D}=\abs{D_1}+\abs{D_2}+\abs{D_3}+\abs{D_4}=\
\sum_{a\in U}n_a+\sum_{a_2\in V}n_{a_2}+\epsilon_3+\epsilon_4,
$$
where $\epsilon_3=n_{\overline{d^\perp}}+1$ if $M_{(0,\overline{d^\perp})}(d^\perp)$ contains monomials of the two types and there is  $\II_c\neq \II_{(0,\overline{d^\perp})}\text{ with } c_2\in \II_{\overline{d}}$ such that $M_c(d^\perp)$ contains monomials of the two types; $\epsilon_3=n_{\overline{d^\perp}}$ if $M_{(0,\overline{d^\perp})}(d^\perp)$ contains monomials of the two types but there is no $\II_c\neq \II_{(0,\overline{d^\perp})}$ as before; and $\epsilon_3=0$ otherwise. Finally, $\epsilon_4=\abs{D_4}$, i.e., $\epsilon_4=1$ if $d=q^s-1$, and $\epsilon_4=0$ otherwise.
\end{cor}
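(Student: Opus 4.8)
The plan is to read off the dimension directly from Theorem~\ref{basedualPRM}, so the argument is essentially a bookkeeping count. Since the image under $\ev$ of $D = D_1\cup D_2\cup D_3\cup D_4$ is a basis of $\PRM_d^{\sigma,\perp}(2)$, we have $\dim(\PRM_d^{\sigma,\perp}(2)) = \abs{D}$. A basis consists of pairwise distinct nonzero vectors, hence all polynomials listed in $D$ are pairwise distinct; in particular the four sets $D_1,D_2,D_3,D_4$ are pairwise disjoint, the sets $T_a$ with $a\in U$ are pairwise disjoint, the sets $T_{a_2}$ with $a_2\in V$ are pairwise disjoint, and inside each of these sets no polynomial is repeated. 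Therefore $\abs{D} = \abs{D_1}+\abs{D_2}+\abs{D_3}+\abs{D_4}$, and it suffices to count each summand.

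For $D_1 = \bigcup_{a\in U}T_a$, each $T_a$ is (by its definition via cases (a) and (b)) a set of the form $\{\T_a(\xi_a^r x_0x_1^{a_1}x_2^{a_2})\mid 0\leq r\leq n_a-1\}$ or $\{\T_a(\xi_a^r x_1^{a_1}x_2^{a_2})\mid 0\leq r\leq n_a-1\}$, which has exactly $n_a$ elements by the linear independence already established through Lemma~\ref{lemaindependence}. Combined with the disjointness noted above, this gives $\abs{D_1} = \sum_{a\in U}n_a$. In the same way, each $T_{a_2}$ with $a_2\in V$ has exactly $n_{a_2}$ elements, so $\abs{D_2} = \sum_{a_2\in V}n_{a_2}$.

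It remains to evaluate $\abs{D_3}$ and $\abs{D_4}$ by inspecting their definitions. Let $a_2\in\mathcal{A}^1$ with $\II_{a_2}=\II_{\overline{d^\perp}}$. If $M_{(0,\overline{d^\perp})}(d^\perp)$ does not contain monomials of the two types, we are in case (b) of the definition of $D_3$, so $D_3=\emptyset$ and $\abs{D_3}=0$. If it does contain monomials of the two types and there is $c\in\mathcal{A}$ with $c_2=a_2$, $\II_c\neq\II_{(0,\overline{d^\perp})}$, such that $M_c(d^\perp)$ also contains monomials of the two types, then we are in case (a.1), so $D_3$ consists of the $n_{\overline{d^\perp}}$ traces $\T_{a_2}(\xi_{a_2}^r x_0x_2^{a_2})$ together with the single extra polynomial $(x_0-1)(x_1-1)$; these are distinct by the distinctness of the elements of $D$, so $\abs{D_3}=n_{\overline{d^\perp}}+1$. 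Otherwise we are in case (a.2) and $\abs{D_3}=n_{\overline{d^\perp}}$. This is exactly the definition of $\epsilon_3$. Finally, $D_4=\{1\}$ when $d=q^s-1$ and $D_4=\emptyset$ otherwise, so $\abs{D_4}=\epsilon_4$. Summing the four contributions yields the stated formula.

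Conceptually this is a counting corollary of Theorem~\ref{basedualPRM}, so I do not expect a serious obstacle; the only point requiring genuine care is the faithful reconciliation of the three-way case split in the definition of $D_3$ with the formula for $\epsilon_3$, together with the sanity check that within each constituent set the polynomials are genuinely distinct and that the constituent sets do not overlap — both of which are immediate, since a basis cannot contain repeated vectors.
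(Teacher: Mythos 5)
Your proposal is correct and matches the intended argument: the paper gives no explicit proof for this corollary because it is a direct counting consequence of Theorem~\ref{basedualPRM}, and your bookkeeping of $\abs{D_1},\dots,\abs{D_4}$ is exactly the reasoning the authors intend. One small remark: the sentence ``a basis consists of pairwise distinct nonzero vectors, hence all polynomials listed in $D$ are pairwise distinct'' reads as slightly circular; what actually justifies $\abs{D}=\dim$ is that the proof of Theorem~\ref{basedualPRM} establishes the linear independence of the \emph{full list} of evaluations indexed by the constructions defining $D_1,D_2,D_3,D_4$ (not merely of the underlying set after identifications), and a linearly independent list cannot contain repetitions, so the four families and the traces inside each $T_a$, $T_{a_2}$ are automatically pairwise distinct.
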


\begin{ex}
Consider the extension $\F_4\supset \F_2$ and let us compute the set $D$ for $d=4$. We have $d^\perp=2$ and, from Example \ref{exciclo}, the set of minimal representatives is $\mathcal{A}=\{(0,0),(1,0),(0,1),(1,1),(3,0),(0,3),(3,3),(2,1),(1,3),(3,1) \}$. We start by constructing the set $D_1$. We consider the minimal representatives $a$ such that $M_a(d^\perp)\neq \emptyset$, which by Lemma \ref{condicionesM} is equivalent to having $\II_a\cap \Delta_{\leq d^\perp}\neq \emptyset$. The only cyclotomic sets that satisfy that condition in this case are $\II_{(0,0)}$, $\II_{(1,0)}$, $\II_{(0,1)}$ and $\II_{(1,1)}$. Therefore, we have $U=\{(0,0),(1,0),(0,1),(1,1)\}$ and $D_1=\bigcup_{a\in U}T_a$. For example, assuming $\xi_{(1,0)}$ is a primitive element of $\F_{4}$, for $a=(1,0)$ we have 
$$
T_{(1,0)}=\{ \T_{(1,0)}(\xi_{(1,0)}^r x_1)\mid 0 \leq r\leq 1\}= \{ \xi_{(1,0)}^r x_1+ \xi^{2r}x_1^2 \mid 0 \leq r\leq 1\}.
$$
We also have $\abs{D_1}=\sum_{a\in U}n_a=7$. For $\abs{D_2}$, we consider $\mathcal{A}^1=\{0,1,3\}$. The only $a\in \mathcal{A}$ such that $M_a(d^\perp)$ contains monomials of the two types are the ones such that $\II_a\cap \Delta_{<d^\perp}\neq \emptyset $ and $\II_a\cap (\Delta_{d^\perp} \cup \Delta_{\overline{d^\perp}})\neq \emptyset$, according to Lemma \ref{condicionesM}. This is a subset of $U$, and from the elements of $U$, the ones that satisfy this condition are $(1,0)$ and $(0,1)$. For example, $\II_{(1,0)}\cap \Delta_{< 2}=(1,0)$ and $\II_{(1,0)}\cap \Delta_2=(2,0)$. Hence, looking at the second coordinate of $(1,0)$ and $(0,1)$, we have $V=\{0,1\}$, and $D_2=\bigcup_{a_2\in V}T_{a_2}$. For example, if we consider $\xi_1=\xi_{(1,0)}$ a primitive element of $\F_4$, for $a_2=1$ we have 
$$
\begin{aligned}
T_{1}=&\{(x_0-1)(\mathcal{T}_{1}(\xi_{1}^rx_2)+\mathcal{T}_{1}(\xi_{1}^r)(x_1-1))\mid 0\leq r\leq 1\}\\
=&\{(x_0-1)(\xi_1^rx_2 +\xi_1^{2r}x_2^2+(\xi_1^r+\xi_1^{2r})(x_1-1))\mid 0\leq r\leq 1\}.
\end{aligned}
$$
We have $\abs{D_2}=\sum_{a_2\in V}n_{a_2}=3$. One can check that $D_3=D_4=\emptyset$ in this case. Thus, the evaluation of the set $D_1\cup D_2$ is a basis for $\PRM_4^{\sigma,\perp}(2)$, and $\dim \PRM_4^{\sigma,\perp}(2)=10$.  
\end{ex}

\subsection{Subfield subcodes of projective Reed-Muller codes}
In this section we compute a basis for $\PRM_d^\sigma(2)$. The discussion gets more technical than in the previous case, but we can obtain explicit results as well. We start by considering some sets of polynomials that we use to construct a basis for the subfield subcode. We recall the notation $\mathcal{A}_{\leq d}=\{a\in\mathcal{A}\mid \II_a\subset \Delta_{\leq d}\}$ and $\mathcal{A}_{< d}=\{a\in\mathcal{A}\mid \II_a\subset \Delta_{< d}\}$. We also consider $\mathcal{A}^1_{\leq d}=\{a_2\in\mathcal{A}^1\mid \forall c_2\in \II_{a_2}, c_2\leq d\}$ for the univariate case. It is also important to recall the definition of homogenized trace from (\ref{trazahomo}). 

\begin{lem}\label{lemab1}
Let $1\leq d\leq 2(q^s-1)$ and let $\xi_{a}$ be a primitive element in $\F_{q^{n_a}}$. The image by the evaluation map of the polynomials in the set
$$
B_1=\bigcup_{a\in \mathcal{A}_{<d}} \{ x_0\mathcal{T}_a(\xi^r_a x_1^{a_1}x_2^{a_2})\mid 0\leq r\leq n_a-1  \},
$$
is in $\PRM_d^\sigma(2)$. Moreover, the evaluation of the polynomials in $B_1$ is linearly independent.
\end{lem}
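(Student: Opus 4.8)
The plan is to treat the two assertions separately, in each case reducing to the affine picture of Theorem~\ref{baseafin} together with the quotient description of Remark~\ref{remequiscero} and the criterion of Lemma~\ref{lemassc}. Fix $a=(a_1,a_2)\in\mathcal{A}_{<d}$ and $0\leq r\leq n_a-1$, and set $g=\mathcal{T}_a(\xi_a^r x_1^{a_1}x_2^{a_2})\in R=\fqs[x_1,x_2]$ and $f=x_0g$. First I would show $\ev(f)\in\PRM_d(2)$: since $a\in\mathcal{A}_{<d}$, every monomial $x_1^{c_1}x_2^{c_2}$ appearing in $g$ (exponents reduced modulo $q^s-1$) has $(c_1,c_2)\in\II_a\subset\Delta_{<d}$, hence $c_1+c_2<d$, so $\deg g<d$ and in the degree-$d$ homogenization $g^h\in S_d$ every monomial carries a strictly positive power of $x_0$; by Remark~\ref{remequiscero} each such monomial is congruent modulo $I(P^2)$ to $x_0x_1^{c_1}x_2^{c_2}$, whence $g^h\equiv x_0g\bmod I(P^2)$ and therefore $\ev(f)=\ev(g^h)\in\ev_d(S_d)=\PRM_d(2)$. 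Next I would show $\ev(f)\in\fq^n$ using Lemma~\ref{lemassc}: the three dehomogenizations of $f$ are $f(1,x_1,x_2)=g=\mathcal{T}_a(\xi_a^r x_1^{a_1}x_2^{a_2})$, whose image under $\ev_{\mathbb{A}^2}$ lies in $\RM_d^\sigma(2)\subseteq\fq^{q^{2s}}$ by Theorem~\ref{baseafin} (recall $a\in\mathcal{A}_{<d}\subset\mathcal{A}_{\leq d}$), together with $f(0,1,x_2)=0$ and $f(0,0,1)=0$, which evaluate to $\fq$ trivially. Hence $\ev(f)\in\PRM_d(2)\cap\fq^n=\PRM_d^\sigma(2)$, which gives the first claim for every element of $B_1$.

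For the linear independence I would restrict every evaluation vector to the affine chart $A_0=\{[1:Q_1:Q_2]:Q_1,Q_2\in\fqs\}\subset P^2$, identified with $\mathbb{A}^2$ over $\fqs$; there $x_0=1$, so $\ev(x_0g)$ restricted to the $A_0$-coordinates equals $\ev_{\mathbb{A}^2}(g)$ for any $g\in R$. Thus the projection of $\{\ev(f):f\in B_1\}$ onto the $A_0$-coordinates is exactly the image under $\ev_{\mathbb{A}^2}$ of $\bigcup_{a\in\mathcal{A}_{<d}}\{\mathcal{T}_a(\xi_a^r x_1^{a_1}x_2^{a_2})\mid 0\leq r\leq n_a-1\}$. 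Since $\mathcal{A}_{<d}\subset\mathcal{A}_{\leq d}$, this is a subset of the basis of $\RM_d^\sigma(2)$ furnished by Theorem~\ref{baseafin}, hence linearly independent; and a family of vectors admitting a linearly independent coordinate projection is itself linearly independent. This yields the second claim (and in passing shows the listed polynomials have pairwise distinct, nonzero evaluations).

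I do not anticipate a genuine obstacle here; the single step that requires care is the homogenization identity $\ev(x_0g)=\ev(g^h)$, where the hypothesis $a\in\mathcal{A}_{<d}$ — rather than merely $a\in\mathcal{A}_{\leq d}$ — is used crucially: it guarantees that $x_0$ divides \emph{every} monomial of $g^h$, so that Remark~\ref{remequiscero} applies term by term. Everything else is a direct appeal to the affine basis theorem and the chart decomposition $P^2=A_0\cup A_1\cup A_2$ already exploited in the proof of Lemma~\ref{lemassc}.
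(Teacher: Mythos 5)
Your proposal is correct and follows essentially the same route as the paper's proof: both arguments reduce to the affine basis of Theorem~\ref{baseafin} on the chart $[\{1\}\times\F_{q^s}^2]$, use the fact that $a\in\mathcal{A}_{<d}$ forces every monomial of the degree-$d$ homogenization to carry a positive power of $x_0$ so that Remark~\ref{remequiscero} gives $g^h\equiv x_0g\bmod I(P^2)$, invoke Lemma~\ref{lemassc} for the $\fq$-valuedness, and deduce linear independence from the independence of the restricted evaluations. The only cosmetic difference is the order in which the two membership conditions ($\PRM_d(2)$ and $\fq^n$) are verified.
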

\begin{proof}

The evaluation of these polynomials in $[\{1\}\times\F_{q^s}^2]$ is the same as the evaluation of the polynomials of the set
$$\bigcup_{a\in \mathcal{A}_{<d}} \{ \mathcal{T}_a(\xi^r_a x_1^{a_1}x_2^{a_2})\mid 0\leq r\leq n_a-1  \}
$$
in $\F_{q^s}^2$. This set of polynomials evaluates to $\fq$ by Theorem \ref{baseafin}, which means that the polynomials in $B_1$ evaluate to $\fq$ in $[\{1\}\times\F_{q^s}^2]$, and they clearly evaluate to $0$ in the rest of the points in $P^2$. By Lemma \ref{lemassc}, each of these polynomials evaluates to $\fq$. We have to see that these polynomials are equivalent modulo $S/I(P^2)$ to some homogeneous polynomials of degree $d$, because in that case these polynomials would have the same evaluation as some homogeneous polynomials of degree $d$, which means that their evaluation is in $\PRM_d^\sigma(2)$. Let $a\in \mathcal{A}_{<d}$. For $0\leq r\leq n_a-1$, we consider the polynomial $\mathcal{T}^h_a(\xi_a^rx_1^{a_1}x_2^{a_2})$, where we homogenize up to degree $d$. Having $a\in \mathcal{A}_{<d}$ means that, after reducing the exponents modulo $q^s-1$, the monomials $x_1^{c_1}x_2^{c_2}$ that appear in the support of $\mathcal{T}_a(\xi_a^rx_1^{a_1}x_2^{a_2})$ satisfy that $c_1+c_2<d$ (these exponents are precisely the elements of $\II_a\subset \Delta_{<d}$). Therefore, after homogenizing up to degree $d$, $x_0$ divides all the monomials in the support of $\mathcal{T}^h_a(\xi_a^rx_1^{a_1}x_2^{a_2})$. Taking into account the equation $x_0^2-x_0$ from $I(P^2)$, this means that $\mathcal{T}^h_a(\xi_a^rx_1^{a_1}x_2^{a_2})\equiv x_0\mathcal{T}_a(\xi_a^rx_1^{a_1}x_2^{a_2})\bmod I(P^2)$ in this case. Hence, the evaluation of the polynomials in $B_1$ is in $\PRM_d^\sigma(2)$.

We finish the proof by noting that their evaluation is linearly independent precisely since their evaluation in $[\{1\}\times\F_{q^s}^2]$ is linearly independent by Theorem \ref{baseafin}.
\end{proof}

\begin{ex}\label{ejemploprimario1}
We consider an extension $\F_{16}\supset \F_{2}$ (i.e., $q=2$, $s=4$), and the goal of the examples in this section is to compute a basis for $\PRM^\sigma_{21}(2)$. We start by computing the set $B_1$, which is a set of linearly independent polynomials that evaluate to $\fq$ by the previous discussion. First of all, we need to consider all the cyclotomic sets $\II_a$ such that $\II_a\subset \Delta_{<21}$. For each of those cyclotomic sets, we consider the corresponding set of traces from $B_1$. For example, we can consider the cyclotomic set $\II_{(1,1)}=\{(1,1),(2,2),(4,4),(8,8) \}$, which gives us the following $n_{(1,1)}=4$ polynomials ($\xi$ is a primitive element in $\F_{2^4}$):
$$
\begin{aligned}
\mathcal{T}_{(1,1)}^h(x_1x_2)&=x_0^{19}x_1x_2+x_0^{17}x_1^2x_2^2+x_0^{13}x_1^4x_2^4+x_0^5x_1^8x_2^8,\\
\mathcal{T}_{(1,1)}^h(\xi x_1x_2)&=\xi x_0^{19}x_1x_2+\xi^2x_0^{17}x_1^2x_2^2+\xi^4x_0^{13}x_1^4x_2^4+\xi^8 x_0^5x_1^8x_2^8, \\
\mathcal{T}_{(1,1)}^h(\xi^2 x_1x_2)&=\xi^2 x_0^{19}x_1x_2+\xi^4x_0^{17}x_1^2x_2^2+\xi^8x_0^{13}x_1^4x_2^4+\xi x_0^5x_1^8x_2^8, \\
\mathcal{T}_{(1,1)}^h(\xi^3 x_1x_2)&=\xi^3 x_0^{19}x_1x_2+\xi^6x_0^{17}x_1^2x_2^2+\xi^{12}x_0^{13}x_1^4x_2^4+\xi^9 x_0^5x_1^8x_2^8,
\end{aligned}
$$
where we see that we are homogenizing up to degree $d=21$. As we have said in the previous discussion, these polynomials are linearly independent because in $[\{1\}\times\fqs^2]$ they have the same evaluation as the traces $\mathcal{T}_{(1,1)}(\xi^r x_1x_2)$, $0\leq r\leq n_{(1,1)}-1$, that would appear in the affine case from Theorem \ref{baseafin}. And they clearly evaluate to $\fq$, as they evaluate to 0 in the rest of the points of $P^2$. We can continue doing this for all the other cyclotomic sets such that $\II_a\subset\Delta_{<21}$, and we obtain $\sum_{a\in \mathcal{A}_{<21}}n_a=127$ linearly independent polynomials that form $B_1$. 
\end{ex}

We consider now another set of homogeneous polynomials that will be linearly independent from $B_1$ and whose polynomials evaluate to $\fq$. We start with the case $d\leq q^s-1$, which is easier. Let us focus first on the cyclotomic sets $\II_a$ with $a\in\mathcal{A}_{\leq d}\setminus \mathcal{A}_{<d}$. Having $\II_a\cap \Delta_d\neq \emptyset$ implies that the corresponding homogeneous traces $\mathcal{T}_a^h(\xi_{a}^r x_1^{a_1}x_2^{a_2})$, $0\leq r\leq n_a-1$, with $\xi_a$ a primitive element in $\F_{q^{n_a}}$, have at least one monomial which is not divisible by $x_0$. Hence, although the evaluation of these traces in $[\{1\}\times\F_{q^s}^2]$ is going to be equal to the evaluation of $\mathcal{T}_a(\xi_{a}^r x_1^{a_1}x_2^{a_2})$ in $\F_{q^s}^2$, which has coordinates in $\fq$, the evaluation in $[\{0\}\times \{1\}\times\F_{q^s}]$ and $[0:0:1]$ does not necessarily have its coordinates in $\fq$, and, by Lemma \ref{lemassc}, these polynomials might not evaluate to $\fq$. By Lemma \ref{lemassc} and Theorem \ref{baseafin} in one variable, if a polynomial $f(x_0,x_1,x_2)$ evaluates to $\fq$ in $P^2$, $f(0,1,x_2)$ must be a linear combination of traces in the variable $x_2$. A natural idea is to consider linear combinations of homogenized traces such that, when setting $x_0=0,x_1=1$, we obtain that the evaluation of $f(0,1,x_2)$ in $\fqs$ is the same as some trace in the variable $x_2$. To do that, we introduce the following definition.

\begin{defn}
For each $a_2\in \mathcal{A}^1_{\leq d}$, we define the set
$$
Y_{a_2}:=\{a\in \mathcal{A}_{\leq d}\mid \II_a=\II_{(\overline{d-c_2},c_2)}\text{ for some } c_2\in \II_{a_2} \}.
$$
\end{defn}

\begin{rem}\label{ordenciclo}
Recall that, with the order chosen for the cyclotomic sets, we have that $c\in \mathcal{A}$ implies $c_2\in \mathcal{A}^1$. Therefore, in this case $c\in Y_{a_2}$ implies $c_2=a_2$.  
\end{rem}

\begin{ex}\label{ejtoy}
Let us continue with the setting of Example \ref{ejemploprimario1}. We have $d=21$ and $\overline{d}=6$, and we will compute $Y_{a_2}$ for $a_2=0,1$. To do so, we consider first the univariate cyclotomic sets:
$$
\II_0=\{0\},\II_1=\{1,2,4,8\},\II_3=\{3,6,9,12\},\II_5=\{5,10\},\II_7=\{7,11,13,14\},\II_{15}=\{15\}.
$$
For $a_2=0$, we just have $Y_0=\{(3,0)\}$ because $\II_{(3,0)}=\II_{(6,0)}=\II_{(\overline{d-0},0)}$. For $a_2=1$, we need to obtain the minimal elements of the cyclotomic sets $\II_{(\overline{21-1},1)},\II_{(\overline{21-2},2)},\II_{(\overline{21-4},4)}$ and $\II_{(21-8,8)}$. We have
$$
\begin{aligned}
\II_{(5,1)}&=\{ (5,1),(10,2),(5,4),(10,8) \},\\
\II_{(4,2)}&=\{ (2,1),(4,2),(8,4),(1,8) \},\\
\II_{(2,4)}&=\{ (8,1),(1,2),(2,4),(4,8) \}, \\
\II_{(13,8)}&=\{ (11,1),(7,2),(14,4),(13,8) \}.
\end{aligned}
$$
Hence, $Y_1=\{(2,1),(5,1),(8,1),(11,1)\}$. 
\end{ex}

The idea behind the definition of $Y_{a_2}$ is the following: if we consider $c\in Y_{a_2}$ and the polynomial $\mathcal{T}_c^h(\xi_c^r x_1^{c_1}x_2^{c_2})$, then, if $\overline{d-c_2}=d-c_2$, we have the monomial $x_1^{d-c_2}x_2^{c_2}$ in the support of this homogenized trace (if $\overline{d-c_2}<d-c_2$, we would have the monomial $x_0^{q^s-1}x_1^{\overline{d-c_2}}x_2^{c_2}$ instead), and when setting $x_0=0$ and $x_1=1$, we obtain the monomial $x_2^{c_2}$, with $c_2\in \II_{a_2}$, in the support of $f(0,1,x_2)$. We have
\begin{equation}\label{condicionunion}
\overline{d-c_2}=d-c_2 \iff d-c_2\leq q^s-1 \iff d-(q^s-1)\leq c_2.
\end{equation}

In fact, it is clear that all the monomials that we obtain from this homogenized trace when setting $x_0=0,x_1=1$, are monomials $x_2^{c_2}$ with $c_2\in \II_{a_2}$. Thus, the traces associated to $c\in Y_{a_2}$ give monomials $x_2^{c_2}$ with $c_2\in \II_{a_2}$ when setting $x_0=0,x_1=1$.

The case with $\II_{a_2}=\II_{\overline{d}}$ is slightly more complicated, since in this case we have two monomials, $x_1^{q^s-1}x_2^{\overline{d}}$ and $x_2^{d}$ (if $d\geq q^s$), of degree $d$ with different evaluation in $P^2$ which give the same monomial $x_2^{\overline{d}}$ when setting $x_0=0,x_1=1$. This means that two different homogenized traces from different cyclotomic sets can have $x_2^{\overline{d}}$ in its support. We will exclude this case in what follows now as we will study this case separately later. Hence, for a given $a_2\in \mathcal{A}^1_{\leq d}$ with $\II_{a_2}\neq \II_{\overline{d}}$ and $\xi_{a_2}$ a primitive element in $\F_{q^{n_{a_2}}}$, we can consider the sum
$$
\funf=\sum_{c\in Y_{a_2}}\mathcal{T}_c^h(\xi_{a_2}^r x_1^{c_1}x_2^{c_2}),
$$
for $0\leq r\leq n_{a_2}$, and, due to the previous discussion, we obtain that in the support of $\funf(0,1,x_2)$ there are only monomials of the form $x_2^{\gamma_2}$ with $\gamma_2\in \II_{a_2}$. Each monomial $x_2^{\gamma_2}$ can only come from one of the homogenized traces since, if $\gamma_2\neq \overline{d}$, this monomial can only come from the monomial $x_1^{d-\gamma_2}x_2^{\gamma_2}$ in the support of $\funf$, with $\gamma_2\geq d-(q^s-1)$ due to (\ref{condicionunion}). Moreover, the coefficient of each of these monomials $x_2^{\gamma_2}$ is the same that this monomial would have in $\mathcal{T}_{a_2}(\xi_{a_2}^rx_2^{a_2})$ because we saw in Remark \ref{ordenciclo} that $c_2=a_2$ for every $c\in Y_{a_2}$. If $d\leq q^s-1$, the condition from Equation (\ref{condicionunion}) is always satisfied for any $\gamma_2\in \II_{a_2}$. In this case, if we have
$$
\bigcup_{c_2\in \II_{a_2}} \II_{(d-c_2,c_2)}\subset \Delta_{\leq d},
$$
then $Y_{a_2}$ contains all the minimal elements $\gamma\in \mathcal{A}$ such that $\II_\gamma=\II_{(d-\gamma_2,\gamma_2)}$. Therefore, we have all the monomials $x_1^{d-\gamma_2}x_2^{\gamma_2}$, for $\gamma_2\in \II_{a_2}$, in the support of $\funf$, and we obtain $\funf(0,1,x_2)=\mathcal{T}_{a_2}(\xi_{a_2}^rx_2^{a_2})$. The polynomials $\funf$ are homogeneous of degree $d$ and, by Lemma \ref{lemassc}, they evaluate to $\fq$. Thus, their evaluation is in $\PRM_d^\sigma(2)$.

For $d\geq q^s$, we can consider instead the condition
\begin{equation}\label{condy}
\bigcup_{c_2\in \II_{a_2},c_2>d-(q^s-1)} \II_{(d-c_2,c_2)}\subset \Delta_{\leq d}.
\end{equation}
We avoid the case $c_2=d-(q^s-1)=\overline{d}$ as we will study it later, and we consider only $c_2> d-(q^s-1)$ in order to satisfy Equation (\ref{condicionunion}). Reasoning as in the previous case, if the previous condition is satisfied, then $\funf(0,1,x_2)$ has in its support all the terms from $\mathcal{T}_{a_2}(\xi_{a_2}^rx_2^{a_2})$ with degree greater than $d-(q^s-1)=\overline{d}$. We claim that, in this situation, it is always possible to construct a polynomial $\fung$ whose evaluation is in $\PRM_d^\sigma(2)$ and such that $\fung(1,x_1,x_2)=\funf(1,x_1,x_2)$, $\fung(0,1,x_2)=\mathcal{T}_{a_2}(\xi_{a_2}^rx_2^{a_2})$, and $\fung(0,0,1)=0$.

We first note that, in this situation, we can homogenize the equations of the field and obtain homogeneous polynomials of degree $d$. By this, what we mean is that we can consider a multiple of $x_i^{q^s}-x_i$, for $i=1,2$, and homogenize it up to degree $d$. If this multiple has degree less than $d$, then that homogenized polynomial evaluates to the 0 vector in $P^2$. However, when the degree of this multiple is exactly equal to $d\geq q^s$, we can obtain the following polynomials by multiplying the field equations by monomials and then homogenizing:
$$
\left(x_1^{c_1}x_2^{c_2-1}(x_2^{q^s}-x_2) \right)^h=\left( x_1^{c_1}x_2^{c_2+q^s-1}-x_1^{c_1}x_2^{c_2} \right)^h=x_1^{c_1}x_2^{c_2+q^s-1}-x_0^{q^s-1}x_1^{c_1}x_2^{c_2},
$$
where we are assuming that $c_1+c_2+q^s-1=d$ and $c_2>0$. We note that we only consider $d\leq 2(q^s-1)$ (for a higher degree $\PRM_d(2)$ is the whole space). Thus, $c_1+c_2=\overline{d}$. Using the other field equation, we can get
$$
\left(x_1^{c_1-1}x_2^{c_2}(x_1^{q^s}-x_1) \right)^h=\left( x_1^{c_1+q^s-1}x_2^{c_2}-x_1^{c_1}x_2^{c_2} \right)^h=x_1^{c_1+q^s-1}x_2^{c_2}-x_0^{q^s-1}x_1^{c_1}x_2^{c_2},
$$

All of these polynomials are equivalent to $x_1^{c_1}x_2^{c_2}(1-x_0)$ in $S/I(P^2)$, which is a more compact way of writing them, and we will refer to them as \textit{homogenized field equations}. Although this last expression is not homogeneous, it has the same evaluation in $P^2$ as a homogeneous polynomial of degree $d$, which implies that its evaluation is also in $\PRM_d(2)$. With this in mind, we have that, for any $0\leq c_2\leq \overline{d}-1$, the polynomial $x_1^{\overline{d}-c_2}x_2^{c_2}(1-x_0)$ can be seen as a homogeneous polynomial of degree $d$, and its evaluation in $[\{1\}\times\fqs^2]$ is the zero vector, in $[\{0\}\times \{1\}\times\fqs]$ it is the same as the evaluation of $x_2^{c_2}$, and it is $0$ in $[0:0:1]$. Moreover, the polynomial $x_1x_2^{c_2}(1-x_0)$ has the same evaluation. For $c_2=\overline{d}$, we have the polynomial $x_2^{\overline{d}}(1-x_0)$, but in this case the evaluation at $[0:0:1]$ of this polynomial is equal to 1. This polynomial will only be considered later when we study the case with $\II_{a_2}=\II_{\overline{d}}$.

As a consequence, if we add to $\funf$ a homogenized field equation, the evaluation of the resulting polynomial in $[\{1\}\times\fqs^2]$ does not change, and when setting $x_0=0,x_1=1$, we obtain $\funf(0,1,x_2)+x_2^{c_2}$, for some $0\leq c_2\leq \overline{d}-1$. Hence, if $\II_{a_2}\neq \II_{\overline{d}}$, and if we have the condition $\bigcup_{c_2\in \II_{a_2},c_2>d-(q^s-1)} \II_{(d-c_2,c_2)}\subset \Delta_{\leq d}$ (we recall that, under this assumption, $\funf(0,1,x_2)$ has in its support all the terms from $\mathcal{T}_{a_2}(\xi_{a_2}^rx_2^{a_2})$ with degree greater than $\overline{d}$), then, adding adequate multiples of the homogenized field equations, we can obtain a polynomial $\fung$ such that $\fung(1,x_1,x_2)=\funf(1,x_1,x_2)$, $\fung(0,1,x_2)=\mathcal{T}_{a_2}(\xi_{a_2}^rx_2^{a_2})$, and $\fung(0,0,1)=0$. Therefore, the polynomial $\fung$ is defined as the polynomial obtained by adding the necessary multiples of the homogenized field equations to $\funf$ to obtain $\fung(0,1,x_2)=\mathcal{T}_{a_2}(\xi_{a_2}^rx_2^{a_2})$. Because of all the previous discussion, it is clear that the evaluation of $\fung$ is in $\PRM_d^\sigma(2)$.

Moreover, we see that the polynomial
$$
\funh=x_0\left(\sum_{c\in Y_{a_2}}\mathcal{T}_c(\xi_{a_2}^r x_1^{c_1}x_2^{c_2})\right)+(1-x_0)x_1\mathcal{T}_{a_2}(\xi_{a_2}^r x_2^{a_2})
$$
has the same evaluation as the polynomial $\fung$, which means that its evaluation is also in $\PRM_d^\sigma(2)$. 

Furthermore, avoiding the case in which $\II_{a_2}=\II_{\overline{d}}$, we can express both the case with $d\geq q^s$ and $d\leq q^s-1$ using the same polynomials and conditions. To see this, we first introduce the following notation:
$$
Y=\left\{a_2\in \mathcal{A}^1_{\leq d}\mid \II_{a_2}\neq \II_{\overline{d}} \text{ such that } \bigcup_{c_2\in \II_{a_2},c_2>d-(q^s-1)}\II_{(d-c_2,c_2)}\subset \Delta_{\leq d}\right\}.
$$
The elements of $Y$ are just the $a_2\in \mathcal{A}^1_{\leq d}$ such that we can construct a polynomial in $\PRM_d^\sigma(2)$ whose evaluation in $[\{0\}\times \{1\}\times\fqs]$ is equal to some trace of $x_2^{a_2}$ with the previous ideas. In the case $d\leq q^s-1$, the condition in the set $Y$ is the same that we were considering before. Note that for $a_2=0$ and $d=q^s-1$, the condition that we had for $d\leq q^s-1$ was 
$$
\bigcup_{c_2\in \II_{a_2}} \II_{(d-c_2,c_2)}=\II_{(q^s-1,0)}=\{(q^s-1,0)\}\subset \Delta_{\leq q^s-1},
$$
which is always satisfied. The condition that we have used for $Y$ when $a_2=0$ and $d=q^s-1$ would be 
$$
\bigcup_{c_2\in \II_{a_2},c_2>d-(q^s-1)}\II_{(d-c_2,c_2)}=\emptyset \subset \Delta_{\leq d},
$$
which is always satisfied as well. The following result summarizes the previous discussion. 

\begin{lem}\label{lemab1b2}
Let $1\leq d\leq 2(q^s-1)$, and let $\xi_{a_2}$ be a primitive element in $\F_{q^{n_{a_2}}}$. The evaluation of the polynomials in the set
$$
\begin{aligned}
B_2=
&\bigcup_{a_2\in Y}\left\{x_0\left(\sum_{c\in Y_{a_2}}\mathcal{T}_c(\xi_{a_2}^r x_1^{c_1}x_2^{c_2})\right)+(1-x_0)x_1\mathcal{T}_{a_2}(\xi_{a_2}^r x_2^{a_2}),0\leq r\leq n_{a_2}-1\right\}
\end{aligned}
$$
is in $\PRM_d^\sigma(2)$. Moreover, the evaluation of the polynomials in $B_1\cup B_2$ is linearly independent.
\end{lem}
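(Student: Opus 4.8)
The plan is to treat the two assertions of the lemma separately. The first one---that $\ev(\funh)\in\PRM_d^\sigma(2)$ for every $a_2\in Y$ and every $0\le r\le n_{a_2}-1$---requires essentially nothing new, since the discussion preceding this statement already exhibits, for each such pair $(a_2,r)$, a homogeneous polynomial of degree $d$ (namely $\funf$, to which suitable homogenized field equations are added when $d\geq q^s$) having exactly the same value as $\funh$ at every point of $P^2$, and that homogeneous polynomial evaluates to $\fq$ in $P^2$ by Lemma \ref{lemassc}; hence so does $\funh$, and therefore $\ev(\funh)\in\PRM_d^\sigma(2)$. The exclusion $\II_{a_2}\neq\II_{\overline{d}}$ built into the definition of $Y$ is precisely what makes this work, by ruling out the ambiguous top-degree monomial $x_2^d$.

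For the linear independence of the evaluation of $B_1\cup B_2$, the plan is a two-step elimination along the affine chart decomposition $P^2=A_0\cup A_1\cup A_2$ from the proof of Lemma \ref{lemassc}, with $A_1=\{[0:1:\beta]\mid\beta\in\F_{q^s}\}$. Suppose an $\F_{q^s}$-linear combination of the evaluations of the polynomials in $B_1\cup B_2$ vanishes. Every polynomial of $B_1$ is divisible by $x_0$, so its evaluation is zero at all points of $A_1$; and, by the very shape of $\funh$ (whose second summand is $(1-x_0)x_1\mathcal{T}_{a_2}(\xi_{a_2}^r x_2^{a_2})$), one has $\funh(0,1,x_2)=\mathcal{T}_{a_2}(\xi_{a_2}^r x_2^{a_2})$. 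Restricting the vanishing combination to the coordinates indexed by the points of $A_1$ therefore deletes the entire $B_1$-part and leaves a vanishing combination of the vectors $\ev_{\mathbb{A}^1}(\mathcal{T}_{a_2}(\xi_{a_2}^r x_2^{a_2}))$, $a_2\in Y$, $0\le r\le n_{a_2}-1$. Since $Y\subseteq\mathcal{A}^1_{\leq d}$, these vectors form a subfamily of the basis of the univariate subfield subcode given by Theorem \ref{baseafin}, hence are linearly independent, so all coefficients attached to the polynomials of $B_2$ must vanish.

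What then remains is a vanishing $\F_{q^s}$-linear combination of the evaluations of the polynomials in $B_1$ alone, and Lemma \ref{lemab1} forces the remaining coefficients to vanish as well, which completes the proof. I do not expect a genuine obstacle here: the only point requiring a little care is the observation that $\funh$ restricts on $A_1$ exactly to the univariate trace $\mathcal{T}_{a_2}(\xi_{a_2}^r x_2^{a_2})$---immediate from the definition of $\funh$---after which the argument reduces to the chart-by-chart bookkeeping used throughout this section, together with citations of Lemma \ref{lemab1}, Theorem \ref{baseafin}, and the discussion preceding the statement.
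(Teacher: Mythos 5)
Your proposal is correct and follows essentially the same route as the paper's own proof: membership in $\PRM_d^\sigma(2)$ is delegated to the preceding discussion (that $\funh$ has the same evaluation as a homogeneous degree-$d$ polynomial evaluating to $\fq$ by Lemma~\ref{lemassc}), and linear independence is obtained by restricting to $[\{0\}\times\{1\}\times\F_{q^s}]$, where $B_1$ vanishes and $B_2$ reduces to a subfamily of the univariate trace basis from Theorem~\ref{baseafin}. The only cosmetic difference is your two-step elimination framing versus the paper's phrasing as ``$B_2$ is internally independent, and independent from $B_1$''; the underlying facts and citations are identical.
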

\begin{proof}
In the previous discussion we have showed that, if $d\geq q^s$, all the polynomials in $B_2$ have their evaluation in $\PRM_d(2)$, and we also checked that they evaluate to $\fq$ due to Lemma \ref{lemassc}. For the case $d\leq q^s-1$, these polynomials have the same evaluation as $\funf$, which means that their evaluation is also in $\PRM_d^\sigma(2)$.

The evaluation of the polynomials in $B_2$ is linearly independent since it is linearly independent in $[\{0\}\times \{1\}\times\fqs]$ by the affine case from Theorem \ref{baseafin}: in $[\{0\}\times \{1\}\times\fqs]$ we have univariate traces in $x_2$ from different cyclotomic sets. Moreover, the evaluation of the polynomials in $B_2$ is linearly independent from the evaluation of the polynomials in $B_1$ because the evaluation of the polynomials in $B_1$ is zero in $[\{0\}\times \{1\}\times\fqs]$.
\end{proof}

\begin{rem}\label{remfacil}
Let $a_2\in \mathcal{A}^1$, and let 
$$
Y'_{a_2}:=\{a\in \mathcal{A}_{\leq d}\setminus \mathcal{A}_{<d}\mid \II_a=\II_{(\overline{d-c_2},c_2)}\text{ for some } c_2\in \II_{a_2} \}.
$$
The set
$$
\begin{aligned}
B'_2=
&\bigcup_{a_2\in Y}\left\{x_0\left(\sum_{c\in Y'_{a_2}}\mathcal{T}_c(\xi_{a_2}^r x_1^{c_1}x_2^{c_2})\right)+(1-x_0)x_1\mathcal{T}_{a_2}(\xi_{a_2}^r x_2^{a_2}),0\leq r\leq n_{a_2}-1\right\}
\end{aligned}
$$
has the same properties as $B_2$ in Lemma \ref{lemab1b2}. This is because, for any $a\in\mathcal{A}_{<d}$, we have already considered $x_0\mathcal{T}_a(\xi_a^rx_1^{a_1}x_2^{a_2})$, $0\leq r \leq n_a-1$, in $B_1$, and $x_0 \mathcal{T}_a(\xi_{a_2}^rx_1^{a_1}x_2^{a_2})$ is in the span of those traces for any $0\leq r\leq n_{a_2}-1$.\end{rem}

\begin{ex}\label{ejemploprimario2}
Let us continue with the setting from Example \ref{ejtoy} and compute the polynomials in the set $B_2'$ defined in Remark \ref{remfacil}, although we will also compute all the sets needed to obtain $B_2$ as well. We first compute $Y$. We have that $a_2\in Y$ if the condition (\ref{condy}) is verified. In this case, $d=21$ and $d-(q^s-1)=\overline{d}=6$. For $a_2=0$ we have $\II_0=\{0\}$, and the union of cyclotomic sets in the left hand side of (\ref{condy}) is empty, which means that the condition is satisfied, and $0\in Y$.

For $a_2=1$, we verify that $\{(11,1),(7,2),(13,8),(14,4)\}=\II_{(21-8,8)}\subset \Delta_{\leq 21}$ (note that $8$ is the only element in $\II_1$ greater than $\overline{d}=6$). The condition (\ref{condy}) is satisfied and $1\in Y$. We do not consider $a_2=3$ now since $\II_3=\II_{\overline{d}}$, which is the case that we will cover in Example \ref{ejemploprimario3}. For $a_2\in \{5,7,15\}$, it is easy to check that we have $a_2\not\in Y$. For example, for $a_2=7$, the cyclotomic set $\II_{(21-7,7)}=\{(14,7),(7,11),(11,13),(13,14)\}\not\subset \Delta_{\leq 21}$, because, for instance, $(11,13)\not \in \Delta_{\leq 21}$. Therefore, we have 
$$
Y=\{0,1\}.
$$
Now, for each $a_2\in Y$, we have to compute $Y_{a_2}$. This was already done in Example \ref{ejtoy}, and $Y_0=\{(3,0)\}$ and $Y_1=\{(2,1),(5,1),(8,1),(11,1)\}$. By Remark \ref{remfacil}, we can consider the sets $Y'_0=\emptyset$ and $Y'_1=\{(11,1)\}$ ($\II_{(11,1)}$ is the only cyclotomic set that we have considered which is in $\Delta_{\leq 21}\setminus \Delta_{<21}$) instead of $Y_0,Y_1$, respectively, and the set $B'_2$ obtained satisfies the same properties as $B_2$. For simplicity, we construct $B'_2$ instead of $B_2$.

We now obtain the polynomials in $B'_2$. For $a_2=0$ we have $n_{a_2}=n_0=1$, which means that we only consider one polynomial, and we also have $Y'_0=\emptyset$. We consider the following polynomial:
$$
\{ (1-x_0)x_1\mathcal{T}_0(x_2^0)\}=\{(1-x_0)x_1\}.
$$

For the case $a_2=1$, we have $n_{a_2}=n_1=4$, and we have $Y'_1=\{(11,1)\}$. Thus, using Remark \ref{remfacil}, we consider the set of polynomials
$$
\{x_0\mathcal{T}_{(11,1)}(\xi_{1}^rx_1^{11}x_2)+(1-x_0)x_1\mathcal{T}_1(\xi_1^r x_2),0\leq r\leq n_1-1 \},
$$
where $\xi_{1}$ is a primitive element in $\F_{q^{n_1}}=\F_{16}$. Hence, we have constructed the set
$$
B'_2=\{(1-x_0)x_1\}\cup\{x_0\mathcal{T}_{(11,1)}(\xi_{1}^rx_1^{11}x_2)+(1-x_0)x_1\mathcal{T}_1(\xi_1^r x_2),0\leq r\leq n_1-1 \},
$$
whose size is $n_1+n_0=5$. In Example \ref{ejemploprimario1} we obtained that the cardinality of $B_1$ is 127. This means that $B_1\cup B'_2$ (and $B_1\cup B_2$) contains 132 polynomials whose evaluation is in $\PRM_{21}^\sigma(2)$, and the evaluation of these polynomials is linearly independent.
\end{ex}

We construct now one last set $B_3$. In the previous study, we have omitted the case in which $\II_{a_2}=\II_{\overline{d}}$. Therefore, we consider now $a_2\in\mathcal{A}^1$ be such that $\II_{a_2}=\II_{\overline{d}}$. We assume that $a_2\in \mathcal{A}^1_{\leq d}$ (if $a_2\not \in\mathcal{A}^1_{\leq d}$ the set $B_3$ will be the empty set). We follow a very similar reasoning to the one we did for the set $B_2$. For the case $1\leq d\leq q^s-1$, we were considering the polynomials 
$$
\funf=\sum_{c\in Y_{a_2}}\mathcal{T}_c^h(\xi_{a_2}^r x_1^{c_1}x_2^{c_2})
$$
to construct $B_2$. We can still consider such a polynomial if $\II_{a_2}=\II_d$, but in this case, $\funf(0,0,1)$ is the coefficient of $x_2^d$ in $\funf$, which is nonzero if $\II_{(0,d)}\subset \Delta_{\leq d}$. We have that $\funf(0,0,1)\in \fq$ only if $r=0$, and in that case the polynomial 
$$
\funl=x_0\left(\sum_{c\in Y_{a_2}}\mathcal{T}_c( x_1^{c_1}x_2^{c_2})\right)+(1-x_0)x_1\mathcal{T}_{a_2}(x_2^{a_2})+(1-x_0)(1-x_1)x_2^d
$$
has the same evaluation in $P^2$ as $f_{a_2}^0$. If $ \bigcup_{c_2\in \II_{a_2}}\II_{(d-c_2,c_2)}\subset \Delta_{\leq d}$, i.e., we have $f_{a_2}^0(0,1,x_2)=\mathcal{T}_{a_2}(x_2^{a_2})=\mathcal{T}_d(x_2^d)$, $\funl$ evaluates to $\fq$ and its evaluation is in $\PRM_d(2)$ (it has the same evaluation as $\funf$).

For the case $d\geq q^s$, we can consider the homogenized field equation $x_2^{\overline{d}}(1-x_0)$ to obtain a polynomial $\fung$ such that $g^r_{a_2}(1,x_1,x_2)=\funf(1,x_1,x_2)$ and $\fung(0,1,x_2)=\mathcal{T}_{a_2}(\xi_{a_2}^rx_2^{a_2})$. The problem that arises in this specific case is the following: the monomial $x_2^{\overline{d}}$ can be obtained when setting $x_0=0,x_1=1,$ from the monomials $x_1^{q^s-1}x_2^{\overline{d}}$ and $x_2^d$, both of them of degree $d$. Hence, following the previous notation, we have to study two different cases: if $\funf(0,1,x_2)$ has $x_2^{\overline{d}}$ in its support (which means that $x_1^{q^s-1}x_2^{\overline{d}}$ is in the support of $f$), or if $\funf(0,1,x_2)$ does not have $x_2^{\overline{d}}$ in its support. 

We start with the case in which $\funf(0,1,x_2)$ does not have $x_2^{\overline{d}}$ in its support, where we need to use $x_2^{\overline{d}}(1-x_0)$ to construct $\fung$. The main difference is that in this case $\fung(0,0,1)$ is equal to the coefficient of $x_2^{\overline{d}}$, which is nonzero. Therefore, by Lemma \ref{lemassc}, this coefficient has to be in $\fq$ if $\fung$ evaluates to $\fq$. We are also interested in obtaining $\fung(0,1,x_2)=\mathcal{T}_{a_2}(\xi_{a_2}^r x_2^{a_2})$ for some $0\leq r\leq n_{a_2}-1$. The coefficient of $x_2^{\overline{d}}$ in $\fung(0,1,x_2)$ is precisely the coefficient with which we considered $x_2^{\overline{d}}(1-x_0)$ when constructing $\fung$. Thus, the only possibility to have this coefficient in $\fq$ is that this coefficient is equal to 1 (the case $r=0$), and $g_{a_2}^0(0,1,x_2)=\mathcal{T}_{a_2}(x_2^{a_2})$. With this in mind, it is easy to check that $\funl$, as defined previously, has the same evaluation as the polynomial $g_{a_2}^0$ in $P^2$ in this case. As we argued for the set $B_2$, to ensure that the evaluation of $\funl$ is in $\PRM_d(2)$, we need to have $ \bigcup_{c_2\in \II_{a_2},c_2>d-(q^s-1)}\II_{(d-c_2,c_2)}\subset \Delta_{\leq d}$. This condition ensures that $f_{a_2}^0(0,1,x_2)$ has all the monomials from $\mathcal{T}_{a_2}(x_2^{a_2})$ in its support, except maybe the monomials $x_2^{c_2}$ with $c_2\in \{0,1,\dots,\overline{d}\}$, which appear in the support of $g_{a_2}^0(0,1,x_2)$ when adding to $f_{a_2}^0(0,1,x_2)$ the corresponding homogenized field equations.

Finally, we consider the case in which we have $x_1^{q^s-1}x_2^{\overline{d}}$ in the support of $\funf$, i.e., $\funf(0,1,x_2)$ has $x_2^{\overline{d}}$ in its support. If we look at the definition of $\funf$, this happens if and only if $\II_{(q^s-1,\overline{d})}\subset \Delta_{\leq d}$. This is equivalent to having that $\overline{d}$ is the maximal element of $\II_{a_2}$. Therefore, the condition $ \bigcup_{c_2\in \II_{a_2},c_2>d-(q^s-1)}\II_{(d-c_2,c_2)}=\emptyset\subset \Delta_{\leq d}$ is automatically satisfied in this case. This allows us to construct a polynomial $\funll$ which is very similar to $\funl$:
$$
\funll=\funl -x_0\mathcal{T}_{(q^s-1,a_2)}(x_1^{q^s-1}x_2^{a_2}).
$$
Indeed, we can subtract the polynomial $\mathcal{T}^h_{(q^s-1,\overline{d})}(x_1^{c_1}x_2^{c_2})$ from $f_{a_2}^0$, and, adding the corresponding homogenized field equations (we will need to use $x_2^{\overline{d}}(1-x_0)$ in order to obtain $\mathcal{T}_{a_2}(x_2^{a_2})$ when setting $x_0=0,x_1=1$, as we have subtracted the monomial $x_1^{q^s-1}x_2^{\overline{d}}$), we would get a polynomial $\fungg$ such that $\fungg(1,x_1,x_2)=f_{a_2}^0(1,x_1,x_2)-\mathcal{T}_{(q^s-1,a_2)}(x_1^{q^s-1}x_2^{a_2})$, $\fungg(0,1,x_2)=\mathcal{T}_{a_2}(x_2^{a_2})$, $\fungg(0,0,1)=1$. Hence, the polynomial $\funll$ has the same evaluation as the polynomial $\fungg$, which means that the evaluation of $\funll$ is in $\PRM_d^\sigma(2)$. 

On the other hand, we saw previously that the condition $ \bigcup_{c_2\in \II_{a_2},c_2>d-(q^s-1)}\II_{(d-c_2,c_2)}\subset \Delta_{\leq d}$ is satisfied in this case. Hence, adding homogenized field equations to $\funf$ as we did to obtain the set $B_2$, we can obtain a polynomial $\fung$ such that $\fung(1,x_1,x_2)=\funf(1,x_1,x_2)$, $\fung(0,1,x_1)=\mathcal{T}_{a_2}(\xi_{a_2}^r x_2^{a_2}), \fung(0,0,1)=0$. Note that in this case we are not using the homogenized field equation $x_2^{\overline{d}}(1-x_0)$ to construct $\fung$ since we already have the monomial $x_1^{q^s-1}x_2^{\overline{d}}$ in the support of $\funf$, which reduces to $x_2^{\overline{d}}$ when setting $x_0=0,x_1=1$. It is easy to check that the polynomial 
$$
\funh=x_0\left(\sum_{c\in Y_{a_2}}\mathcal{T}_c(\xi_{a_2}^r x_1^{c_1}x_2^{c_2})\right)+(1-x_0)x_1\mathcal{T}_{a_2}(\xi_{a_2}^rx_2^{a_2}),
$$
where $\xi_{a_2}$ is a primitive element in $\F_{q^{n_{a_2}}}$, has the same evaluation in $P^2$ as $\fung$. Therefore, the evaluation of the polynomials $\funh$ is equivalent modulo $S/I(P^2)$ to the evaluation of some homogeneous polynomials of degree $d$, and they evaluate to $\fq$, which means that the evaluation of the polynomials $\funh$ is in $\PRM_d^\sigma(2)$. We can now define the set $B_3$ in the following way:

\begin{enumerate}
    \item[(a)] If $\II_{(q^s-1,\overline{d})}\subset \Delta_{\leq d}$, we set $B_3=\{\funl - x_0\mathcal{T}_{(q^s-1,a_2)}(x_1^{q^s-1}x_2^{a_2})\}\cup \{h_{a_2}^r, 0\leq r\leq n_{a_2} -1\} $.
    \item[(b)] If $\II_{(q^s-1,\overline{d})}\not\subset \Delta_{\leq d}$:
    \begin{enumerate}
        \item[(b.1)] If $\condicion$, we set $B_3=\{\funl\}$.
        \item[(b.2)] We set $B_3=\emptyset$ otherwise.
    \end{enumerate}
\end{enumerate}

With this definition, we can summarize everything discussed thus far in the following result.

\begin{lem}\label{lemab1b2b3}
Let $1\leq d\leq 2(q^s-1)$ and let $a_2\in \mathcal{A}^1$ such that $\II_{a_2}=\II_{\overline{d}}$. If $B_3\neq \emptyset$, the evaluation of the set $B_3$ is in $\PRM_d^\sigma(2)$, and the evaluation of the set $B=B_1\cup B_2\cup B_3$ is linearly independent.
\end{lem}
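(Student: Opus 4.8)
The first assertion needs no new argument; it just collects what was established in the discussion preceding the statement. In case (a), the polynomial $\funll=\funl-x_0\mathcal{T}_{(q^s-1,a_2)}(x_1^{q^s-1}x_2^{a_2})$ and the $\funh$ were built to have, on $P^2$, the same evaluation as $\fungg$ and $\fung$, which are homogeneous of degree $d$ and evaluate to $\fq$ by Lemma \ref{lemassc}; in case (b.1), $\funl$ has the same evaluation on $P^2$ as $g_{a_2}^0$, and $\condicion$ is exactly what forces $\ev(g_{a_2}^0)\in\PRM_d^\sigma(2)$. So I would simply quote this to obtain $\ev(B_3)\subset\PRM_d^\sigma(2)$ when $B_3\neq\emptyset$.

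For the linear independence of $\ev(B)$, the plan is to test a hypothetical relation $\sum_{g\in B}\alpha_g\,\ev(g)=0$ against the partition $P^2=[\{1\}\times\fqs^2]\cup[\{0\}\times\{1\}\times\fqs]\cup\{[0:0:1]\}$, one block at a time. Write $v_i=\sum_{g\in B_i}\alpha_g\,\ev(g)$, so $v_1+v_2+v_3=0$. First I would restrict to $[\{0\}\times\{1\}\times\fqs]$: this kills $v_1$ (every element of $B_1$ is a multiple of $x_0$), sends each element of $B_2$ to the univariate trace $\mathcal{T}_{a_2}(\xi_{a_2}^rx_2^{a_2})$ with $a_2\in Y$, and sends each element of $B_3$ to a univariate trace of the cyclotomic set $\II_{\overline{d}}$. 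Since $\II_{a_2}\neq\II_{\overline{d}}$ for all $a_2\in Y$ by the definition of $Y$, all these traces lie in pairwise distinct one-variable cyclotomic sets, so by the one-variable case of Theorem \ref{baseafin} together with Remark \ref{remafin} the restrictions of $v_2$ and of $v_3$ must each vanish. From the first, examining each $a_2\in Y$ separately, I get $\alpha_g=0$ for all $g\in B_2$, hence $v_2=0$.

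Next I would dispose of $v_3$. In case (b.1), the restriction of $v_3$ is $\alpha_{\funl}\,\mathcal{T}_{a_2}(x_2^{a_2})$, and $\mathcal{T}_{a_2}(x_2^{a_2})$ has nonzero evaluation, being the $r=0$ member of the affine basis of Theorem \ref{baseafin}; hence $\alpha_{\funl}=0$, then $v_1=0$, and since the elements of $B_1$ restrict on $[\{1\}\times\fqs^2]$ to the linearly independent affine traces of Lemma \ref{lemab1}, all the remaining coefficients are $0$. In case (a), $\funll$ and $h_{a_2}^0$ (the $r=0$ member of the $\funh$ family) have the same restriction $\mathcal{T}_{a_2}(x_2^{a_2})=\mathcal{T}_{a_2}(\xi_{a_2}^0x_2^{a_2})$ to $[\{0\}\times\{1\}\times\fqs]$, while the $\funh$ with $1\le r\le n_{a_2}-1$ restrict to the remaining basis traces; so the vanishing of the restriction of $v_3$ yields $\alpha_{\funh}=0$ for $r\ge1$ and $\alpha_{\funll}+\alpha_{h_{a_2}^0}=0$. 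I would then evaluate $v_1+v_3=v_1+\alpha_{\funll}\bigl(\ev(\funll)-\ev(h_{a_2}^0)\bigr)$ at the single point $[0:0:1]$, where $v_1$ and $\ev(h_{a_2}^0)$ vanish while $\ev(\funll)$ takes the value $1$ (because $\fung(0,0,1)=0$ and $\funll(0,0,1)=1$), which forces $\alpha_{\funll}=0$ and hence $\alpha_{h_{a_2}^0}=0$; then $v_1=0$ and I conclude as in case (b.1).

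The step I expect to need the most care is case (a) above: $B_3$ there consists of $n_{a_2}+1$ polynomials whose restrictions to $[\{0\}\times\{1\}\times\fqs]$ span only an $n_{a_2}$-dimensional space, so counting cyclotomic sets alone does not close the argument. The point, and the reason the construction keeps $\funll$ apart from the $\funh$, is that the single extra direction $\ev(\funll)-\ev(h_{a_2}^0)$ is invisible on $[\{0\}\times\{1\}\times\fqs]$ and is detected only at $[0:0:1]$ --- which is precisely why $\funl$ carries the summand $(1-x_0)(1-x_1)x_2^d$ and $\funll$ the correction $-x_0\mathcal{T}_{(q^s-1,a_2)}(x_1^{q^s-1}x_2^{a_2})$.
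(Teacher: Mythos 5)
Your proof is correct and follows essentially the same approach as the paper: the first assertion is quoted from the preceding discussion, and linear independence is settled by the same two tests, evaluating at $[0:0:1]$ to isolate $\funl$ (or $\funll$) and restricting to $[\{0\}\times\{1\}\times\fqs]$ to handle $B_2$ and the $h_{a_2}^r$'s, with Lemma \ref{lemab1} finishing $B_1$. Your writeup is somewhat more explicit than the paper's terse appeal to ``the same argument as in Lemma \ref{lemab1b2},'' and the remark about the extra direction $\ev(\funll)-\ev(h_{a_2}^0)$ being visible only at $[0:0:1]$ is a nice structural observation, but the strategy is the same.
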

\begin{proof}
In the previous discussion we have seen that, under the stated conditions, the evaluation of the polynomials in $B_3$ is in $\PRM_d^\sigma(2)$, i.e., for each polynomial in $B_3$, a homogeneous polynomial of degree $d$ with the same evaluation can be constructed, and it evaluates to $\fq$. 

The set $B_1\cup B_2$ is linearly independent due to Lemma \ref{lemab1b2}. The polynomial $\funl$ (respectively, the polynomial $\funl - x_0\mathcal{T}_{(q^s-1,a_2)}(x_1^{q^s-1}x_2^{a_2})$) is not contained in the span of $B_1\cup B_2$ since it is the only polynomial that we are considering with nonzero evaluation at $[0:0:1]$. With this in mind, the same argument as in Lemma \ref{lemab1b2} proves that the evaluation of the rest of polynomials in $B_3$ (if any) is linearly independent, and the evaluation of these polynomials is also linearly independent with the evaluation of the polynomials in $B_1\cup B_2$.
\end{proof}

\begin{rem}\label{remfacil2}
We can argue as in Remark \ref{remfacil} to construct simpler polynomials than the polynomials $h_{a_2}^r$ and $l_{a_2}$. This gives rise to a set $B_3'$ with the properties stated in Lemma \ref{lemab1b2b3}.
\end{rem}

\begin{ex}\label{ejemploprimario3}
Let us continue with the setting from \ref{ejemploprimario2}. We did not study the case $a_2=3$ because $\II_{a_2}=\II_3=\II_{\overline{d}}=\II_6$. This case is covered by Lemma \ref{lemab1b2b3}, and we construct the set $B'_3$ from Remark \ref{remfacil2} in this example. Following the statement of Lemma \ref{lemab1b2b3}, we check first if $\II_{(q^s-1,\overline{d})}\subset\Delta_{\leq d}$, for $d=21$, $\overline{d}=6$ and $q^s-1=15$. We have
$$
\II_{(15,6)}=\{(15,3),(15,6),(15,9),(15,12) \}.
$$
We see that $\II_{(15,6)}\not\subset\Delta_{\leq 21}$, for example we have $(15,9)$ with $15+9=24>21$.

Now we have to verify the condition (\ref{condy}). The only elements $c_2$ in $\II_{a_2}=\{ 3,6,9,12\}$ such that $c_2>\overline{d}$ are 9 and 12. The corresponding cyclotomic sets $\II_{(21-9,9)}$ and $\II_{(21-12,12)}$ are
$$
\begin{aligned}
\II_{(9,3)}=\{(9,3),(3,6),(12,9),(6,12)\},\\
\II_{(6,3)}=\{(6,3),(12,6),(3,9),(9,12) \}.
\end{aligned}
$$

Hence, we see that the condition (\ref{condy}) is satisfied since both cyclotomic sets are contained in $\Delta_{\leq 21}$. Therefore, we have to construct $\funl$, for which we have to compute $Y_3$. We have $\II_{(21-6,6)}=\II_{(15,3)}$ from before, but we have seen that this cyclotomic set is not contained in $\Delta_{\leq 21}$. Thus, $(15,3)\not \in Y_3$. On the other hand, we have just seen that $(6,3),(9,3)\in Y_{a_2}$, as both of them are contained in $\Delta_{\leq 21}$. The last cyclotomic set that we have to consider is the following:
$$
\begin{aligned}
\II_{(\overline{21-3},3)}=\{(3,3),(6,6),(9,9),(12,12) \},
\end{aligned}
$$
which is not contained in $\Delta_{\leq 21}$. Hence, $Y_3=\{(6,3),(9,3) \}$. Using Remarks \ref{remfacil} and \ref{remfacil2} in this case gives $Y'_3=Y_3$, which means that we have $B'_3=B_3$. The only polynomial in $B_3$ is
$$
l_{3}=x_0\left( \mathcal{T}_{(9,3)}(x_1^9x_2^3)+\mathcal{T}_{(6,3)}(x_1^6x_2^3)\right)+(1-x_0)x_1\mathcal{T}_3(x_2^3)+(1-x_0)(1-x_1)x_2^{21}.
$$

We obtain that there are 133 polynomials in $B_1\cup B_2 \cup B_3$, with linearly independent evaluation, and this evaluation is in $\PRM_{21}^\sigma(2)$.
\end{ex}

The following results show that the case where $1\leq d\leq q^s-1$ is particularly simple. 

\begin{lem}\label{lemaciclotomicostamano1}
Let $1\leq d\leq q^s-1$ .We have that $\abs{I_d}=1$ if and only if $d=\lambda\frac{q^s-1}{q-1}$, for some integer $1\leq \lambda\leq q-1$.
\end{lem}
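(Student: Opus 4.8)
The plan is to convert the cardinality condition $\abs{I_d}=1$ into an elementary divisibility statement and then count. Recall that $I_d$ is the cyclotomic set $\{q^i\,d : i\geq 0\}$ computed in $\Z_{q^s}$, where a class modulo $q^s-1$ that is nonzero is represented in $\{1,\dots,q^s-1\}$. Since we assume $d\geq 1$, the equality $\abs{I_d}=1$ holds exactly when $d$ is a fixed point of multiplication by $q$, i.e. $q\,d\equiv d\pmod{q^s-1}$, which is the same as $(q-1)\,d\equiv 0\pmod{q^s-1}$.

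First I would use the factorization $q^s-1=(q-1)N$ with $N:=\frac{q^s-1}{q-1}=1+q+\dots+q^{s-1}$. Cancelling the common factor $q-1$ in the congruence $(q-1)\,d\equiv 0\pmod{(q-1)N}$ shows that $\abs{I_d}=1$ is equivalent to $N\mid d$.

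Finally, I would invoke the hypothesis $1\leq d\leq q^s-1=(q-1)N$: the multiples of $N$ lying in this range are precisely $N,2N,\dots,(q-1)N$, that is, $d=\lambda N$ with $1\leq\lambda\leq q-1$. This proves both implications of the statement.

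There is essentially no obstacle in this argument; the only point requiring a little care is the convention on representatives (the class of $0$ modulo $q^s-1$ is written as $q^s-1$, not $0$, inside $\Z/\langle q^s-1\rangle$), which is what makes the fixed-point condition coincide exactly with the divisibility $N\mid d$ even in the boundary case $d=q^s-1$, and which also explains why $d=0$ is excluded from the statement.
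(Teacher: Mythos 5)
Your proof is correct and follows essentially the same route as the paper: both reduce $\abs{\II_d}=1$ to the fixed-point congruence $qd\equiv d\pmod{q^s-1}$, rewrite it as $(q-1)d\equiv 0\pmod{q^s-1}$, and then use the factorization $q^s-1=(q-1)\cdot\frac{q^s-1}{q-1}$ together with the range $1\leq d\leq q^s-1$ to pin down $\lambda$. Your explicit cancellation step and the remark about the representative convention (which handles the boundary case $d=q^s-1$, i.e.\ the class of $0$ mod $q^s-1$) are just slightly more spelled-out versions of what the paper compresses into a chain of equivalences.
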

\begin{proof}
We only need to observe that
$$
\begin{aligned}
\abs{I_d}=1 &\iff d q\equiv d\bmod q^s-1\iff d(q-1)=\lambda(q^s-1)=\lambda(q-1)\frac{q^s-1}{q-1} \\
&\iff d=\lambda\frac{q^s-1}{q-1}, \text{ for some } 1\leq \lambda\leq q-1.
\end{aligned}
$$
\end{proof}

\begin{prop}
Let $1\leq d\leq q^s-1$. Then $B_3\neq \emptyset$ if and only if $d$ is a multiple of $\frac{q^s-1}{q-1}$. In that situation 
$$
B_3=\{x_2^d\}.
$$
\end{prop}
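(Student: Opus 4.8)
The plan is to unwind the definition of $B_3$ in the range $1\le d\le q^s-1$, where $\overline d=d$, and to invoke Lemma~\ref{lemaciclotomicostamano1} to rephrase ``$d$ is a multiple of $\frac{q^s-1}{q-1}$'' as ``$n_d=1$'', i.e.\ $\II_d=\{d\}$; throughout, write $a_2:=\min\II_d\in\mathcal{A}^1$, so that $\II_{a_2}=\II_{\overline d}$. First I would dispose of case (a) of the construction of $B_3$: the element $(q^s-1,\overline d)=(q^s-1,d)$ lies in $\II_{(q^s-1,\overline d)}$ and has coordinate sum $q^s-1+d>d$, hence $\II_{(q^s-1,\overline d)}\not\subset\Delta_{\le d}$, so for $d$ in this range we are always in case (b). Consequently $B_3\neq\emptyset$ exactly when $a_2\in\mathcal{A}^1_{\le d}$ \emph{and} $\condicion$, and then $B_3=\{\funl\}$.

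For the forward implication I would assume $n_d\ge 2$ and show $B_3=\emptyset$. Since $d\in\II_d$, one has $\max\II_d\ge d$, and I split into two cases. If $\max\II_d>d$, then some element of $\II_{a_2}=\II_d$ exceeds $d$, so $a_2\notin\mathcal{A}^1_{\le d}$ and $B_3=\emptyset$ by definition. If $\max\II_d=d$, then necessarily $d<q^s-1$ (because $\II_{q^s-1}=\{q^s-1\}$ has $n=1$), so $0\notin\II_d$; set $e:=q^{\,n_d-1}d\bmod(q^s-1)\in\II_d$, so that $qe\equiv d\bmod(q^s-1)$, and note $e\neq d$ since $n_d\ge 2$, whence $1\le e\le d-1$ and therefore $1\le d-e\le q^s-2$. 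Then $c_2=e$ occurs in the union defining $\condicion$ (as $e\ge 1>d-(q^s-1)$) and $(d-e,e)\in\zmsdos$; multiplying by $q$ gives $(q(d-e)\bmod(q^s-1),\,d)\in\II_{(d-e,e)}$, and $q(d-e)\bmod(q^s-1)\ge 1$ because $(q^s-1)\nmid q(d-e)$ (here $\gcd(q,q^s-1)=1$ and $0<d-e<q^s-1$). This element has coordinate sum $\ge d+1>d$, so $\II_{(d-e,e)}\not\subset\Delta_{\le d}$; thus $\notcondicion$, we fall into case (b.2), and $B_3=\emptyset$.

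For the converse, assume $n_d=1$, so $\II_d=\{d\}$, $a_2=d$, and $n_{(0,d)}=n_{a_2}=1$; clearly $a_2\in\mathcal{A}^1_{\le d}$. The union in $\condicion$ is then just $\II_{(0,d)}=\{(0,d)\}\subset\Delta_{\le d}$, so we are in case (b.1) and $B_3=\{\funl\}$; it remains to simplify $\funl$. Unwinding the definition of $Y_{a_2}$, the only cyclotomic set of the form $\II_{(\overline{d-c_2},c_2)}$ with $c_2\in\II_d$ is $\II_{(0,d)}$, and since $\II_{(0,d)}\subset\Delta_{\le d}$ we get $Y_{a_2}=\{(0,d)\}$; hence $\sum_{c\in Y_{a_2}}\mathcal{T}_c(x_1^{c_1}x_2^{c_2})=\mathcal{T}_{(0,d)}(x_2^d)=x_2^d$ and $\mathcal{T}_{a_2}(x_2^{a_2})=\mathcal{T}_d(x_2^d)=x_2^d$, both traces consisting of a single summand since $n_{(0,d)}=n_d=1$. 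Substituting into the defining formula,
$$
\funl=x_0x_2^d+(1-x_0)x_1x_2^d+(1-x_0)(1-x_1)x_2^d=x_2^d\bigl(x_0+(1-x_0)x_1+(1-x_0)(1-x_1)\bigr)=x_2^d .
$$
By Lemma~\ref{lemaciclotomicostamano1}, $n_d=1$ is equivalent to $d$ being a multiple of $\frac{q^s-1}{q-1}$ (the multiplier automatically lying between $1$ and $q-1$ since $1\le d\le q^s-1$), which finishes both directions.

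The only step that is not pure bookkeeping with the definitions once $\overline d=d$ is used throughout is the wrap-around argument in the second paragraph: from an element $e\in\II_d$ with $qe\equiv d$, producing a point of $\II_{(d-e,e)}$ of degree strictly larger than $d$ in order to force $\notcondicion$. I expect that to be the main obstacle; everything else follows by direct substitution.
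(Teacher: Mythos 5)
Your proposal is correct and takes essentially the same route as the paper's proof: after noting that case (a) of the $B_3$ construction never arises for $d\le q^s-1$ (the paper leaves this implicit), both arguments reduce the forward direction to the same wrap-around contradiction — take $e\in\II_d$, $e\neq d$, and exhibit an element of $\II_{(d-e,e)}$ of degree $>d$ — and both compute $\funl=x_2^d$ directly for the converse. The only cosmetic differences are that you pick the specific $e=q^{n_d-1}d$ (so $l=1$) where the paper allows any $e\neq d$, and that you split the contrapositive into the subcases $\max\II_d>d$ and $\max\II_d=d$, whereas the paper folds the former into the hypothesis $a_2\in\mathcal{A}^1_{\le d}$.
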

\begin{proof}
If $d$ is a multiple of $\frac{q^s-1}{q-1}$, by Lemma \ref{lemaciclotomicostamano1}, we have that $\abs{\II_d}=1$ and $\II_{(0,d)}\subset \Delta_{\leq d}$. By Lemma \ref{lemab1b2b3}, $B_3=\{l_{d}\}$. We have $Y_{d}=\{(0,d)\}$ from its definition. Then, by the definition of $l_{d}$ we have $l_{d}=x_0\mathcal{T}_{(0,d)}(x_2^d)+(1-x_0)x_1\mathcal{T}_d(x_2^d)+(1-x_0)(1-x_1)x_2^d=x_0x_2^d+(1-x_0)x_1x_2^d+(1-x_0)(1-x_1)x_2^d=x_2^d$. 

On the other hand, if $B_3\neq\emptyset$ and we consider $a_2\in \mathcal{A}^1_{\leq d}$ with $\II_{a_2}=\II_d$, by Lemma \ref{lemab1b2b3} we have that $\condicionn$. Using Lemma \ref{lemaciclotomicostamano1}, we assume that $\abs{\II_{a_2}}>1$, and we will obtain a contradiction. Let $e\in\II_{a_2}$ with $e\neq d$. This implies that there is an integer $l>0$ such that $d\equiv q^l e\bmod q^s-1$. Therefore, we have $(\overline{q^l(d-e)},d)\in \II_{(d-e,e)}$, with $\overline{q^l(d-e)}\neq 0$. This implies that $\II_{(d-e,e)}\not\subset \Delta_{\leq d}$, a contradiction.
\end{proof}

In order to assert that $B$ is a basis, we would need to show that $B$ generates the whole code $\PRM_d^\sigma(2)$. However, we have already computed the dimension for  $\PRM_d^{\sigma,\perp}(2)$. By Lemma \ref{lemab1b2b3}, we know that the evaluation of the polynomials in $B$ is linearly independent, which means that if we show that $\abs{B}=n-\dim \PRM_d^{\sigma,\perp}(2) $, then this implies that $B$ is a basis. To see this, we will introduce a new decomposition of the sets $B$ and $D$. 

Let $1\leq d\leq 2(q^s-1)$, and $d^\perp=2(q^s-1)-d$. For the set $B$, we first define $\Gamma_1=B_1$. On the other hand, let $a_2\in\mathcal{A}^1$ such that $\II_{a_2}=\II_{\overline{d}}$, and we define $\Gamma_2$ in the following way:
\begin{enumerate}
    \item If $\II_{(q^s-1,\overline{d})}\subset \Delta_{\leq d}$, we set
    $$
    \Gamma_2=B_2\cup \{ h_{a_2}^r, 0\leq r\leq n_{a_2}-1\}.
    $$
    \item We set 
    $$
    \Gamma_2=B_2,
    $$
    otherwise.
\end{enumerate}

And we define $\Gamma_3=B\setminus (\Gamma_1\cup \Gamma_2)$. Equivalently, we consider the following definition:

\begin{enumerate}
    \item[(a)] If $\II_{(q^s-1,\overline{d})}\subset \Delta_{\leq d}$, we set
    $$
    \Gamma_3=\{\funl - x_0\mathcal{T}_{(q^s-1,a_2)}(x_1^{q^s-1}x_2^{a_2})\}.
    $$
    \item[(b)] If $\II_{(q^s-1,\overline{d})}\not\subset \Delta_{\leq d}$:
    \begin{enumerate}
        \item[(b.1)] If $\condicion$, we set
        $$
        \Gamma_3=\{\funl\}.
        $$
        \item[(b.2)] We set 
        $$
        \Gamma_3=\emptyset,
        $$
        otherwise.
    \end{enumerate}
\end{enumerate}

It is clear by construction that $B=\Gamma_1\cup \Gamma_2\cup \Gamma_3$. The idea behind this decomposition is that in $\Gamma_1$ we have sets of size $n_a$ for some $a\in \mathcal{A}$, in $\Gamma_2$ we have sets of size $n_{a_2}$ for some $a_2\in\mathcal{A}^1$, and in $\Gamma_3$ we have a set of size 1 (if any). Now we define a similar decomposition for $D$, and we will see later why we are interested in this decomposition.

For the set $D$, we define first $\Gamma_1^\perp=D_1$. Let $a_2\in \mathcal{A}^1$ such that $\II_{a_2}=\II_{\overline{d}}$. Now we define $\Gamma_3^\perp$ as follows:
\begin{enumerate}
    \item If there is an element $c\in\mathcal{A}$ such that $c_2=a_2$, $\II_c\neq \II_{(0,\overline{d^\perp})}$, and $M_c(d^\perp)$ contains monomials of the two types, we set
    $$
    \Gamma_3^\perp=(x_0-1)(x_1-1).
    $$
    \item We set 
    $$
    \Gamma_3^\perp=\emptyset,  
    $$
    otherwise.
\end{enumerate}

We can now define $\Gamma_2^\perp=D\setminus (\Gamma_1^\perp\cup \Gamma_3^\perp)$. This can also be expressed in the following way:
\begin{equation}\label{gamma2perp}
\Gamma_2^\perp=(D_2\cup D_3\cup D_4)\setminus \{(x_0-1)(x_1-1)\}.
\end{equation}

Again, by construction we have $D=\Gamma_1^\perp\cup \Gamma_2^\perp\cup \Gamma_3^\perp$. 

\begin{rem}\label{condicionnested}
The condition in (1) from the definition of $\Gamma_3^\perp$ implies that $M_{(0,\overline{d^\perp})}(d^\perp)$ contains monomials of the two types. Indeed, if $d^\perp\geq q^s$, $M_{(0,\overline{d^\perp})}(d^\perp)$ always contains monomials of the two types, and if $d^\perp\leq q^s-1$ and there is an element $c\in\mathcal{A}$ such that $c_2=a_2$, $\II_c\neq \II_{(0,\overline{d^\perp})}$, and $M_c(d^\perp)$ contains monomials of the two types, this means that there is $\gamma\in \II_c$ with $\gamma_1>0$ such that $\gamma_1+\gamma_2=d^\perp$ by Lemma \ref{condicionesM}, with $\gamma_2\in \II_{d^\perp}$. Therefore, $d^\perp$ is not the minimal element in $\II_{d^\perp}$, which means that $M_{(0,d^\perp)}(d^\perp)$ contains monomials of the two types. Hence, we have $(x_0-1)(x_1-1)\in \Gamma_3^\perp$ if and only if $(x_0-1)(x_1-1)\in D_3$.
\end{rem}

Let $b_2\in \mathcal{A}^1$ such that $\II_{b_2}=\II_{\overline{d}}$, for some degree $1\leq d\leq 2(q^s-1)$. For ease of use, we recall here the sizes of the set we have just defined:
\begin{enumerate}
    \item[(a.1)] $\abs{\Gamma_1}=\abs{B_1}=\sum_{a\in \mathcal{A}_{<d}}n_a$.
    \item[(a.2)] $\abs{\Gamma_2}=\abs{B_2}+n_{\overline{d}}=\sum_{a_2\in Y}n_{a_2}+n_{\overline{d}}$ if $\II_{(q^s-1,\overline{d})}\subset \Delta_{\leq d }$, and $\abs{\Gamma_2}=\abs{B_2}$ otherwise.
    \item[(a.3)] $\abs{\Gamma_3}=1$ if $\condicionnn$, and $\abs{\Gamma_3}=0$ otherwise.
    \item[(b.1)] $\abs{\Gamma_1^\perp}=\abs{D_1}=\sum_{a\in U}n_a$.
    \item[(b.2)] $\abs{\Gamma_2^\perp}=\abs{D_2}+\abs{D_3\setminus \{(x_0-1)(x_1-1)\}}+\abs{D_4}=\sum_{a_2\in V}n_{a_2}+n_{\overline{d}}+\abs{D_4}$ if $M_{(0,\overline{d^\perp})}(d^\perp)$ contains monomials of the two types, and $\abs{\Gamma_2^\perp}=\sum_{a_2\in V}n_{a_2}+\abs{D_4}$ otherwise, where $\abs{D_4}=1$ if $d=q^s-1$, and $\abs{D_4}=0$ otherwise.
    \item[(b.3)] $\abs{\Gamma_3^\perp} =1$ if there is an element $c\in\mathcal{A}$ such that $c_2=b_2$, $\II_c\neq \II_{(0,\overline{d^\perp})}$, and $M_c(d^\perp)$ contains monomials of the two types, and $\abs{\Gamma_3^\perp} =0$ otherwise.
\end{enumerate}

\begin{defn}
Let $b=(b_1,b_2)\in \zmsdos$. We define
$$
b'=(b'_1,b'_2):=(q^s-1-b_1,q^s-1-b_2).
$$
\end{defn}

\begin{rem}
Let $c\in\mathcal{A}$. Then $c_2\in \II_{a_2}$ if and only if $c'_2=q^s-1-c_2\in \II_{a'_2}$.
\end{rem}

We are interested in doing these decompositions because the length of these codes is $n=\frac{q^{3s}-1}{q^s-1}=q^{2s}+q^s+1$, and we also have $\sum_{a\in \mathcal{A}}n_a=q^{2s}$, $\sum_{a_2\in\mathcal{A}^1}n_{a_2}=q^s$. We prove now that $\abs{\Gamma_1}+\abs{\Gamma_1^\perp}=q^{2s}$, $\abs{\Gamma_2}+\abs{\Gamma_2^\perp}=q^s$ and $\abs{\Gamma_3}+\abs{\Gamma_3^\perp}=1$. This is reminiscent of the affine case, in which if we evaluate the traces corresponding to $a\in \mathcal{A}$ for the primary code, then for the dual code we do not need to consider the traces corresponding to $\II_{a'}$. The strategy in our case will be similar: for each $a\in \mathcal{A}$ such that we consider its traces in $B$, we will see that we do not consider the traces corresponding to $\II_{a'}$ in $D$. We start with the sets $\Gamma_1$ and $\Gamma_1^\perp$. 

\begin{lem}\label{suma1}
With the definitions as above, we have $\abs{\Gamma_1}+\abs{\Gamma_1^\perp}=q^2$.
\end{lem}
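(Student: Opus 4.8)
The plan is to build an involution of $\mathcal{A}$ out of the coordinate-wise reflection $c\mapsto c'=(q^s-1-c_1,q^s-1-c_2)$ of $\zmsdos$ and to use it to match the index set $U$ of $\Gamma_1^\perp=D_1$ with $\mathcal{A}\setminus\mathcal{A}_{<d}$, the complement of the index set of $\Gamma_1=B_1$. In this way the cyclotomic sets counted by $\Gamma_1$ and those counted by $\Gamma_1^\perp$ together account for every minimal cyclotomic set of $\zmsdos$ exactly once, and the desired equality will follow from $\sum_{a\in\mathcal{A}}n_a=q^{2s}$.

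First I would record the properties of $\phi(c):=c'$. It is clearly an involution of $\zmsdos$, and a direct computation shows that it commutes with multiplication by $q$: coordinate-wise, $q\,c_i'\equiv q(q^s-1)-qc_i\equiv -qc_i\equiv (qc_i)'\bmod(q^s-1)$, the only point to watch being that $\phi$ interchanges the two classes $0$ and $q^s-1$ of $\zs$, both of which are fixed by multiplication by $q$. Hence $\phi$ maps each minimal cyclotomic set $\II_a=\{a,qa,q^2a,\dots\}$ bijectively onto the minimal cyclotomic set $\phi(\II_a)=\{a',qa',\dots\}$ generated by $a'$ (this is the two-coordinate version of the remark preceding the statement), so $|\phi(\II_a)|=|\II_a|=n_a$. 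Writing $a^{\star}$ for the minimal representative of $\phi(\II_a)$, the assignment $a\mapsto a^{\star}$ is then a well-defined involution of $\mathcal{A}$ with $n_{a^\star}=n_a$ for all $a\in\mathcal{A}$.

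Next I would translate the defining condition of $U$ through $\phi$. Because every element of $\zmsdos$ has coordinates in $\{0,\dots,q^s-1\}$, one has $\deg c'=2(q^s-1)-\deg c$, so $\phi\big(\Delta_{\le e}\big)=\{c\in\zmsdos:\deg c\ge 2(q^s-1)-e\}$ for every $e$. By Lemma \ref{condicionesM}(1), $a\in U$ if and only if $\II_a\cap\Delta_{\le d^\perp}\neq\emptyset$; applying the bijection $\phi$ and using $d^\perp=2(q^s-1)-d$, this is equivalent to $\phi(\II_a)\cap\{c:\deg c\ge d\}\neq\emptyset$, that is, $\II_{a^\star}\not\subset\Delta_{<d}$, i.e.
$$
a\in U \iff a^{\star}\notin\mathcal{A}_{<d}.
$$
Since $\star$ is a bijection of $\mathcal{A}$, it restricts to a bijection $U\to\mathcal{A}\setminus\mathcal{A}_{<d}$, and therefore
$$
|\Gamma_1^\perp|=|D_1|=\sum_{a\in U}n_a=\sum_{a\in U}n_{a^\star}=\sum_{b\in\mathcal{A}\setminus\mathcal{A}_{<d}}n_b .
$$
Adding $|\Gamma_1|=|B_1|=\sum_{a\in\mathcal{A}_{<d}}n_a$ yields $|\Gamma_1|+|\Gamma_1^\perp|=\sum_{a\in\mathcal{A}}n_a=q^{2s}$, the last equality because the minimal cyclotomic sets partition $\zmsdos$, which has $q^{2s}$ elements.

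The only delicate point is the bookkeeping around the reflection: one must check carefully that $\phi$ really sends minimal cyclotomic sets to minimal cyclotomic sets (the $0\leftrightarrow q^s-1$ swap) and that the object being reflected and compared is the whole cyclotomic set $\II_a$, not the chosen minimal representative $a$. Note that the two index sets $\mathcal{A}_{<d}$ and $U$ need not be disjoint; the matching is effected across the reflection rather than by a literal partition. Once this dictionary between ``$a\in U$'' and ``$a^\star\notin\mathcal{A}_{<d}$'' is in place the rest is routine, and the same reflection trick is what I would reuse for the companion identities $|\Gamma_2|+|\Gamma_2^\perp|=q^s$ and $|\Gamma_3|+|\Gamma_3^\perp|=1$.
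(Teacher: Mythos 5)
Your proof is correct and takes essentially the same route as the paper: reflect via $c\mapsto c'=(q^s-1-c_1,q^s-1-c_2)$, note that this sends minimal cyclotomic sets to minimal cyclotomic sets with $n_a=n_{a'}$, and match $U=\{a\mid M_a(d^\perp)\neq\emptyset\}=\{a\mid \II_a\cap\Delta_{\leq d^\perp}\neq\emptyset\}$ with $\mathcal{A}\setminus\mathcal{A}_{<d}$ so that the two sums together account for $\sum_{a\in\mathcal{A}}n_a=q^{2s}$. You are somewhat more explicit than the paper in verifying that the reflection commutes with multiplication by $q$ (and in flagging the $0\leftrightarrow q^s-1$ swap), but the key equivalence $a\in U\iff a^\star\notin\mathcal{A}_{<d}$ is exactly the paper's; note also that the exponent in the lemma's statement should read $q^{2s}$ rather than $q^2$, as your computation (and the paper's proof) shows.
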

\begin{proof}
By definition, it is clear that we have $q^{2s}-\abs{\Gamma_1}=\sum_{a\in \mathcal{A}\setminus \mathcal{A}_{<d}}n_a$. We note that $a\in \mathcal{A}\setminus \mathcal{A}_{<d}$ if and only if there is $(c_1,c_2)\in \II_a$ such that $c_1+c_2\geq d$. Therefore, $2(q^s-1)-c_1-c_2=c'_1+c'_2\leq d^\perp$, which means that $M_{a'}(d^\perp)\neq \emptyset$. It is easy to see that $n_a=n_{a'}$, and we have $\sum_{a\in \mathcal{A}\setminus \mathcal{A}_{<d}}n_a=\sum_{a'\in \mathcal{A} \mid M_{a'}(d^\perp)\neq\emptyset}n_{a'}=\abs{\Gamma_1^\perp}$. Thus, $\abs{\Gamma_1}+\abs{\Gamma_1^\perp}=q^{2s}$.
\end{proof}

For the case of $\Gamma_2$ and $\Gamma_2^\perp$, we need the following technical results. 

\begin{lem}\label{lemaprimariomonomiosdostipos}
Let $1\leq d\leq 2(q^s-1)$, $d^\perp=2(q^s-1)-d$ and $c\in \mathcal{A}$. Then $M_{c'}(d^\perp)$ contains monomials of the two types if and only if $\II_c\cap (\Delta_d\cup \Delta_{2(q^s-1)-\overline{d^\perp}})\neq \emptyset$ and $\II_c\not\subset \Delta_{\leq d}$, where $\Delta_z=\emptyset$ if $z<0$.
\end{lem}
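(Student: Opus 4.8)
The plan is to deduce the statement from Lemma~\ref{condicionesM}(\ref{itemdostipos}) applied to $M_{c'}(d^\perp)$, combined with a degree--complement dictionary between $\II_c$ and $\II_{c'}$. By Lemma~\ref{condicionesM}(\ref{itemdostipos}), $M_{c'}(d^\perp)$ contains monomials of the two types if and only if $\II_{c'}\cap\Delta_{<d^\perp}\neq\emptyset$ and $\II_{c'}\cap(\Delta_{d^\perp}\cup\Delta_{\overline{d^\perp}})\neq\emptyset$, so it suffices to rewrite these two conditions on $\II_{c'}$ as the asserted conditions on $\II_c$. The boundary case $d=2(q^s-1)$, where $d^\perp=0$, is trivial: $M_{c'}(0)$ is either $\{1\}$ or empty and never contains monomials of the two types, while $\II_c\subset\Delta_{\leq 2(q^s-1)}$ always holds, so both sides of the equivalence fail; hence I may assume $d^\perp\geq 1$.

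First I would show that $\II_{c'}=\{\delta':\delta\in\II_c\}$, where $\delta'=(q^s-1-\delta_1,q^s-1-\delta_2)$. This is the coordinatewise version of the relation $c_2\in\II_{a_2}\iff c_2'\in\II_{a_2'}$, and the only thing to check is that $\delta\mapsto\delta'$ commutes with multiplication by $q$ modulo $q^s-1$ under the convention that the classes of $\zs$ are $\{0\}\cup\{1,\dots,q^s-1\}$; here one must keep the element $0$ and the element $q^s-1$ distinct, noting $q\cdot 0=0$ and $q\cdot(q^s-1)=q^s-1$ in $\zs$. Verifying $(q\delta_i)'=q(\delta_i')$ in each of the cases $\delta_i=0$, $\delta_i=q^s-1$, and $1\leq\delta_i\leq q^s-2$ is routine; it shows that the $q$-orbit of $c'$ is obtained by applying $\delta\mapsto\delta'$ to the $q$-orbit of $c$, and since $\II_c$ and $\II_{c'}$ are minimal cyclotomic sets this gives $\II_{c'}=\{\delta':\delta\in\II_c\}$.

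Next I would use the identity $\gamma_1+\gamma_2=2(q^s-1)-(\delta_1+\delta_2)$, valid whenever $\gamma=\delta'$. It gives $\gamma\in\Delta_{<d^\perp}\iff\delta_1+\delta_2>d\iff\delta\notin\Delta_{\leq d}$, $\gamma\in\Delta_{d^\perp}\iff\delta\in\Delta_d$, and $\gamma\in\Delta_{\overline{d^\perp}}\iff\delta\in\Delta_{2(q^s-1)-\overline{d^\perp}}$ (the convention $\Delta_z=\emptyset$ for $z<0$ absorbing degenerate indices). Substituting these equivalences into Lemma~\ref{condicionesM}(\ref{itemdostipos}), the condition $\II_{c'}\cap\Delta_{<d^\perp}\neq\emptyset$ becomes $\II_c\not\subset\Delta_{\leq d}$ and the condition $\II_{c'}\cap(\Delta_{d^\perp}\cup\Delta_{\overline{d^\perp}})\neq\emptyset$ becomes $\II_c\cap(\Delta_d\cup\Delta_{2(q^s-1)-\overline{d^\perp}})\neq\emptyset$, which is exactly the claim. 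A consistency check: when $d^\perp\leq q^s-1$ one has $\overline{d^\perp}=d^\perp$ and $2(q^s-1)-\overline{d^\perp}=d$, so the two sets in the union coincide and the condition reduces to $\II_c\cap\Delta_d\neq\emptyset$ together with $\II_c\not\subset\Delta_{\leq d}$. The only delicate step is the bookkeeping of the $\zs$-convention when verifying that $\delta\mapsto\delta'$ commutes with multiplication by $q$; after that the proof is a direct substitution, so I do not expect a serious obstacle.
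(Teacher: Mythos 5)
Your proposal is correct and takes essentially the same route as the paper: apply Lemma~\ref{condicionesM}(\ref{itemdostipos}) to $M_{c'}(d^\perp)$ and translate the resulting conditions on $\II_{c'}$ into conditions on $\II_c$ via $\gamma'_1+\gamma'_2=2(q^s-1)-(\gamma_1+\gamma_2)$. The paper's proof is terser and leaves implicit both the fact that $\delta\mapsto\delta'$ intertwines the cyclotomic orbits and the boundary case $d^\perp=0$; your careful verification of the $\zs$-convention and the explicit handling of $d=2(q^s-1)$ fill in exactly the details the paper omits, with no change in method.
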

\begin{proof}
By Lemma \ref{condicionesM}, $M_{c'}(d^\perp)$ contains monomials of the two types if and only if $\II_{c'}\cap \Delta_{< d^\perp}\neq \emptyset$ and $\II_{c'}\cap (\Delta_{d^\perp}\cup \Delta_{\overline{d^\perp}})\neq \emptyset$. The condition $\II_{c'}\cap \Delta_{< d^\perp}\neq \emptyset$ implies that there is $(\gamma'_1,\gamma'_2)\in \II_{c'}$ such that $2(q^s-1)-\gamma_1-\gamma_2<d^\perp\iff \gamma_1+\gamma_2>d$. Thus, $\gamma\in \II_c\not \subset\Delta_{\leq d}$. The condition $\II_{c'}\cap (\Delta_{d^\perp}\cup \Delta_{\overline{d^\perp}})\neq \emptyset$ implies that there is an element $(\gamma'_1,\gamma'_2)\in \II_{c'}$ with either $2(q^s-1)-\gamma_1-\gamma_2=d^\perp$ or $2(q^s-1)-\gamma_1-\gamma_2=\overline{d^\perp}$.  Hence, $\gamma\in \Delta_d\cup\Delta_{2(q^s-1)-\overline{d^\perp}}$.
\end{proof}

\begin{rem}\label{remdt}
It is easy to check that $2(q^s-1)-\overline{d^\perp}=d$ if $d\geq q^s-1$, and $2(q^s-1)-\overline{d^\perp}=d+q^s-1$ if $d\leq q^s-2$.
\end{rem}

The following result, among other things, relates the set 
\begin{equation}\label{defY}
Y=\left\{a_2\in \mathcal{A}^1_{\leq d},\II_{a_2}\neq \II_{\overline{d}}\mid \bigcup_{c_2\in \II_{a_2},c_2>d-(q^s-1)}\II_{(d-c_2,c_2)}\subset \Delta_{\leq d}\right\}
\end{equation}
with the set $V=\{a_2\in \mathcal{A}^1\mid \II_{a_2}\neq \II_{\overline{d^\perp}} \text{ and } \exists \; c\in\mathcal{A}\text{ with } c_2=a_2 \text{ and } M_c(d^\perp) \text{ contains}$ $\text{ monomials of the two types}\}.$

\begin{lem}\label{lemana2}
Let $a_2\in \mathcal{A}^1_{\leq d}$. Then $\condicion$ if and only if there is no  $c\in \mathcal{A}$ with $\II_{c'}\neq \II_{(0,\overline{d^\perp})}$, $c'_2\in \II_{a'_2}$, and such that $M_{c'}(d^\perp)$ contains monomials of the two types.
\end{lem}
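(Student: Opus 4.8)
The plan is to prove the equivalence of the two negations: that $\notcondicion$ holds if and only if there exists $c\in\mathcal{A}$ with $\II_{c'}\neq\II_{(0,\overline{d^\perp})}$, $c'_2\in\II_{a'_2}$, and $M_{c'}(d^\perp)$ containing monomials of the two types. The bridge between the two sides is the involution $c\mapsto c'$ together with the earlier remark that $c_2\in\II_{a_2}$ if and only if $c'_2\in\II_{a'_2}$, and the characterization of ``$M_{c'}(d^\perp)$ contains monomials of the two types'' from Lemma \ref{lemaprimariomonomiosdostipos}. The observation I would isolate first is that, because multiplication by $q$ fixes the classes $0$ and $q^s-1$, the set of first coordinates appearing in a minimal cyclotomic set $\II_c$ is itself a minimal cyclotomic set in one coordinate; hence either every element of $\II_c$ has first coordinate $q^s-1$ (equivalently, every element of $\II_{c'}$ has first coordinate $0$) or none does, and a short computation using Remark \ref{remdt} shows that in the former case $\II_{c'}=\II_{(0,\overline{d^\perp})}$ exactly. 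This degenerate situation is precisely what the restriction $c_2>d-(q^s-1)$ on the left, and the restriction $\II_{c'}\neq\II_{(0,\overline{d^\perp})}$ on the right, are designed to exclude.

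For the implication $\notcondicion\Rightarrow(\exists c)$, I would take a witness $e_2\in\II_{a_2}$ with $e_2>d-(q^s-1)$ and $\II_{(d-e_2,e_2)}\not\subset\Delta_{\leq d}$; note that $a_2\in\mathcal{A}^1_{\leq d}$ forces $e_2\leq d$, so $(d-e_2,e_2)$ is a legitimate pair in $\zmsdos$ with first coordinate strictly below $q^s-1$. Let $c$ be the minimal representative of $\II_{(d-e_2,e_2)}$, so $c\in\mathcal{A}$. The set of second coordinates of $\II_c$ is $\II_{e_2}=\II_{a_2}$, so $c_2\in\II_{a_2}$ and hence $c'_2\in\II_{a'_2}$. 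Since $(d-e_2,e_2)\in\II_c$ has first coordinate $<q^s-1$, not every element of $\II_{c'}$ has first coordinate $0$, so $\II_{c'}\neq\II_{(0,\overline{d^\perp})}$. Finally $(d-e_2,e_2)\in\II_c\cap\Delta_d$ and $\II_c=\II_{(d-e_2,e_2)}\not\subset\Delta_{\leq d}$, so Lemma \ref{lemaprimariomonomiosdostipos} gives that $M_{c'}(d^\perp)$ contains monomials of the two types.

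For the converse, given such a $c$ I would use the remark to get $c_2\in\II_{a_2}$, and Lemma \ref{lemaprimariomonomiosdostipos} to obtain $\delta\in\II_c$ with $\delta_1+\delta_2\in\{d,\,2(q^s-1)-\overline{d^\perp}\}$ and $\II_c\not\subset\Delta_{\leq d}$, and then split using Remark \ref{remdt}. If $\delta_1+\delta_2=d$, put $e_2=\delta_2$; then $e_2\in\II_{c_2}=\II_{a_2}$ and $(d-e_2,e_2)=\delta$, so $\II_{(d-e_2,e_2)}=\II_\delta=\II_c\not\subset\Delta_{\leq d}$, and it only remains to verify $e_2>d-(q^s-1)$: this is automatic when $d\leq q^s-2$, and when $d\geq q^s-1$ the alternative $\delta_1=q^s-1$ would force $\II_{c'}=\II_{(0,\overline{d^\perp})}$ by the observation above, contrary to hypothesis, so $\delta_1<q^s-1$ and $e_2=d-\delta_1>d-(q^s-1)$. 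The other case $\delta_1+\delta_2=2(q^s-1)-\overline{d^\perp}\neq d$ happens only for $d\leq q^s-2$, where $2(q^s-1)-\overline{d^\perp}=d+(q^s-1)$; then $a_2\in\mathcal{A}^1_{\leq d}$ forces $\delta_2\leq d$, hence $\delta_2=d$ and $\delta_1=q^s-1$, and the observation again gives $\II_{c'}=\II_{(0,\overline{d^\perp})}$, a contradiction, so this subcase never occurs.

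The main obstacle I anticipate is purely combinatorial bookkeeping: matching the degenerate cyclotomic set $\II_{(0,\overline{d^\perp})}$ with the strict inequality $e_2>d-(q^s-1)$, which forces a careful separation of the regimes $d\geq q^s-1$ and $d\leq q^s-2$ through Remark \ref{remdt}, and using $a_2\in\mathcal{A}^1_{\leq d}$ to bound $\delta_2$ in the delicate subcase. Once the equivalence ``$\II_{c'}=\II_{(0,\overline{d^\perp})}$ $\iff$ $\II_c$ lies over first coordinate $q^s-1$'' is established, each of the remaining verifications is a one-line check against the definitions of $\Delta_{\leq d}$, $\Delta_d$ and $M_{c'}(d^\perp)$.
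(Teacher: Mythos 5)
Your argument is correct and, modulo taking contrapositives of each implication, follows the paper's own route: both proofs reduce via Lemma \ref{lemaprimariomonomiosdostipos} and the involution $c\mapsto c'$, translate $\II_{c'}\neq\II_{(0,\overline{d^\perp})}$ into the degenerate configuration $\delta_1=q^s-1$, split on whether the witness $\delta\in\II_c$ lies in $\Delta_d$ or in $\Delta_{2(q^s-1)-\overline{d^\perp}}$, and in the latter case use $\delta_2\leq d$ (from $a_2\in\mathcal{A}^1_{\leq d}$) together with $\delta_1\leq q^s-1$ to force $\delta=(q^s-1,d)$. One small wording caveat: your standalone ``observation'' that having all first coordinates of $\II_c$ equal to $q^s-1$ already implies $\II_{c'}=\II_{(0,\overline{d^\perp})}$ is not true in isolation; it requires the extra datum $\delta_1+\delta_2\in\{d,\,2(q^s-1)-\overline{d^\perp}\}$, which each of your actual applications does supply, so the proof itself is sound.
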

\begin{proof}
Let $a_2\in \mathcal{A}^1_{\leq d}$. By Lemma \ref{lemaprimariomonomiosdostipos}, we can translate the statement to the following: we have $\condicion$ if and only if there is no $c\in \mathcal{A}$, $\II_c\neq \II_{(q^s-1,\overline{d^\perp}')}$, with $c_2=a_2$, $\II_c\cap (\Delta_d\cup \Delta_{2(q^s-1)-\overline{d^\perp}})\neq \emptyset$ and $\II_c\not\subset \Delta_{\leq d}$. In what follows, we will use this last statement instead of the original one. We also note that $\overline{d^\perp}'=\overline{d}$ if $d\neq q^s-1$, and $\overline{d^\perp}'=0$ if $d=q^s-1$. 

We assume that $\condicion$ and we consider $c\in \mathcal{A}$, $\II_c\neq \II_{(q^s-1,\overline{d^\perp}')}$, with $c_2=a_2$. If $\II_c\cap \Delta_d\neq \emptyset$, we have $(d-\gamma_2,\gamma_2)\in \II_c$ for some $\gamma_2\in \II_{a_2}$. This implies that $d-\gamma_2\leq q^s-1$, i.e., $\gamma_2\geq d-(q^s-1)$. If $\gamma_2>d-(q^s-1)$, then, by our assumptions, $\II_c=\II_{(d-\gamma_2,\gamma_2)}\subset \Delta_{\leq d}$. If we had $\gamma_2=d-(q^s-1)$ and $d\geq q^s$, then this would imply that $(q^s-1,\overline{d})\in \II_c$, which is a contradiction with the fact that $\II_c\neq \II_{(q^s-1,\overline{d})}$. If $d=q^s-1$, then $\gamma_2=0$, which implies that $(d-\gamma_2,\gamma_2)=(q^s-1,0)$ and $\II_c=\{(q^s-1,0)\}$, a contradiction with the fact that $\II_c\neq \II_{(q^s-1,\overline{d^\perp}')}=\II_{(q^s-1,0)}$.

On the other hand, if $\II_c\cap \Delta_d=\emptyset$ and $\II_c\cap \Delta_{2(q^s-1)-\overline{d^\perp}}\neq \emptyset$, we have $\gamma\in \II_c$ with $\gamma_1+\gamma_2=2(q^s-1)-\overline{d^\perp}$, and $\gamma_2\in \II_{a_2}$. Considering Remark \ref{remdt}, if $d\geq q^s-1$, this implies $\gamma\in \Delta_d$, a contradiction with the assumption $\II_c\cap \Delta_d=\emptyset$. If $d\leq q^s-2$, then we note that $\gamma_2\leq d$ since $a_2\in \mathcal{A}_{\leq d}$, and $\gamma_1\leq q^s-1$, which implies $\gamma_1+\gamma_2\leq d+q^s-1=2(q^s-1)-\overline{d^\perp}$. We can only obtain the equality if $\gamma_1=q^s-1$ and $\gamma_2=d$, which is a contradiction with the assumption $\II_c\neq \II_{(q^s-1,\overline{d})}$.

For the other implication, we assume now that there is no $c\in \mathcal{A}$, $\II_c\neq \II_{(q^s-1,\overline{d^\perp}')}$, with $c_2=a_2$, $\II_c\cap(\Delta_d\cup\Delta_{2(q^s-1)-\overline{d^\perp}})\neq \emptyset$ and $\II_c\not\subset \Delta_{\leq d}$. For each $\gamma_2\in \II_{a_2}$, with $\gamma_2>d-(q^s-1)$, there is an element $c\in \mathcal{A}$ such that $\II_c=\II_{(d-\gamma_2,\gamma_2)}$. Because of the ordering chosen for the elements in $\zmsdos$, we must have $c_2=a_2$. We clearly have $(d-\gamma_2,\gamma_2)\in \II_c\cap \Delta_d\neq \emptyset$. By our assumption, we must have $\II_c=\II_{(d-\gamma_2,\gamma_2)}\subset \Delta_{\leq d}$.
\end{proof}

\begin{rem}\label{remgamma2}
Lemma \ref{lemana2} implies the following. Let $a_2\in \mathcal{A}^1_{\leq d}$ with $\II_{a_2}\neq \II_{\overline{d}}$. Then $a_2\in Y$ if and only if there is no $c\in \mathcal{A}$ with $c'_2\in \II_{a'_2}$, $\II_{c'}\neq \II_{(0,\overline{d^\perp})}$, and such that $M_{c'}(d^\perp)$ contains monomials of the two types. 

Recalling that $\overline{d}'=\overline{d^\perp}$ if $d\neq q^s-1$, and $\overline{d}'=0$ if $d=q^s-1$, we see that if $d\neq q^s-1$, $\II_{a_2}\neq \II_{\overline{d}}$ together with $c'_2\in \II_{a'_2}$ already implies $\II_{c'}\neq \II_{(0,\overline{d^\perp})}$. For $d=q^s-1$, in the case $a_2=0$, we see that the previous statement says:  $0\in Y$ if and only if there is no $c\in \mathcal{A}$ with $c'_2\in \II_{q^s-1}$, $\II_{c'}\neq \II_{(0,q^s-1)}$, and such that $M_{c'}(q^s-1)$ contains monomials of the two types. However, $M_{(0,q^s-1)}(q^s-1)=\{x_2^{q^s-1}\}$ does not have monomials of the two types. Therefore, in this case we can also omit the condition $\II_{c'}\neq \II_{(0,\overline{d^\perp})}$. 

Thus, we have the following statement. Let $a_2\in \mathcal{A}^1_{\leq d}$ with $\II_{a_2}\neq \II_{\overline{d}}$. Then $a_2\in Y$ if and only if there is no $c\in \mathcal{A}$ with $c'_2\in \II_{a'_2}$ and such that $M_{c'}(d^\perp)$ contains monomials of the two types. 
\end{rem}

\begin{lem}\label{lemacerod}
Let $1\leq d\leq 2(q^s-1)$, $d^\perp=2(q^s-1)-d$. If $d\neq q^s-1$, then $M_{(0,\overline{d^\perp})}(d^\perp)$ contains monomials of the two types if and only if $\II_{(q^s-1,\overline{d})}\not\subset \Delta_{\leq d}$. 
\end{lem}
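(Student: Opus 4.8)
The plan is to rewrite both sides of the equivalence as conditions on the univariate cyclotomic sets $\II_{\overline{d^{\perp}}}$ and $\II_{\overline{d}}$, and then identify these conditions via the involution $c_2\mapsto q^s-1-c_2$.

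First I would apply Lemma~\ref{condicionesM}(\ref{itemdostipos}), with degree $d^{\perp}$, to the cyclotomic set $\II_{(0,\overline{d^{\perp}})}$: the set $M_{(0,\overline{d^{\perp}})}(d^{\perp})$ contains monomials of the two types if and only if $\II_{(0,\overline{d^{\perp}})}\cap\Delta_{<d^{\perp}}\neq\emptyset$ and $\II_{(0,\overline{d^{\perp}})}\cap(\Delta_{d^{\perp}}\cup\Delta_{\overline{d^{\perp}}})\neq\emptyset$. Since multiplication by $q$ fixes the class $0$, every element of $\II_{(0,\overline{d^{\perp}})}$ has first coordinate $0$, so $\II_{(0,\overline{d^{\perp}})}=\{(0,c_2)\mid c_2\in\II_{\overline{d^{\perp}}}\}$. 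In particular $(0,\overline{d^{\perp}})$ lies in $\II_{(0,\overline{d^{\perp}})}\cap\Delta_{\overline{d^{\perp}}}$, so the second intersection is automatically nonempty, while the first one is nonempty exactly when some $c_2\in\II_{\overline{d^{\perp}}}$ satisfies $c_2<d^{\perp}$. Hence the left-hand side of the lemma is equivalent to: there exists $c_2\in\II_{\overline{d^{\perp}}}$ with $c_2<d^{\perp}$.

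Next I would treat the right-hand side directly. Multiplication by $q$ also fixes the class $q^s-1$, so $\II_{(q^s-1,\overline{d})}=\{(q^s-1,c_2)\mid c_2\in\II_{\overline{d}}\}$, and $(q^s-1,c_2)\in\Delta_{\leq d}$ precisely when $c_2\leq d-(q^s-1)$. Therefore $\II_{(q^s-1,\overline{d})}\not\subset\Delta_{\leq d}$ if and only if some $c_2\in\II_{\overline{d}}$ satisfies $c_2>d-(q^s-1)$. To match this with the reformulation of the left-hand side I would use that $d\neq q^s-1$ forces $\overline{d^{\perp}}=q^s-1-\overline{d}$ (see Remark~\ref{remgamma2}) together with the elementary fact that $c_2\in\II_{\overline{d}}$ if and only if $q^s-1-c_2\in\II_{\overline{d^{\perp}}}$. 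Writing $e_2=q^s-1-c_2$, the condition $c_2>d-(q^s-1)$ becomes $e_2<2(q^s-1)-d=d^{\perp}$, so the right-hand side is equivalent to: there exists $e_2\in\II_{\overline{d^{\perp}}}$ with $e_2<d^{\perp}$. This is exactly the condition obtained for the left-hand side, which proves the equivalence.

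All the steps are elementary; the only care needed is for the degenerate value $d=2(q^s-1)$, where $d^{\perp}=0$ (so Lemma~\ref{condicionesM} is not literally applicable and one argues directly from $M_{(0,0)}(0)=\langle 1\rangle$) and $\overline{d}=q^s-1$, $\overline{d^{\perp}}=0$. In that case $M_{(0,\overline{d^{\perp}})}(d^{\perp})$ has no monomial divisible by $x_0$, and $\II_{(q^s-1,\overline{d})}=\{(q^s-1,q^s-1)\}\subset\Delta_{\leq 2(q^s-1)}$, so both sides of the equivalence are false and the statement still holds. I expect this boundary bookkeeping to be the only mild obstacle.
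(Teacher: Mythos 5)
Your proof is correct and takes essentially the same route as the paper: both sides are translated into conditions on the univariate cyclotomic sets $\II_{\overline{d^\perp}}$ and $\II_{\overline{d}}$ (via Lemma~\ref{condicionesM}) and then matched via the involution $c_2 \mapsto q^s-1-c_2$. Your version is slightly more uniform, since you package both sides as a single inequality rather than splitting into the cases $d^\perp \geq q^s$ and $d^\perp \leq q^s-1$ as the paper does, and your separate handling of the degenerate value $d = 2(q^s-1)$ is sound.
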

\begin{proof}
If $d^\perp\geq q^s$, then $M_{(0,\overline{d^\perp})}(d^\perp)$ contains monomials of the two types because $x_0^{q^s-1}x_2^{\overline{d^\perp}}$, $x_2^d\in M_{(0,\overline{d^\perp})}(d^\perp)$. In this case, we have $d\leq q^s-2$, which ensures that $\II_{(q^s-1,\overline{d})}\not\subset \Delta_{\leq d}$.

If $d^\perp\leq q^s-1$, $M_{(0,d^\perp)}(d^\perp)$ contains monomials of the two types if and only if $d^\perp$ is not the minimal element of $\II_{d^\perp}$. We have $(d^\perp)'=q^s-1-d^\perp=q^s-1-(2(q^s-1)-d)=d-(q^s-1)$. The condition $d^\perp\leq q^s-1$ implies that $d\geq q^s-1$. Taking into account the assumption $d\neq q^s-1$, we can assume now that $d>q^s-1$. Thus, $(d^\perp)'=\overline{d}$, and we obtain that $M_{(0,d^\perp)}(d^\perp)$ contains monomials of the two types if and only if $d^\perp$ is not the minimal element of $\II_{d^\perp}$, which happens if and only if $(d^\perp)'=\overline{d}$ is not the maximal element of $\II_{\overline{d}}$, which happens if and only if $\II_{(q^s-1,\overline{d})}\not\subset \Delta_{\leq d}$.
\end{proof}

\begin{lem}\label{suma2}
We have that $\abs{\Gamma_2}+\abs{\Gamma_2^\perp}= q^s$.
\end{lem}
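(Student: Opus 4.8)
The plan is to express $\abs{\Gamma_2}+\abs{\Gamma_2^\perp}$ in terms of the $n_{a_2}$ and match it against $\sum_{a_2\in\mathcal{A}^1}n_{a_2}=q^s$ by producing a partition of $\mathcal{A}^1$, in the same spirit as the proof of Lemma \ref{suma1}. First I would collect the size formulas recalled just before Definition of $b'$, together with $n_{\overline d}=n_{\overline{d^\perp}}$ (which holds since $d+d^\perp\equiv 0\bmod(q^s-1)$) and $\abs{D_4}=1$ precisely when $d=q^s-1$. For $d\neq q^s-1$, Lemma \ref{lemacerod} tells us that exactly one of ``$\II_{(q^s-1,\overline d)}\subset\Delta_{\leq d}$'' and ``$M_{(0,\overline{d^\perp})}(d^\perp)$ contains monomials of the two types'' holds; going through both possibilities, and checking $d=q^s-1$ directly (where $\abs{D_4}=1$ and $n_{\overline d}=1$ compensate), one obtains in every case
\[
\abs{\Gamma_2}+\abs{\Gamma_2^\perp}=\abs{B_2}+\abs{D_2}+n_{\overline d}=\sum_{a_2\in Y}n_{a_2}+\sum_{a_2\in V}n_{a_2}+n_{\overline d}.
\]
Writing $b_2$ for the unique element of $\mathcal{A}^1$ with $\II_{b_2}=\II_{\overline d}$, we have $b_2\notin Y$ and $n_{b_2}=n_{\overline d}$, so it suffices to prove $\mathcal{A}^1=(Y\cup\{b_2\})\sqcup\phi(V)$, where $\phi\colon\mathcal{A}^1\to\mathcal{A}^1$ sends the minimal representative of $\II_{a_2}$ to that of $\II_{a_2'}$; this is an involution with $n_{\phi(a_2)}=n_{a_2}$, so $\sum_{a_2\in V}n_{a_2}=\sum_{a_2\in\phi(V)}n_{a_2}$ and the partition yields the statement.

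For the partition, the key observation is that $M_b(d)$ depends only on $\II_b$ and $d$, and that the second coordinates of $\II_{c'}$ form $\II_{c_2'}$; passing through $\II_c\leftrightarrow\II_{c'}$ and taking minimal representatives then shows: for $a_2\in\mathcal{A}^1$ there is a $c\in\mathcal{A}$ with $c_2=a_2$ and $M_{c'}(d^\perp)$ of the two types \emph{if and only if} there is a $\tilde c\in\mathcal{A}$ with $\tilde c_2=\phi(a_2)$ and $M_{\tilde c}(d^\perp)$ of the two types. Combining this with Remark \ref{remgamma2}, which (since $c\in\mathcal{A}$ and $c_2\in\II_{a_2}$ force $c_2=a_2$) simply states that for $a_2\in\mathcal{A}^1_{\leq d}$ with $\II_{a_2}\neq\II_{\overline d}$ one has $a_2\in Y$ iff no $c\in\mathcal{A}$ satisfies $c_2=a_2$ and $M_{c'}(d^\perp)$ of the two types, I get $a_2\notin Y\iff\phi(a_2)\in V$ for every such $a_2$; here one also checks $\II_{\phi(a_2)}\neq\II_{\overline{d^\perp}}$, using $\overline{d^\perp}{}'=\overline d$ when $d\neq q^s-1$ and disposing of the sub-case $a_2=0$, $d=q^s-1$ (where $0\in Y$) by hand. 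This already gives disjointness of the union — together with $\phi(b_2)\notin V$, which holds because $\II_{\phi(b_2)}=\II_{\overline{d^\perp}}$ when $d\neq q^s-1$ and $\phi(b_2)=0\notin V$ when $d=q^s-1$ — and it settles the inclusion $a_2\in\phi(V)$ for every $a_2\in\mathcal{A}^1_{\leq d}\setminus(Y\cup\{b_2\})$.

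The remaining case, and the one I expect to be the main obstacle, is covering an $a_2\in\mathcal{A}^1\setminus\mathcal{A}^1_{\leq d}$, which lies outside the reach of Remark \ref{remgamma2}: here I must exhibit directly a $c\in\mathcal{A}$ with $c_2=a_2$ and $M_{c'}(d^\perp)$ of the two types. Such an $a_2$ forces $d\leq q^s-2$ (all exponents are $\leq q^s-1$), and then there is $e\in\II_{a_2}$ with $d<e\leq q^s-1$; one checks that $\delta:=(q^s-1+d-e,\,e)$ is a legal exponent vector with $\delta_2\in\II_{a_2}$ and coordinate sum $q^s-1+d=2(q^s-1)-\overline{d^\perp}$ (Remark \ref{remdt}), so taking $c$ to be the minimal representative of $\II_\delta$ gives $c\in\mathcal{A}$, $c_2=a_2$, and $\II_c=\II_\delta$ which both meets $\Delta_{2(q^s-1)-\overline{d^\perp}}$ and is not contained in $\Delta_{\leq d}$ — the single element $\delta$ witnesses both. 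Lemma \ref{lemaprimariomonomiosdostipos} then yields that $M_{c'}(d^\perp)$ is of the two types, and the correspondence above transports this to $\phi(a_2)\in V$ (noting $a_2\neq b_2$ here, so $\II_{\phi(a_2)}\neq\II_{\overline{d^\perp}}$). Assembling these pieces proves $\mathcal{A}^1=(Y\cup\{b_2\})\sqcup\phi(V)$, and hence $\abs{\Gamma_2}+\abs{\Gamma_2^\perp}=q^s$. The conceptual content is modest; the real effort will be the bookkeeping around $\overline d$ versus $\overline{d^\perp}$ and the boundary degrees $d\in\{q^s-1,\,2(q^s-1)\}$.
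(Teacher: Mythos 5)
Your proof is correct and follows essentially the same approach as the paper: it rests on the same trio of inputs (Lemma \ref{lemacerod} for routing $n_{\overline d}$, Remark \ref{remgamma2} for the $Y\leftrightarrow V$ correspondence inside $\mathcal{A}^1_{\leq d}$, and a direct construction of a witness $c$ for $a_2\in\mathcal{A}^1\setminus\mathcal{A}^1_{\leq d}$, with your $\delta$ matching the paper's $c$ after applying $'$), and the same special treatment of $d=q^s-1$. The only difference is organizational — you first reduce to $\abs{\Gamma_2}+\abs{\Gamma_2^\perp}=\abs{B_2}+\abs{D_2}+n_{\overline d}$ and then separately establish the partition $\mathcal{A}^1=(Y\cup\{b_2\})\sqcup\phi(V)$, whereas the paper interleaves these two threads — which is arguably a cleaner exposition of the same argument.
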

\begin{proof}
We start with the following decomposition:
$$
    q^s=\sum_{a_2\in\mathcal{A}^1}n_{a_2}=\sum_{a_2\in \mathcal{A}^1_{\leq d},a_2\in Y,\II_{a_2}\neq \II_{\overline{d}}}n_{a_2}+\sum_{a_2\in \mathcal{A}^1_{\leq d},a_2\not\in Y,\II_{a_2}\neq \II_{\overline{d}}}n_{a_2}+\sum_{a_2\in\mathcal{A}^1\setminus \mathcal{A}^1_{\leq d},\II_{a_2}\neq\II_{\overline{d}}}n_{a_2}+n_{\overline{d}}.
$$
We recall that $\sum_{a_2\in \mathcal{A}^1_{\leq d},a_2\in Y,\II_{a_2}\neq \II_{\overline{d}}}n_{a_2}=\abs{B_2}$. We also recall the definition $V=\{a_2\in \mathcal{A}^1\mid \II_{a_2}\neq \II_{\overline{d^\perp}} \text{ and } \exists \; c\in\mathcal{A}\mid c_2=a_2 \text{ and } M_c(d^\perp) \text{ contains monomials of the two types} \}$. Let $a_2\in \mathcal{A}^1_{\leq d}$. By Remark \ref{remgamma2}, if $d\neq q^s-1$, we have that $a_2\in Y$ if and only if the minimal element of $\II_{a'_2}$ is not in $V$. 
Taking into account that $n_{a_2}=n_{a'_2}$, we have that
$$
\sum_{a_2\in \mathcal{A}^1_{\leq d},a_2\not\in Y,\II_{a_2}\neq \II_{\overline{d}}}n_{a_2}=\sum_{b'_2\in V\mid \II_{b_2}=\II_{a_2},a_2\in\mathcal{A}^1_{\leq d}}n_{b'_2}.
$$

If $d\geq q^s-1$, we have $\mathcal{A}^1_{\leq d}=\mathcal{A}^1$, and the only thing left to do is to consider the cyclotomic set $\II_{\overline{d}}$. However, if $d\leq q^s-2$, we can consider $a_2\in \mathcal{A}^1\setminus \mathcal{A}^1_{\leq d}$. We have that $d\leq q^s-2\iff d^\perp\geq q^s$, and $a_2\in \mathcal{A}^1\setminus \mathcal{A}^1_{\leq d}$ implies that there is $\gamma_2\in\II_{a_2}$ with $\gamma_2>d\iff \gamma'_2< \overline{d^\perp}$ in this case. Hence, we can consider $c=(\overline{d^\perp}-\gamma'_2,\gamma'_2)$, and we have that $\{x_0^{q^s-1}x_1^{\overline{d^\perp}-\gamma'_2}x_2^{\gamma'_2} , x_1^{d^\perp-\gamma'_2}x_2^{\gamma'_2}\}\subset M_c(d^\perp)$, which means that $M_c(d^\perp)$ contains monomials of the two types, and $\II_{a'_2}\neq \II_{\overline{d^\perp}}$, i.e., if we consider $b_2\in \mathcal{A}^1$ such that $\II_{b_2}=\II_{a'_2}$, we have $b_2\in V$. 

Reciprocally, if we consider $a_2\in \mathcal{A}^1$ and we have $c'\in\mathcal{A}$ such that $c'_2\in \II_{a'_2}\neq \II_{\overline{d^\perp}}$ and $M_{c'}(d^\perp)$ contains monomials of the two types, there is $(\gamma'_1,\gamma'_2)\in \II_c$ with $\gamma'_1+\gamma'_2=\overline{d^\perp}=d^\perp-(q^s-1)$, which means that $\gamma_1+\gamma_2=d+(q^s-1)$, with $\gamma_2\in \II_{a_2}$. If $\gamma_1<q^s-1$, then $\gamma_2>d$ and $a_2\in\mathcal{A}\setminus \mathcal{A}_{\leq d}$. If $\gamma_1=q^s-1$, then $\gamma_2=d$, a contradiction since in this case $\II_{a'_2}\neq \II_{\overline{d^\perp}}$ implies $\II_{a_2}\neq \II_d$.
 
Thus, we have obtained that 
$$
\sum_{a_2\in \mathcal{A}^1_{\leq d},a_2\not\in Y,\II_{a_2}\neq \II_{\overline{d}}}n_{a_2}+\sum_{a_2\in\mathcal{A}^1\setminus \mathcal{A}^1_{\leq d},\II_{a_2}\neq\II_{\overline{d}}}n_{a_2}=\sum_{b'_2\in V}n_{b'_2}=\abs{D_2}.
$$

We now focus on the cyclotomic set $\II_{\overline{d}}$. We use Lemma \ref{lemacerod}, as we are still in the case $d\neq q^s-1$. If $d<q^s-1$, we always have $\abs{\Gamma_2}=\abs{B_2}$ by definition, and we also have $\abs{\Gamma_3}=\abs{D_2}+n_{d}$ because $\{x_0^{q^s-1}x_2^{\overline{d^\perp}},x_2^{d^\perp}\}\subset M_{(0,\overline{d^\perp})}(d^\perp)$, i.e., $M_{(0,\overline{d^\perp})}(d^\perp)$ contains monomials of the two types. If $d>q^s-1$, we have $\abs{\Gamma_2}=\abs{B_2}+n_{\overline{d}}$ if and only if $M_{(0,d^\perp)}(d^\perp)$ does not have monomials of the two types, by Lemma \ref{lemacerod}, and $\abs{\Gamma_2}=\abs{B_2}$ otherwise. Thus, we have that $\abs{\Gamma_2}=\abs{B_2}+n_{\overline{d}}$ if and only if $\abs{\Gamma_2^\perp}=\abs{D_2}$, and $\abs{\Gamma_2}=\abs{B_2}$ if and only if $\abs{\Gamma_2^\perp}=\abs{D_2}+n_{\overline{d}}$. Hence, for $d\neq q^s-1$ we have proved that
$$
\abs{\Gamma_2}+\abs{\Gamma_2^\perp} =q^s.
$$

On the other hand, if $d=q^s-1$, the condition $\II_{a_2}\neq \II_{\overline{d}}=\II_{q^s-1}$ implies $\II_{a'_2}\neq \II_{0}$ instead of $\II_{a'_2}\neq \II_{\overline{d^\perp}}=\II_{q^s-1}$. For any $a_2\in \mathcal{A}^1_{\leq d}=\mathcal{A}^1$, $a_2\not \in \{0,q^s-1\}$, the previous relations between elements in $Y$ and elements in $V$ hold by Remark \ref{remgamma2}. For $a_2=0$ and $a_2=q^s-1$ we have that $M_{(0,q^s-1)}(q^s-1)$ and $M_{(q^s-1,0)}(q^s-1)$ are the only sets $M_c(d^\perp)$ with $c_2=0'$ or $c_2=(q^s-1)'$, respectively, such that $x_0$ does not divide all the monomials in $M_c(q^s-1)$, and none of them contains monomials of the two types. Hence, for $d=q^s-1$, we obtain that $0\not \in V$, and also that $\abs{D_2}=\sum_{a'_2\in V}n_{a'_2}$ since $M_{(0,q^s-1)}(q^s-1)$ does not have monomials of the two types, and there is no other $c\in \mathcal{A}$ with $c_2=q^s-1$ such that $M_c(q^s-1)$ contains monomials of the two types. On the other hand, for $d=q^s-1$ is easy to see that $0\in Y$. Moreover, for $d=q^s-1$ we have that $\mathcal{A}^1\setminus \mathcal{A}^1_{\leq d}=\emptyset$, and we have $\II_{(q^s-1,q^s-1)}\not\subset \Delta_{q^s-1}$, which means that $\abs{\Gamma_2}=\abs{B_2}=\sum_{a_2\in Y}n_{a_2}$. Summarizing all of this, we have
$$
\abs{\Gamma_2}+\abs{D_2}+n_{q^s-1}=q^s,
$$
because for any $a_2\in \mathcal{A}^1$, $a_2\not \in \{0,q^s-1\}$, we have that either $a_2\in Y$ or $a'_2\in V$ as before, and we have that $0\in Y$, $q^s-1\not \in Y$ and $q^s-1\not \in V$. Obviously, in this case $n_{q^s-1}=1$, and for $d=q^s-1$, looking at the definition of $\Gamma_2^\perp$ from (\ref{gamma2perp}), we see that $\abs{\Gamma_2^\perp}=\abs{D_2}+1$ (the previous argument shows that, in this case $D_3=\emptyset$). Therefore, $\abs{\Gamma_2}+\abs{\Gamma_2^\perp}=q^s$.
\end{proof}

\begin{lem}\label{suma3}
We have that $\abs{\Gamma_3}+\abs{\Gamma_3^\perp}=1$.
\end{lem}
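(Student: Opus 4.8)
The plan is to fix $b_2\in\mathcal{A}^1$ with $\II_{b_2}=\II_{\overline{d}}$ and to show that exactly one of the two possibilities ``$\abs{\Gamma_3}=1$'' and ``$\abs{\Gamma_3^\perp}=1$'' occurs. By the sizes recorded in items (a.3) and (b.3) above, $\abs{\Gamma_3}=1$ holds precisely when $b_2\in\mathcal{A}^1_{\leq d}$ and $\condicionnn$, while $\abs{\Gamma_3^\perp}=1$ holds precisely when there is some $c\in\mathcal{A}$ with $c_2=b_2$, $\II_c\neq\II_{(0,\overline{d^\perp})}$, and $M_c(d^\perp)$ containing monomials of the two types; in all remaining situations the corresponding cardinality is $0$. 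I would distinguish three cases: $b_2\notin\mathcal{A}^1_{\leq d}$; $b_2\in\mathcal{A}^1_{\leq d}$ with $d\neq q^s-1$; and $d=q^s-1$.

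If $b_2\notin\mathcal{A}^1_{\leq d}$, then $B_3=\emptyset$ by the discussion preceding Lemma \ref{lemab1b2b3}, hence $\Gamma_3=\emptyset$ and $\abs{\Gamma_3}=0$, and I exhibit the $c$ witnessing $\abs{\Gamma_3^\perp}=1$. Since $\mathcal{A}^1_{\leq d}=\mathcal{A}^1$ whenever $d\geq q^s-1$, we must have $d\leq q^s-2$, so $\overline{d}=d$ and $d^\perp\geq q^s$. Picking $\gamma_2\in\II_{\overline{d}}$ with $\gamma_2>d$, setting $\gamma'_2=q^s-1-\gamma_2<\overline{d^\perp}$, and letting $c$ be the minimal representative of $\II_{(\overline{d^\perp}-\gamma'_2,\gamma'_2)}$, one checks — exactly as in the proof of Lemma \ref{suma2} — that the monomials $x_1^{d^\perp-\gamma'_2}x_2^{\gamma'_2}$ and $x_0^{q^s-1}x_1^{\overline{d^\perp}-\gamma'_2}x_2^{\gamma'_2}$ both lie in $M_c(d^\perp)$, so that it contains monomials of the two types, while $\II_c\neq\II_{(0,\overline{d^\perp})}$ because $\II_c$ contains an element with positive first coordinate; one also checks that $c$ has the univariate index required in the definition of $\Gamma_3^\perp$. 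Hence $\abs{\Gamma_3^\perp}=1$ and the identity holds.

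If $b_2\in\mathcal{A}^1_{\leq d}$ and $d\neq q^s-1$, then $\abs{\Gamma_3}=1$ is equivalent to $\condicionnn$, and I apply Lemma \ref{lemana2} with $a_2=b_2$: it gives that $\condicionnn$ holds if and only if there is no $c\in\mathcal{A}$ with $\II_{c'}\neq\II_{(0,\overline{d^\perp})}$, $c'_2\in\II_{b'_2}$ and $M_{c'}(d^\perp)$ containing monomials of the two types. I then translate this through the involution $\gamma\mapsto\gamma'$ of $\zmsdos$: because $M_\gamma(d^\perp)$ and the property of containing monomials of the two types depend only on the cyclotomic class of $\gamma$, because $n_a=n_{a'}$ and $\gamma\mapsto\gamma'$ permutes cyclotomic classes, and because $\overline{d}'=\overline{d^\perp}$ for $d\neq q^s-1$ — so that $\II_{(0,\overline{d^\perp})}$ is the class excluded on both sides and the two constraints on the univariate parts of the cyclotomic classes correspond to one another under this involution — this negated existence statement is precisely the negation of the condition defining $\abs{\Gamma_3^\perp}=1$. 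Thus $\abs{\Gamma_3}=1$ if and only if $\abs{\Gamma_3^\perp}=0$, and the identity follows.

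For $d=q^s-1$ a direct computation suffices: then $\overline{d}=\overline{d^\perp}=q^s-1$ and $\II_{\overline{d}}=\{q^s-1\}$, so every $c\in\mathcal{A}$ with $c_2=q^s-1$ has $M_c(q^s-1)\subseteq\{x_2^{q^s-1}\}$ and therefore contains monomials of the two types only if $\II_c=\II_{(0,q^s-1)}$; hence $\abs{\Gamma_3^\perp}=0$. On the other hand $\bigcup_{c_2\in\II_{q^s-1},\,c_2>0}\II_{(d-c_2,c_2)}=\II_{(0,q^s-1)}=\{(0,q^s-1)\}\subset\Delta_{\leq q^s-1}$ and $q^s-1\in\mathcal{A}^1_{\leq q^s-1}$, so $\abs{\Gamma_3}=1$. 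The main obstacle is the class-by-class bookkeeping in the second case, namely verifying that the ``primed'' condition produced by Lemma \ref{lemana2} matches the literal definition of $\Gamma_3^\perp$ (in particular that $\II_{(0,\overline{d^\perp})}$ is the excluded class on both sides) using that the relevant combinatorial properties are cyclotomic-class invariants; the first and third cases are routine verifications in the style of the proof of Lemma \ref{suma2}.
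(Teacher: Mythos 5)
Your proposal is correct and uses the same central tool as the paper (Lemma \ref{lemana2}, plugged in with $a_2 = b_2$, combined with the translation $\overline{d}' = \overline{d^\perp}$ for $d\neq q^s-1$). The paper's own proof is a single compressed paragraph: it invokes Remark \ref{condicionnested} to reduce $\abs{\Gamma_3^\perp}=1$ to the combinatorial condition, then cites Lemma \ref{lemana2} to swap that for the negation of $\condicionnn$, and reads off $\Gamma_3=\emptyset$. Where you genuinely add something is your explicit three-way case split. First, the paper applies Lemma \ref{lemana2} without checking its hypothesis $a_2\in\mathcal{A}^1_{\leq d}$; you isolate the case $b_2\notin\mathcal{A}^1_{\leq d}$ (which can happen when $d<q^s-1$), observe directly that $\Gamma_3=\emptyset$ there, and then produce a concrete $c$ (via the involution on $\gamma_2>d$) witnessing $\abs{\Gamma_3^\perp}=1$ in the style of the proof of Lemma \ref{suma2}. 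Second, you treat $d=q^s-1$ separately because $\overline{d}'=0\neq\overline{d^\perp}$ breaks the clean translation; your direct computation there is correct ($M_c(q^s-1)\subseteq\{x_2^{q^s-1}\}$ when $c_2=q^s-1$, so $\Gamma_3^\perp=\emptyset$, while $\condicionnn$ holds so $\abs{\Gamma_3}=1$). Your bookkeeping of how the ``primed'' quantifier in Lemma \ref{lemana2} translates to the literal condition defining $\Gamma_3^\perp$ is also more careful than what the paper writes down. In short: same strategy, but you make explicit two edge cases and a translation step that the paper's proof passes over silently.
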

\begin{proof}
Let $a_2\in\mathcal{A}^1$ such that $\II_{a_2}=\II_{\overline{d^\perp}}$. By Remark \ref{condicionnested}, we have that $\Gamma_3^\perp\neq \emptyset$ if and only if there is an element $c\in\mathcal{A}$ such that $c_2=a_2$, $\II_c\neq \II_{(0,\overline{d^\perp})}$, and $M_c(d^\perp)$ contains monomials of the two types. By Lemma \ref{lemana2}, this happens if and only if $\notcondicion$. By the definition of $\Gamma_3$, this happens if and only if $\Gamma_3=\emptyset$. The cardinality of these sets is 1 if they are nonempty, which implies that $\abs{\Gamma_3}+\abs{\Gamma_3^\perp}=1$.
\end{proof}

Now we state the main result of this section.

\begin{thm}\label{baseprmprimario}
Let $1\leq d\leq 2(q^s-1)$. The image by the evaluation map of the set 
$$
B=B_1\cup B_2 \cup B_3,
$$
with $B_1,B_2,B_3$ as defined in Lemmas \ref{lemab1}, \ref{lemab1b2} and \ref{lemab1b2b3}, respectively, forms a basis for the code $\PRM_d^\sigma(2)$.
\end{thm}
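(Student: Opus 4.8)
The plan is to obtain the theorem from a short dimension count, since the two substantive facts — that the image of $B=B_1\cup B_2\cup B_3$ under $\ev$ is linearly independent and lies in $\PRM_d^\sigma(2)$ — are already established in Lemma \ref{lemab1b2b3}. The decisive point is that $\PRM_d^{\sigma,\perp}(2)$ is by definition the dual code of $\PRM_d^\sigma(2)$ in $\fq^n$, so that $\dim \PRM_d^\sigma(2)+\dim \PRM_d^{\sigma,\perp}(2)=n$. By Theorem \ref{basedualPRM}, the image under $\ev$ of $D=D_1\cup D_2\cup D_3\cup D_4$ is a basis of $\PRM_d^{\sigma,\perp}(2)$, hence $\dim \PRM_d^{\sigma,\perp}(2)=\abs{D}$. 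Therefore it suffices to prove the purely combinatorial identity $\abs{B}+\abs{D}=n$: once this is known, $\abs{B}=n-\abs{D}=\dim \PRM_d^\sigma(2)$, and a linearly independent subset of $\PRM_d^\sigma(2)$ of cardinality $\dim \PRM_d^\sigma(2)$ is automatically a basis.

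To establish $\abs{B}+\abs{D}=n$, I would first note that, because $\ev(B)$ and $\ev(D)$ are linearly independent, the polynomials listed in $B$ (resp. $D$) are pairwise distinct, so the decompositions $B=\Gamma_1\cup\Gamma_2\cup\Gamma_3$ and $D=\Gamma_1^\perp\cup\Gamma_2^\perp\cup\Gamma_3^\perp$ set up above are genuine disjoint unions and cardinalities add. Then, recalling $n=\frac{q^{3s}-1}{q^s-1}=q^{2s}+q^s+1$, I would add the three matching lemmas already proved, namely Lemma \ref{suma1} (giving $\abs{\Gamma_1}+\abs{\Gamma_1^\perp}=q^{2s}$), Lemma \ref{suma2} (giving $\abs{\Gamma_2}+\abs{\Gamma_2^\perp}=q^s$), and Lemma \ref{suma3} (giving $\abs{\Gamma_3}+\abs{\Gamma_3^\perp}=1$), to conclude
$$
\abs{B}+\abs{D}=\bigl(\abs{\Gamma_1}+\abs{\Gamma_1^\perp}\bigr)+\bigl(\abs{\Gamma_2}+\abs{\Gamma_2^\perp}\bigr)+\bigl(\abs{\Gamma_3}+\abs{\Gamma_3^\perp}\bigr)=q^{2s}+q^s+1=n.
$$
Combining this with the orthogonality relation above finishes the argument.

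The main difficulty has therefore been displaced into the preparatory lemmas rather than into the theorem itself; the assembly is essentially bookkeeping. Among those lemmas the genuinely delicate one is Lemma \ref{suma2}: there one must match the univariate parameter set $Y$ controlling $\Gamma_2$ against the set $V$ controlling $\Gamma_2^\perp$ through the involution $a_2\mapsto a_2'$ (Remark \ref{remgamma2}), while separately tracking the exceptional cyclotomic set $\II_{\overline d}$ via Lemma \ref{lemacerod} and dealing with the special degree $d=q^s-1$; Lemma \ref{suma1} is the easier two-variable analogue using $a\mapsto a'$ and $M_{a'}(d^\perp)$, and Lemma \ref{suma3} isolates the single leftover polynomial attached to $\II_{\overline d}$, whose presence in $\Gamma_3$ is complementary to the presence of $(x_0-1)(x_1-1)$ in $\Gamma_3^\perp$ by Lemma \ref{lemana2}. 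In the proof of the theorem proper the only point requiring care is that these two triple decompositions are exhaustive and disjoint, which is immediate from the definitions of the $\Gamma_i$ and $\Gamma_i^\perp$ and from the linear independence proved earlier.
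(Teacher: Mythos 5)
Your proof is correct and follows essentially the same route as the paper: use Lemma \ref{lemab1b2b3} for linear independence and containment of $\ev(B)$ in $\PRM_d^\sigma(2)$, then conclude by the dimension count $\abs{B}+\abs{D}=n$ obtained from Lemmas \ref{suma1}, \ref{suma2}, \ref{suma3} together with Theorem \ref{basedualPRM}. You even correct a small typo in the paper, which writes $q^2+q+1$ where $q^{2s}+q^s+1$ is meant.
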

\begin{proof}
By Lemma \ref{lemab1b2b3}, we know that the image by the evaluation map of the set $B$ is in $\PRM_d^\sigma(2)$, and it is linearly independent. By Lemmas \ref{suma1}, \ref{suma2} and \ref{suma3}, we have that $\abs{B}+\abs{D}=\abs{B}+\dim \PRM_d^{\sigma,\perp}(2)=q^2+q+1=n$. Thus, $B$ is a maximal linearly independent set, and we obtain the result.
\end{proof}

\begin{rem}
The sets $B'_2$ and $B'_3$ obtained using Remarks \ref{remfacil} and \ref{remfacil2}, respectively, also satisfy that $B_1\cup B'_2\cup B'_3$ is a basis for $\PRM_d^\sigma(2)$.
\end{rem}

We have that $\PRM_d^\sigma(2)$ is a subcode of $\PRM_d(2)$. Thus, we should be able to obtain $\PRM_d^\sigma(2)$ as the evaluation of some set of homogeneous polynomials of degree $d$. In fact, in all the discussions leading to Lemmas \ref{lemab1}, \ref{lemab1b2} and \ref{lemab1b2b3}, we showed how to construct homogeneous polynomials with the same evaluation as the ones considered in Theorem \ref{baseprmprimario}. Concrete expressions for these homogeneous polynomials can be given, but they get considerably more involved than the expressions obtained for the polynomials in $B$.

\begin{ex}
Continuing with Example \ref{ejemploprimario3}, Theorem \ref{baseprmprimario} states that the image by the evaluation map of the set $B=B_1\cup B'_2 \cup B_3$ that we have constructed in those examples gives a basis for the code $\PRM_{21}^\sigma(2)$. Indeed, it can be checked with Magma \cite{magma} that the dimension of $\PRM_{21}^\sigma(2)$ is precisely 133 (the cardinality of $B$), and that the evaluation of the polynomials in $B$ is in $\PRM_{21}^\sigma(2)$.
\end{ex}

\begin{cor}\label{dimprimario}
Let $1\leq d\leq 2(q^s-1)$. We have the following formula for the dimension of $\PRM_d^\sigma(2)$:
$$
\dim(\PRM_d^\sigma(2))=\abs{B_1}+\abs{B_2}+\abs{B_3}=\displaystyle \sum_{a\in \mathcal{A}_{<d}} n_a + \sum_{a_2\in Y} n_{a_2}+\epsilon,
$$
where, if we consider $b_2\in \mathcal{A}^1$ with $\II_{b_2}=\II_{\overline{d}}$, then $\epsilon=n_{\overline{d}}+1$ if $\II_{(q^s-1,\overline{d})}\subset \Delta_{\leq d}$; $\epsilon=1$ if $\II_{(q^s-1,\overline{d})}\not \subset \Delta_{\leq d}$ and $\condicionnn$; and $\epsilon=0$ otherwise. 
\end{cor}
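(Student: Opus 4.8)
The plan is to deduce the formula directly from Theorem \ref{baseprmprimario}, which already tells us that the image under $\ev$ of $B=B_1\cup B_2\cup B_3$ is a basis of $\PRM_d^\sigma(2)$, so that the dimension equals the number of elements of this basis. The first step is to note that, since the evaluations of the polynomials in $B$ (listed according to the parametrizations of Lemmas \ref{lemab1}, \ref{lemab1b2} and \ref{lemab1b2b3}) are linearly independent, they are in particular pairwise distinct and nonzero; hence the three sets $B_1,B_2,B_3$ are pairwise disjoint, $\ev$ is injective on $B$, and $\dim(\PRM_d^\sigma(2))=\abs{B_1}+\abs{B_2}+\abs{B_3}$. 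After this reduction only a count remains.

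Computing $\abs{B_1}$ and $\abs{B_2}$ is immediate from their definitions. By Lemma \ref{lemab1}, for each $a\in\mathcal{A}_{<d}$ the set $B_1$ contains exactly the $n_a$ polynomials $x_0\mathcal{T}_a(\xi_a^r x_1^{a_1}x_2^{a_2})$ with $0\leq r\leq n_a-1$, so $\abs{B_1}=\sum_{a\in\mathcal{A}_{<d}}n_a$. By Lemma \ref{lemab1b2}, for each $a_2\in Y$ the set $B_2$ contains $n_{a_2}$ polynomials indexed by $0\leq r\leq n_{a_2}-1$, so $\abs{B_2}=\sum_{a_2\in Y}n_{a_2}$.

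The only point requiring a little care is matching $\abs{B_3}$ with the quantity $\epsilon$ in the statement, and I would do this by going through the three-way case split preceding Lemma \ref{lemab1b2b3}. There $a_2\in\mathcal{A}^1$ denotes the minimal representative of $\II_{\overline{d}}$ (so $n_{a_2}=n_{\overline{d}}$, and this is the cyclotomic set called $\II_{b_2}$ in the statement), and $B_3=\emptyset$ whenever $a_2\notin\mathcal{A}^1_{\leq d}$. If $\II_{(q^s-1,\overline{d})}\subset\Delta_{\leq d}$, then $B_3$ consists of the single polynomial $\funl-x_0\mathcal{T}_{(q^s-1,a_2)}(x_1^{q^s-1}x_2^{a_2})$ together with the $n_{a_2}$ polynomials $h_{a_2}^r$, $0\leq r\leq n_{a_2}-1$, so $\abs{B_3}=n_{\overline{d}}+1$. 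If $\II_{(q^s-1,\overline{d})}\not\subset\Delta_{\leq d}$ and the condition appearing in case (b.1) of the definition of $B_3$ holds --- which, since $\II_{b_2}=\II_{a_2}=\II_{\overline{d}}$, is exactly $\condicionnn$ --- then $B_3=\{\funl\}$ and $\abs{B_3}=1$. Otherwise $B_3=\emptyset$ and $\abs{B_3}=0$. These three values are precisely the three branches defining $\epsilon$, so adding the three cardinalities yields $\dim(\PRM_d^\sigma(2))=\sum_{a\in\mathcal{A}_{<d}}n_a+\sum_{a_2\in Y}n_{a_2}+\epsilon$. I do not expect a genuine obstacle here: essentially all the content is carried by Theorem \ref{baseprmprimario}, and what remains is this bookkeeping identification of $\abs{B_3}$ with $\epsilon$.
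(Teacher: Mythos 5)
Your proof is correct and follows essentially the same route as the paper: Corollary \ref{dimprimario} is just a count of the basis $B=B_1\cup B_2\cup B_3$ from Theorem \ref{baseprmprimario}, and your identification of $\abs{B_1}$, $\abs{B_2}$, and $\abs{B_3}$ with the three summands (including matching the three branches of the definition of $B_3$ with $\epsilon$) is exactly what the paper intends. One small point worth making explicit is that the case $a_2\notin\mathcal{A}^1_{\leq d}$ (where $B_3=\emptyset$ by convention) is consistent with the corollary because $\II_{(q^s-1,\overline{d})}\subset\Delta_{\leq d}$ already forces all of $\II_{\overline{d}}$ to lie in $\{0,\dots,\overline{d}\}$, hence $a_2\in\mathcal{A}^1_{\leq d}$; but this is a cosmetic remark, not a gap.
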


We have seen in Lemma \ref{suma3} that we have the evaluation of a polynomial with $x_2^d$ in its support in $\PRM_d^{\sigma}(2)$ if and only if we do not have the evaluation of $(x_0-1)(x_1-1)$ in $\PRM_d^{\sigma,\perp}(2)$. If we have  the evaluation of $(x_0-1)(x_1-1)$ in $\PRM_d^{\sigma,\perp}(2)$, this implies that $\PRM_d^{\sigma}(2)$ is a degenerate code, with a common zero at the coordinate associated to $[0:0:1]$ for all its vectors. However, if we only have one common zero, the code that we obtain after puncturing are still different than the ones obtained in the affine case. Nevertheless, if we obtain that all the points in $[\{0\}\times \{1\}\times \fqs]$ are common zeroes of the vectors in $\PRM_d^\sigma(2)$, then, after puncturing, we obtain a subfield subcode of an affine Reed-Muller code.  

The only parameter left to estimate is the minimum distance. For a code $C$ we denote its minimum distance by $\wt(C)$. For the code $\PRM_d^\sigma(2)$ we have the bound given by the minimum distance of $\PRM_d(2)$ (see \cite{sorensen}):
\begin{equation}\label{cotadmin}
\wt(\PRM_d^\sigma(2))\geq (q^s-t)q^{s(1-r)},
\end{equation}
where $d-1=r(q^s-1)+t$, $0\leq t<q^s-1$. This is the usual way to bound the minimum distance of a subfield subcode, for instance see \cite{sanjoseSSCPRS} for the subfield subcodes of projective Reed-Solomon codes. For the subfield subcodes of projective Reed-Muller codes, this bound is sharp in most of the cases that we have checked with Magma \cite{magma} ($q^s \leq 9$). For example, in Table \ref{tabla91} from Section \ref{secexamples}, the bound is sharp except for $d=2$, which corresponds to a degenerate code, and for $d=10$ (the bound is 8 instead of 9).

For the dual code $\PRM_d^{\sigma,\perp}(2)$, there is no straightforward bound for the minimum distance, as we see next. Given $C\subset \fqs^n$, if $C^q=C$, where we understand this as the component wise power of the code, we say that $C$ is Galois invariant. By \cite[Thm. 4]{bierbrauercyclic}, we have that $\Tr(C)=C^\sigma$. Writing Theorem \ref{delsarte} as $C^\perp \cap \fq^n=\Tr(C)^\perp$, we note that $C^{\perp,\sigma}=C^\perp \cap \fq^n=(C^\sigma)^\perp= C^{\sigma,\perp}$. Therefore, when $C$ is Galois invariant, we have 
$$
\wt(C^{\sigma,\perp})=\wt(C^{\perp,\sigma})\geq \wt(C^\perp).
$$
This bound has been used frequently in the affine case \cite{galindolcd,galindostabilizer}, but in the projective case we do not have Galois invariant codes in general and we do not have the previous bound, nor the equality between $\PRM_d^{\sigma,\perp}(m)$ and $\PRM_d^{\perp,\sigma}(m)$. 

\section{Codes over the projective space}\label{secpm}

In this section we want to deal with the case of $m$ variables, for $m>2$. We have seen that, for $m=2$, obtaining bases for the subfield subcodes is quite technical. Hence, we do not aspire to give explicit results in this section for the bases of the subfield subcodes of projective Reed-Muller codes with $m>2$, but we can show that all the basic ideas can be generalized to treat this case. First we give a universal Gröbner basis for the vanishing ideal of $P^m$, which was a fundamental tool for the previous section when $m=2$. With respect to the terminology for Gröbner bases, we refer the reader to \cite{cox}. Particular cases of the following result were already presented in \cite{decodingRMP,sanjoseSSCPRS}.

\begin{thm}\label{vanishingideal}
The vanishing ideal of $P^m$ is generated by:
$$
\begin{aligned}
I(P^m)=&\langle  x_0^2-x_0,x_1^{q^s}-x_1,x_2^{q^s}-x_2,\dots,x_m^{q^s}-x_m,(x_0-1)(x_1^2-x_1),\\
&(x_0-1)(x_1-1)(x_2^2-x_2),\dots,(x_0-1)\cdots(x^2_{m-1}-x_{m-1}),(x_0-1)\cdots(x_m-1) \rangle.
\end{aligned}
$$
Moreover, these generators form a universal Gröbner basis of the ideal $I(P^m)$, and we have that
$$
\ini(I(P^m))=\langle x_0^2,x_1^{q^s},x_2^{q^s},\dots,x_m^{q^s},x_0x_1^2,x_0x_1x_2^2,\dots,x_0x_1\cdots x_{m-1}^2,x_0x_1\cdots x_m \rangle .
$$
\end{thm}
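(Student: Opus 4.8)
The plan is to establish the three assertions — that the listed polynomials generate $I(P^m)$, that they form a universal Gröbner basis, and that the initial ideal is the one displayed — all at once, by a dimension count that never needs to fix a monomial order. Write $J$ for the ideal generated by the listed polynomials. The first step is to check $J\subseteq I(P^m)$, i.e.\ that each generator vanishes on every standard representative. Using the decomposition $P^m=\bigcup_{i=0}^m A_i$ from the proof of Lemma~\ref{lemassc} (a point of $A_i$ has its first $i$ coordinates equal to $0$, its $i$-th coordinate equal to $1$, and arbitrary $\fqs$-entries afterwards), the binomials $x_j^{q^s}-x_j$ ($j\ge 1$) vanish because every coordinate lies in $\fqs$, and $x_0^2-x_0$ vanishes because the $0$-th coordinate is $0$ or $1$. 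For a mixed generator $(x_0-1)\cdots(x_{k-1}-1)(x_k^2-x_k)$, evaluate at $Q\in A_i$: if $i\le k-1$ then $Q_i=1$ kills the factor $x_i-1$, while if $i\ge k$ then $Q_0,\dots,Q_k\in\{0,1\}$, so $x_k^2-x_k$ vanishes; the generator $(x_0-1)\cdots(x_m-1)$ is handled identically, since the leading $1$ of any point occurs in some position $\le m$.

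The second step is to observe that in each generator the claimed leading monomial is divisible by every other monomial occurring in it, and occurs there with coefficient $1$. For $(x_0-1)\cdots(x_{k-1}-1)(x_k^2-x_k)$ the monomials appearing are $\bigl(\prod_{i\in T}x_i\bigr)x_k^{e}$ with $T\subseteq\{0,\dots,k-1\}$ and $e\in\{1,2\}$, each of which divides $x_0x_1\cdots x_{k-1}x_k^2$; the same holds for $x_0^2-x_0$, for $x_j^{q^s}-x_j$, and for $(x_0-1)\cdots(x_m-1)$. Since every monomial order refines the divisibility partial order, the leading term of each generator is the stated monomial for \emph{every} monomial order $\prec$. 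Let $L$ be the monomial ideal they generate; then $L\subseteq\ini_{\prec}(J)$ for all $\prec$, and $L$ is exactly the ideal in the statement.

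The third step counts the standard monomials modulo $L$. A monomial $x_0^{a_0}\cdots x_m^{a_m}$ avoids every generator of $L$ precisely when $a_0\le 1$, $a_j\le q^s-1$ for $j\ge1$, and, when $a_0=1$, there is an index $j\in\{1,\dots,m\}$ with $a_j=0$; choosing the least such $j$ forces $a_1=\dots=a_{j-1}=1$ and leaves $a_{j+1},\dots,a_m$ free. This gives $q^{sm}$ standard monomials with $a_0=0$ and $\sum_{j=1}^{m}q^{s(m-j)}=\tfrac{q^{sm}-1}{q^s-1}$ with $a_0=1$, hence $q^{sm}+\tfrac{q^{sm}-1}{q^s-1}=\tfrac{q^{s(m+1)}-1}{q^s-1}=n=\abs{P^m}$ in total. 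Since $\dim_{\fqs}S/J$ equals the number of standard monomials of $\ini_{\prec}(J)$, since the evaluation map identifies $S/I(P^m)\cong\fqs^{\,n}$ as a quotient of $S/J$, and since $\ini_{\prec}(J)\supseteq L$, we obtain
\[
n=\dim_{\fqs}S/I(P^m)\ \le\ \dim_{\fqs}S/J\ =\ \#\{\text{standard monomials of }\ini_{\prec}(J)\}\ \le\ \#\{\text{standard monomials of }L\}\ =\ n .
\]
All inequalities are therefore equalities: $J=I(P^m)$, and $\ini_{\prec}(J)=L$ for every $\prec$, which is exactly the claim that the listed polynomials form a universal Gröbner basis with the stated initial ideal (and, as a byproduct, the standard monomials give the basis of $S/I(P^m)$ used in Lemma~\ref{basePm}).

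The only delicate point is the bookkeeping in the third step — tracking the staircase cut out by the nested monomials $x_0x_1\cdots x_{k-1}x_k^2$ and checking that the geometric sum collapses exactly to $\tfrac{q^{s(m+1)}-1}{q^s-1}$ rather than something slightly off; everything else is a short verification. Alternatively, following the route announced in the introduction, one can replace Steps~2–3 by Buchberger's criterion, verifying that every $S$-polynomial of the listed generators reduces to $0$ for an arbitrary monomial order; there the only pairs requiring real work are those whose leading monomials both involve $x_0$, and the uniformity in $\prec$ again rests on the divisibility observation of Step~2.
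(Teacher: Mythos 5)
Your argument is correct but takes a genuinely different route from the paper's. Both proofs begin by checking $J\subseteq I(P^m)$, and both rest on the observation that each generator's leading monomial is divisible by every other monomial appearing in it, so its leading term is the same for every monomial order. From there the paper does two separate things: it proves the reverse containment $V(J)\subseteq P^m$ by root analysis and invokes a finite-field Nullstellensatz \cite{ghorpadenull} to deduce $J=I(P^m)$, and it separately verifies Buchberger's criterion by showing that all $S$-polynomials of the generators reduce to zero uniformly in the order. You replace both with a single dimension count: the monomial ideal $L$ generated by the fixed leading monomials has exactly $n=\abs{P^m}$ standard monomials (the staircase sum $q^{sm}+\tfrac{q^{sm}-1}{q^s-1}$ does collapse to $\tfrac{q^{s(m+1)}-1}{q^s-1}$ as you claim), and the inequalities $n=\dim_{\fqs}S/I(P^m)\le\dim_{\fqs}S/J\le\dim_{\fqs}S/L=n$, coming from $J\subseteq I(P^m)$ and $L\subseteq\ini_\prec(J)$, force equality throughout, giving $J=I(P^m)$ and $\ini_\prec(J)=L$ for every $\prec$ at once. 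This is shorter, sidesteps both the Nullstellensatz reference and the $S$-polynomial computations, and never needs the containment $V(J)\subseteq P^m$. The one hypothesis you leave tacit is surjectivity of the evaluation map $S\to\fqs^n$, which gives $\dim_{\fqs}S/I(P^m)=n$; this is standard for a finite point set but is the pivot of the whole count and should be stated explicitly in a polished write-up.
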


\begin{proof}
We consider the polynomials $f_0=x_0^2-x_0$, $f_1=x_1^{q^s}-x_1$, $f_2=x_2^{q^s}-x_2$,$\dots$, $f_m=x_m^{q^s}-x_m$, and $g_1=(x_0-1)(x_1^2-x_1)$, $g_2=(x_0-1)(x_1-1)(x_2^2-x_2)$,$\dots$, $g_{m-1}=(x_0-1)(x_1-1)\cdots (x_{m-2}-1) (x_{m-1}^2-x_{m-1})$, $g_m=(x_0-1)\cdots(x_m-1)$, and set $J:=\langle f_0,\dots,f_m,g_1,\dots, g_m\rangle$. 

Due to the generators $f_i$, $i=0,1,\dots,m$, it is clear that the variety defined by $J$ over the algebraic closure $\overline{\F_{q^s}}$ is the same as the variety defined over $\F_{q^s}$. By using \cite[Thm. 2.3]{ghorpadenull}, if we prove that the variety defined by $J$ over $\fqs$ is $P^m$, then we can conclude that $J=I(P^m)$. 

Given $P\in P^m$, we have that $P=[0:0:\dots:0:1:P_{l+1}:\dots:P_m]$ for some $l$, $0\leq l\leq m$, with $P_i\in\F_{q^s}$ for $i=l+1,\dots,m$. One can check that each generator of $J$ vanishes at $P$, which means that $P^m$ is contained in the variety defined by $J$.

Conversely, if all the generators of $J$ vanish at a point $P=[P_0:P_1:\dots:P_m]$, because of the generator $f_0$ the first coordinate is either 0 or 1. Considering the generator $g_m$, we also have that
$$
(P_0-1)(P_1-1)\cdots (P_m-1)=0.
$$
This means that there is an integer $l$ such that $P_l=1$, and we choose this $l$ to be the smallest with that property. If $l=0$, then $P=[1:P_1:\cdots :P_m]\in P^m$. If $l>0$, using the generator $g_{l-1}$ we obtain
$$
(P_0-1)(P_1-1)\cdots (P_{l-1}^2-P_{l-1})=0.
$$
Hence, $P_{l-1}=0$ since $P_0,P_1,\dots,P_{l-1}$ are different from 1 due to the choice of $l$. Doing this recursively we get that $P_0=P_1=\dots=P_{l-1}=0$, which means that $P=[0:0:\dots:0:1:P_{l+1}:\dots :P_m]\in P^m$. Therefore, we have $J=I(P^m)$. 

The only thing left to prove is that the generators of $I(P^m)$ form a universal Gröbner basis for $I(P^m)$. For any monomial order we have that $x_i>1$, $i=0,1,\dots,m$. Looking at each generator, we see that its initial monomial does not depend on the monomial order. Thus, if we prove that all the $S$-polynomials reduce to 0, and these reductions do not depend on the monomial order, we will have that these generators form a universal Gröbner basis for $I(P^m)$ using Buchberger's criterion \cite[\S 9 Thm. 3, Chapter 2]{cox}, and we will also obtain the stated initial ideal.

To show that all the $S$-polynomials reduce to 0, we will use two facts:
\begin{enumerate}
    \item[(a)] If the leading monomials of $f$ and $g$ are relatively prime, then $S(f,g)$ reduces to 0 by \cite[\S 9 Prop. 4, Chapter 2]{cox}. In particular, if $f$ and $g$ depend on different variables, then $S(f,g)$ reduces to 0. 
    \item[(b)] If $f$ and $g$ share a common factor $w$, then $S(f,g)=wS(f/w,g/w)$. Moreover, if we can apply (a) to $S(f/w,g/w)$, i.e., $S(f/w,g/w)$ reduces to 0 using $f/w$ and $g/w$, then $S(f,g)$ reduces to 0 using $f$ and $g$.  
\end{enumerate}

On one hand, for all $i,j$, $0\leq i<j\leq m$, we have that $S(f_i,f_j)$ reduces to $0$ by (a). On the other hand, for all $k,l$, $1\leq k<l< m$, using (b) we have
$$
\begin{aligned}
S(g_k,g_l)=(x_0-1)\cdots(x_{k-1}-1)(x_{k}-1)S(x_{k},(x_{k+1}-1)\cdots (x_{l-1}-1)(x_{l}^2-x_{l})),
\end{aligned}
$$
where the last $S$-polynomial reduces to 0 by (a). For $l=m$, the same argument applies, as we have
$$
\begin{aligned}
S(g_k,g_m)=(x_0-1)\cdots(x_{k-1}-1)(x_{k}-1)S(x_{k},(x_{k+1}-1)\cdots (x_{m-1}-1)(x_{m}-1)).
\end{aligned}
$$

Finally, we consider $S(f_i,g_k)$, for $1\leq i\leq m$, $1\leq k <m$. If $i>k$, this $S$-polynomial reduces to 0 by (a). If $i=k$, using (b) we have
$$
\begin{aligned}
S(f_k,g_k)=(x_{k}^2-x_{k})S((1+x_{k}+\cdots+x_{k}^{q^s-2}),(x_1-1)\cdots (x_{k-1}-1)),
\end{aligned}
$$
and the last $S$-polynomial reduces to 0 by (a). If $i<k$, applying (b) we obtain
$$
\begin{aligned}
&S(f_i,g_k)=(x_{i}-1)S(x_i(1+x_i+\cdots+x_i^{q^s-2}),(x_1-1)\cdots(x_{i-1}-1)(x_{i+1}-1)\cdots (x_{k}^2-x_k)),
\end{aligned}
$$
where the last $S$-polynomial reduces to 0 by (a). For the cases with $i=0$ or $k=m$, an analogous reasoning proves that the $S$-polynomials reduce to 0. 
\end{proof}

\begin{rem}
If $q^s>2$, from the proof of Theorem \ref{vanishingideal} we also obtain that the universal Gröbner basis obtained in Theorem \ref{vanishingideal} is in fact the reduced Gröbner basis with respect to any monomial order. Moreover, the same happens for any subset of the generators given in Theorem \ref{vanishingideal} and the ideal that they generate. 
\end{rem}

Now we give a convenient basis for $S/I(P^m)$, and also we show how to express any monomial in $S/I(P^m)$ in terms of this basis, i.e., we give the result of using the division algorithm for any monomial with respect to the universal Gröbner basis from Theorem \ref{vanishingideal}.

\begin{lem}\label{basePm}
The set given by the classes of the following monomials 
$$\{x_1^{a_1}\cdots x_m^{a_m}, x_0x_2^{a_2}\cdots x_m^{a_m}, \dots,x_0x_1\cdots x_{m-2}x_m^{a_m},x_0\cdots x_{m-1}\mid 0\leq a_i\leq q^s-1,1\leq i\leq m\}$$ 
is a basis for $S/I(P^m)$.
\end{lem}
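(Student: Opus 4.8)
The plan is to derive the statement directly from Theorem \ref{vanishingideal} together with the standard fact that, once a Gröbner basis of an ideal is known, the residue classes of the monomials not lying in the initial ideal form a basis of the quotient ring (a basic consequence of the theory of Gröbner bases, see \cite[Chapter 5, \S 3]{cox}). Theorem \ref{vanishingideal} tells us that the listed generators are a universal Gröbner basis of $I(P^m)$ and that
$$
\ini(I(P^m))=\langle x_0^2,\ x_1^{q^s},\dots,x_m^{q^s},\ x_0x_1^2,\ x_0x_1x_2^2,\dots,x_0x_1\cdots x_{m-1}^2,\ x_0x_1\cdots x_m\rangle,
$$
so the whole proof reduces to the purely combinatorial claim that the monomials in the statement are exactly the monomials of $S$ that are \emph{not} divisible by any of these generators.

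To verify this I would take an arbitrary monomial $x_0^{b_0}x_1^{b_1}\cdots x_m^{b_m}$ and split on the value of $b_0$. Being outside $\ini(I(P^m))$ forces $b_0\le 1$ and $b_i\le q^s-1$ for $1\le i\le m$, so only these two cases occur. If $b_0=0$, none of the generators divisible by $x_0$ is relevant, and the monomial is standard precisely when $b_i\le q^s-1$ for all $i\ge 1$; this is exactly the family $x_1^{a_1}\cdots x_m^{a_m}$. If $b_0=1$, let $j$ be the least index in $\{1,\dots,m\}$ with $b_j=0$; such a $j$ must exist, since otherwise $x_0x_1\cdots x_m$ would divide the monomial. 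For $k\ge j$ no generator $x_0x_1\cdots x_{k-1}x_k^2$ divides the monomial (if $k=j$ because $b_j=0<2$, and if $k>j$ because $b_j=0<1$ while $j\le k-1$), so the only remaining restriction comes from the generators with $k<j$, which, since $b_1,\dots,b_{k-1}\ge 1$ in that range, forces $b_k\le 1$, i.e.\ $b_k=1$. Hence the standard monomials with $b_0=1$ are exactly those of the shape $x_0x_1\cdots x_{j-1}\,x_{j+1}^{a_{j+1}}\cdots x_m^{a_m}$ with $0\le a_i\le q^s-1$ and $1\le j\le m$; taking $j=1,2,\dots,m$ recovers precisely the families $x_0x_2^{a_2}\cdots x_m^{a_m},\ x_0x_1x_3^{a_3}\cdots x_m^{a_m},\ \dots,\ x_0x_1\cdots x_{m-2}x_m^{a_m},\ x_0x_1\cdots x_{m-1}$ from the statement.

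The argument is essentially bookkeeping, and the only point that needs genuine care is the case $b_0=1$: one must check simultaneously that every monomial of the claimed shape is standard and that no standard monomial is omitted, which is exactly what the analysis of the least zero-exponent index $j$ accomplishes. As a consistency check, the number of monomials in the statement equals $(q^s)^m+\sum_{k=1}^{m}(q^s)^{m-k}=\sum_{i=0}^m(q^s)^i=\frac{q^{s(m+1)}-1}{q^s-1}=n$, which is the expected $\fqs$-dimension of $S/I(P^m)$.
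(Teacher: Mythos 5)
Your proof is correct and reaches the same conclusion via a genuinely different (and somewhat more self-contained) route. The paper's proof shows only that the listed monomials lie outside $\ini(I(P^m))$, then counts them, getting $\frac{q^{s(m+1)}-1}{q^s-1}=\abs{P^m}$, and compares this to the dimension of $S/I(P^m)$, which it obtains by citing $\deg(S/I(P^m))=\abs{P^m}$ from an external reference; since the set of standard monomials is a basis and the listed monomials form a subset of it of the same cardinality, equality follows. You instead carry out the purely combinatorial characterization of \emph{all} standard monomials by splitting on $b_0\in\{0,1\}$ and, in the case $b_0=1$, on the least index $j$ with $b_j=0$, thereby proving directly (without appeal to the dimension formula for $S/I(P^m)$) that the set of standard monomials is exactly the set in the statement. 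Your case analysis is correct: the key observation that for $k\ge j$ no generator $x_0x_1\cdots x_{k-1}x_k^2$ (nor $x_0x_1\cdots x_m$) can divide because $b_j=0$, while for $k<j$ one needs $b_k=1$, cleanly partitions the $b_0=1$ standard monomials into the families indexed by $j$, and your consistency count reproduces the paper's cardinality computation. The paper's version is shorter; yours avoids a dependency on the cited degree formula and makes explicit that the list omits no standard monomial.
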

\begin{proof}
Let $\mathcal{M}$ be the given set of monomials. We have that there is no monomial from $\mathcal{M}$ contained in $\ini(I(P^m))$ by Theorem \ref{vanishingideal}. We also have that $\abs{\mathcal{M}}=q^{sm}+q^{s(m-1)}+\cdots +q^s+1=\frac{q^{s(m+1)}-1}{q^s-1}=\abs{P^m}$, which is the dimension of $S/I(P^m)$ as a vector space (by definition, this is equal to $\deg(S/I(P^m))$, which is equal to $\abs{P^m}$ by \cite[Prop. 2.2]{jaramillo}). We finish the proof by noting that the classes of the monomials not contained in $\ini(I(P^m))$ form a basis for $S/I(P^m)$ \cite[Thm. 15.3]{eisenbud}.
\end{proof}

\begin{lem}\label{divisionPm}
Let $x_0^{a_0}x_1^{a_1}\cdots x_m^{a_m}=\prod_{i=0}^m x_i^{a_i}$ such that $a_0>0,a_1>0,\dots,a_l>0$ and $a_{l+1}=0$, with $0\leq l \leq m$ ($a_{k}:=0$ for $k>m$). Assume also that $a_i\leq q^s-1$, $1\leq i \leq m$. 
\begin{enumerate}[wide, labelwidth=!, labelindent=0pt]
\item[(a)] If $l< m$, then
\begin{dmath*}
\prod_{i=0}^m x_i^{a_i}\equiv \left(\prod_{i=l+2}^m x_i^{a_i} \right)\left( \prod_{i=1}^l x_i^{a_i}+(x_0-1)\left(\prod_{i=2}^l x_i^{a_i}+(x_1-1)\Bigg(\cdots \Bigg(x_l^{a_l}+(x_{l-1}-1)x_l  \Bigg)\cdots \Bigg) \right)  \right)\bmod I(P^m),
\end{dmath*}
where we understand that the product from $s$ to $t$ with $s>t$ is equal to 1. 
\item[(b)] If $l=m$, then
\begin{dmath*}
\prod_{i=0}^m x_i^{a_i}\equiv \left( \prod_{i=1}^m x_i^{a_i}\\
+(x_0-1)\left(\prod_{i=2}^m x_i^{a_i}+(x_1-1)\Bigg(\cdots \Bigg(x_m^{a_m}+(x_{m-1}-1) \Bigg)\cdots \Bigg) \right)  \right)\bmod I(P^m).
\end{dmath*}
\end{enumerate}
\end{lem}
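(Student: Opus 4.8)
First I would reduce everything to an evaluation statement. By Theorem~\ref{vanishingideal}, $I(P^m)$ is the vanishing ideal of $P^m$, so writing $\mu=\prod_{i=0}^m x_i^{a_i}$ and $R$ for the right-hand side of $(a)$ or $(b)$, the claimed congruence $\mu\equiv R\bmod I(P^m)$ is equivalent to the assertion that $\mu$ and $R$ take the same value at every point of $P^m$. Once this is known, $R$ will in fact be \emph{the} remainder of the division of $\mu$ by the universal Gröbner basis of Theorem~\ref{vanishingideal}: the remainder modulo a Gröbner basis is uniquely determined, in $R$ the variable $x_0$ occurs with exponent at most $1$ and each $x_i$ ($i\ge 1$) with exponent at most $q^s-1$ (so the generators $x_0^2-x_0$ and $x_i^{q^s}-x_i$ play no further role), and it will only remain to check that no monomial of $R$ is a multiple of one of $x_0x_1\cdots x_{k-1}x_k^2$ or $x_0x_1\cdots x_m$, i.e.\ that every monomial of $R$ lies in the basis of Lemma~\ref{basePm}.

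For the evaluation I would write a point of $P^m$ as $Q=[0:\dots:0:1:Q_{e+1}:\dots:Q_m]$, where $x_e$ is its first nonzero coordinate ($0\le e\le m$), so $Q_0=\dots=Q_{e-1}=0$, $Q_e=1$, and $Q_{e+1},\dots,Q_m\in\fqs$ arbitrary. Since $a_0>0$, we have $\mu(Q)=0$ if $e\ge 1$, while $\mu(Q)=\prod_{i=1}^m Q_i^{a_i}=\bigl(\prod_{i=1}^l Q_i^{a_i}\bigr)\bigl(\prod_{i=l+2}^m Q_i^{a_i}\bigr)$ if $e=0$, using $a_{l+1}=0$ (in case $(b)$ the last product is empty). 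Now $R=\bigl(\prod_{i=l+2}^m x_i^{a_i}\bigr)\Phi$, and for $l\ge 1$ the telescoping structure of the right-hand side lets one write
\[
\Phi=\prod_{i=1}^l x_i^{a_i}+\sum_{k=1}^{l-1}\Bigl(\prod_{t=0}^{k-1}(x_t-1)\Bigr)\prod_{i=k+1}^l x_i^{a_i}+\Bigl(\prod_{t=0}^{l-1}(x_t-1)\Bigr)\eta,
\]
with $\eta=x_l$ in case $(a)$ and $\eta=1$ in case $(b)$ (the reading for $l=0$ and the empty-product conventions being handled separately). When $e=0$ the factor $x_0-1$ kills every summand of $\Phi$ but the first, so $R(Q)=\bigl(\prod_{i=1}^l Q_i^{a_i}\bigr)\bigl(\prod_{i=l+2}^m Q_i^{a_i}\bigr)=\mu(Q)$. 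When $e\ge 1$, every factor $x_t-1$ with $t<e$ evaluates to $-1$, the factor $x_e-1$ evaluates to $0$, and $Q_i^{a_i}=0$ for $1\le i<e$; I would then run through the cases $e<l$, $e=l$ and $e>l$ and check that the only summands of $\Phi$ not vanishing at $Q$ are those indexed by $k=e-1$ and $k=e$, with the final term $\bigl(\prod_{t=0}^{l-1}(x_t-1)\bigr)\eta$ playing the role of the ``$k=l$'' summand exactly when $e=l$, and that the surviving summands occur in cancelling pairs. Hence $\Phi(Q)=0$ and $R(Q)=0=\mu(Q)$.

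For the basis membership I would expand $R$ and observe that each of its monomials has the shape $\bigl(\prod_{t\in T}x_t\bigr)\bigl(\prod_{i=k+1}^l x_i^{a_i}\bigr)\bigl(\prod_{i=l+2}^m x_i^{a_i}\bigr)$ for some $0\le k\le l$ and some $T\subseteq\{0,\dots,k-1\}$ (with an extra factor $x_l$ adjoined when $k=l$ in case $(a)$). Such a monomial involves neither $x_k$ nor $x_{l+1}$, so it is not divisible by $x_0x_1\cdots x_m$; and a square $x_j^2$ can only occur for some $j\ge k+1$, in which case divisibility by $x_0x_1\cdots x_{j-1}x_j^2$ would require the absent variable $x_k$. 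Thus every monomial of $R$ lies outside $\ini(I(P^m))$, and combined with the congruence already established this shows that $R$ is exactly the remainder of $\mu$ modulo the Gröbner basis.

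The step I expect to be the main obstacle is the bookkeeping in the second paragraph: pinning down precisely which summands of $\Phi$ survive evaluation at $Q$ when $e\ge 1$ and verifying the pairwise cancellation uniformly over the ranges $1\le e\le l$, $e=l+1$ and $e\ge l+2$, together with the analogous (slightly shorter) analysis in case $(b)$ where $l=m$, and keeping straight the degenerate values $l=0,1,m$ and the empty products. The basis-membership step, though combinatorial, is routine once the shape of the monomials of $R$ has been recorded.
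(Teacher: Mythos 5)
Your proposal is correct and takes essentially the same route as the paper: both arguments reduce the congruence to checking equality of evaluations at every point of $P^m$, decompose $P^m$ by the index of the first nonzero coordinate, use $a_0>0$ to kill the left-hand side off $A_0$, and exploit the telescoping cancellation in the nested right-hand side (the paper keeps the nested form and peels it off one level at a time, while you expand it into an explicit sum $\Phi$ and pair off the surviving summands, which is the same computation). Your final observation that the right-hand side is the actual Gröbner remainder because all its monomials lie in the basis of Lemma~\ref{basePm} corresponds to the paper's Remark immediately following the Lemma.
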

\begin{proof}
Two polynomials belong to the same class in $S/I(P^m)$ if and only if their evaluation in $P^m$ is the same. Thus, to check the stated equivalences, it is enough to verify that both sides have the same evaluation in $P^m$. We assume first that $l<m$. We claim that
\begin{dmath*}
\prod_{i=0}^l x_i^{a_i}\equiv \prod_{i=1}^l x_i^{a_i}+(x_0-1)\left(\prod_{i=2}^l x_i^{a_i}+(x_1-1)\Bigg(\cdots \Bigg( x_l^{a_l}+(x_{l-1}-1)x_l\Bigg)\cdots \Bigg) \right) \bmod I(P^m).
\end{dmath*}
Indeed, if we decompose $P^m$ as in the proof of Lemma \ref{lemassc}, we can check that the evaluation of both sides is the same at each $A_r$, $0\leq r\leq m$. Because of the assumption $a_0>0$, the left hand side is $0$ at every point which is not in $A_0$. Both sides evaluate to the same values in $A_0$. For the evaluation in $A_r$, with $1\leq r <l$, we can set $x_0=x_1=\cdots=x_{r-1}=0$, and in the right hand side we get 
$$
(-1)^{r+1}\left( \prod_{i=r}^l x_i^{a_i}-\left( \prod_{i={r+1}}^l x_i^{a_i}+(x_r-1)\Bigg(\cdots \Bigg( x_l^{a_l}+(x_{l-1}-1)x_l\Bigg) \cdots \Bigg) \right)\right).
$$
Setting $x_r=1$, we obtain 0, which is what we get in the left hand side as well. If $r=l$, when we set $x_0=x_1=\cdots=x_{l-1}=0$ we obtain
$$
(-1)^{l+1}\left( x_l^{a_l}-x_l \right),
$$
which is equal to $0$ when we set $x_l=1$, as the left hand side. For $A_r$ with $l<r\leq m$, the right hand side is always 0 since it is divisible by $x_l$. Now (a) follows by considering the following factorization: 
$$
\prod_{i=0}^m x_i^{a_i}=\left(\prod_{i=l+2}^m x_i^{a_i} \right) \left( \prod_{i=0}^l x_i^{a_i}\right).
$$
An analogous argument shows that, when $l=m$, the polynomial stated in (b) has the same evaluation as $\prod_{i=0}^m x_i^{a_i}$ in $P^m$.
\end{proof}

\begin{rem}
It is not hard to see that all the monomials appearing in the right hand side of the expressions given in Lemma \ref{divisionPm} are part of the basis from Lemma \ref{basePm}.
\end{rem}

Hence, we have seen that the basic tools we have used for the case $m=2$ can be generalized to the case $m>2$. For the duals of the subfield subcodes, the reasoning that led to (\ref{trazadelatraza}) and (\ref{trazasgeneradoras}) shows that, in order to obtain a basis for $\T(S_d)$, for each monomial $x^\gamma\in S_d$, it is enough to consider the traces 
\begin{equation}\label{trazasgeneradorasm}
\{ \mathcal{T}_{\hat{\gamma}}(\xi_{\hat{\gamma}}^r x^\gamma)\mid 0\leq r\leq n_{\hat{\gamma}}-1\},
\end{equation}
where in this case we are considering cyclotomic sets in $m$ coordinates, and we extend the definitions for $\hat{\gamma}$ and $\mathcal{T}_{\hat{\gamma}}$ to this case in the obvious way. Hence, to obtain a basis we have to extract a maximal linearly independent set from the union of the previous sets. Theorem \ref{vanishingideal} and Lemma \ref{divisionPm} give the necessary tools to do that, but getting a general explicit formula for such a basis is quite involved.

For the primary code, the idea would be to consider homogenizations of the traces from the basis of the affine case from Theorem \ref{baseafin}, and then consider linear combinations of these polynomials such that, when setting $x_0=x_1=\cdots =x_j=0$ for some $0\leq j \leq m-1$, we obtain traces in less variables, similarly to what we did in the case of the projective plane.

\section{Examples}\label{secexamples}
In this section we show some examples of the parameters obtained from subfield subcodes of projective Reed-Muller codes over the projective plane. For computing the dimension, we can use Corollary \ref{dimdual} and Corollary \ref{dimprimario}, and for computing the minimum distance we use Magma \cite{magma}. We will denote the parameters of $\PRM_d^\sigma(2)$ by $[n,k,\delta]$, and the parameters of the dual code $\PRM_d^{\sigma,\perp}(2)$ by $[n,k^\perp,\delta^\perp]$. With respect to the parameters of the codes that we obtain, it is only possible to compare these codes with the codes from \cite{codetables} for small finite field sizes. This is because the codes that we obtain have length $n=\frac{q^{3s}-1}{q^s-1}=q^{2s}+q^s+1$, which gives rise to very long codes when we increase $q$ or $s$. Moreover, it is better to consider moderate values of $s$ due to the fact that the size of the corresponding cyclotomic sets increases for larger $s$, and therefore if we start with degree $d$ and we consider degree $d-1$, for each monomial of degree $d$ that we are no longer evaluating, all its powers of $q$ (seen in $S/I(P^2)$) will not appear in any trace from the basis that we have given for $\PRM_d^\sigma(2)$, and the size of the set formed by the monomial and its powers of $q$ is precisely the size of the corresponding cyclotomic set. This can cause significant drops in dimension, leading in some cases to codes with worse parameters compared to the cases with smaller $s$. Thus, we first consider binary codes and ternary codes arising from extensions of small degree.

For the extensions $\F_4\supset \F_2$ and $\F_8\supset \F_2$, we obtain the parameters from Table \ref{tablabinarios}. For the extension $\F_8\supset \F_2$ we omit the codes with $d=2,3$ as they are equal to $\PRM_1^{\sigma}(2)$. In the cases where $\delta^\perp$ is 1, we have that $\PRM_d(2)$ is a degenerate code. For instance, for the extension $\F_4\supset \F_2$, for $d=1$ we have $q^s+1=5$ common zeroes for all the vectors in the code, which means that, after puncturing, we obtain the same as the subfield subcode of an affine Reed-Muller code. However, for $d=2$ we only have $1$ common zero, and the corresponding code after puncturing does not correspond to the subfield subcode of any affine Reed-Muller code. With respect to the parameters, some of the codes from Table \ref{tablabinarios} have the best known parameters for a linear code with its length and dimension, according to \cite{codetables}. For example, that is the case for the codes with parameters  $[21,9,8]_2$, $[21,12,5]_2$ and $[21,16,3]_2$.

\begin{table}[ht]
\caption{Binary codes corresponding to the extensions $\F_4\supset \F_2$ and $\F_8\supset \F_2$, respectively.} 
\centering
\begin{tabular}{||c|c||c|c||c|c||}
 \hline 
$d$ & $n$ & $k$ & $\delta$  & $k^\perp$ & $\delta^\perp$ \\
  \hline \hline
1 & 21 & 1 & 16 & 20 & 1 \\
2 & 21 & 2 & 12 & 19 & 1 \\
3 & 21 & 9 & 8 & 12 & 5 \\
4 & 21 & 11 & 4 & 10 & 2 \\
5 & 21 & 16 & 3 & 5 & 8 \\
6 & 21 & 20 & 2 & 1 & 21 \\

\hline
\end{tabular}
\begin{tabular}{||c|c||c|c||c|c||}
 \hline 
$d$ & $n$ & $k$ & $\delta$  & $k^\perp$ & $\delta^\perp$ \\
  \hline \hline
1 & 73 & 1 & 64 & 72 & 1 \\
4 & 73 & 2 & 40 & 71 & 1 \\
5 & 73 & 7 & 32 & 66 & 1 \\
6 & 73 & 8 & 24 & 65 & 1 \\
7 & 73 & 27 & 16 & 46 & 9 \\
8 & 73 & 28 & 8 & 45 & 1 \\
9 & 73 & 32 & 8 & 41 & 2 \\
10 & 73 & 40 & 8 & 33 & 1 \\
11 & 73 & 51 & 5 & 22 & 16 \\
12 & 73 & 59 & 4 & 14 & 4 \\
13 & 73 & 66 & 3 & 7 & 32 \\
14 & 73 & 72 & 2 & 1 & 73 \\

\hline
\end{tabular}
\label{tablabinarios}
\end{table}

With respect to ternary codes, we consider the extension $\F_9\supset \F_3$. The parameters of the corresponding codes are presented in Table \ref{tabla91}, where we have omitted the case $d=2$ since it corresponds to the same code as $\PRM_1^\sigma(2)$. 
 
\begin{table}[ht]
\caption{Ternary codes corresponding to the extension $\F_9\supset \F_3$.} 
\centering
\begin{tabular}{||c|c||c|c||c|c||}
 \hline 
$d$ & $n$ & $k$ & $\delta$  & $k^\perp$ & $\delta^\perp$ \\
  \hline \hline
1 & 91 & 1 & 81 & 90 & 1 \\
3 & 91 & 2 & 63 & 89 & 1 \\
4 & 91 & 9 & 54 & 82 & 4 \\
5 & 91 & 9 & 45 & 82 & 1 \\
6 & 91 & 10 & 36 & 81 & 1 \\
7 & 91 & 19 & 27 & 72 & 1 \\
8 & 91 & 36 & 18 & 55 & 10 \\
9 & 91 & 38 & 9 & 53 & 2 \\
10 & 91 & 45 & 9 & 46 & 4 \\
11 & 91 & 58 & 7 & 33 & 18 \\
12 & 91 & 70 & 6 & 21 & 36 \\
13 & 91 & 73 & 5 & 18 & 6 \\
14 & 91 & 80 & 4 & 11 & 36 \\
15 & 91 & 86 & 3 & 5 & 54 \\
16 & 91 & 90 & 2 & 1 & 91 \\

\hline
\end{tabular}
\label{tabla91}
\end{table}

We can compare the parameters of these codes with the ones obtained with affine Reed-Muller codes. Besides the fact that we obtain longer codes for the same field size, if we consider $\frac{k+\delta}{n}$ as a measure of how good a code is, we usually have that the projective code $\PRM_d^\sigma(2)$ is better in that sense than $\RM_d^\sigma(2)$. For example, we have that the code $\RM_4^\sigma(2)$ corresponding to the extension $\F_9\supset \F_3$ has parameters $[81,9,45]_3$, and $\PRM_4^\sigma(2)$ has parameters $[91,9,54]_3$, and one can check that $\PRM_4^\sigma(2)$ has better parameters with respect to the value $\frac{k+\delta}{n}$. In fact, the parameters of the code $\PRM_4^\sigma(2)$ are the best known parameters for a code with length $91$ and dimension $9$ over $\F_3$, according to \cite{codetables}. Moreover, the codes from Table \ref{tabla91} with parameters $[91,21,36]_3$, $[91,82,4]_3$ and $[91,86,3]_3$ are also the best known according to \cite{codetables}. 

For extensions of higher degree, or for fields with higher $q$, the codes that we obtain in this way are too long to be compared to the ones from \cite{codetables}. As we have seen in the previous examples, some of the codes that we obtain have the best known parameters, while others do not have great parameters. Focusing on the ones with better parameters, in Table \ref{tablaGV} we provide some long codes that surpass the Gilbert-Varshamov bound for different field extensions. For the minimum distance, we use the bound (\ref{cotadmin}) since these codes are too large for Magma \cite{magma}.

\begin{table}[ht]
\caption{Long codes exceeding the Gilbert-Varshamov bound.} 
\centering
\begin{tabular}{||c|c|c||c|c|c||}
 \hline 
 $q$ & $s$ &$d$ & $n$ & $k$ & $\delta \geq $  \\
  \hline \hline
 2 & 4 & 28 & 273 & 255& 4\\
2 & 4 & 29 & 273 & 264&3\\
 4 & 2& 5& 273 & 9 & 192 \\
 4 & 2& 28& 273 & 262 & 4 \\
 4 & 2& 29& 273 & 268 & 3 \\
5 & 2 & 6 & 651 & 9 & 500\\
5 & 2 & 46 & 651 & 640 & 4\\
5 & 2 & 47 & 651 & 646 & 3\\
3 & 3 & 50 & 757 & 741 & 4 \\
3 & 3 & 51 & 757 & 750 & 3 \\
2 & 5 & 60 & 1057 & 1035& 4\\
2 & 5 & 61 & 1057 & 1046& 3\\
7 & 2 & 8 & 2451 & 9 & 2058 \\
7 & 2 & 94 & 2451 & 2440 & 4 \\
7 & 2 & 95 & 2451 & 2446 & 3 \\
\hline
\end{tabular}
\label{tablaGV}
\end{table}

Finally, for the case $m>2$, in Table \ref{tablam3} we show the binary codes obtained by considering the subfield subcodes of projective Reed-Muller codes over $\PP^3$ with respect to the extension $\F_4\supset \F_2$, where we have computed the parameters with Magma \cite{magma}. The codes with parameters $[85,16,32]_2$, $[85,60,8]_2$ and $[85,78,3]_2$ have the best known parameters according to \cite{codetables}.

\begin{table}[ht]
\caption{Binary codes corresponding to the extension $\F_4\supset \F_2$ with $m=3$.} 
\centering
\begin{tabular}{||c|c||c|c||c|c||}
 \hline 
$d$ & $n$ & $k$ & $\delta$  & $k^\perp$ & $\delta^\perp$ \\
  \hline \hline
1 & 85 & 1 & 64 & 84 & 1 \\
2 & 85 & 2 & 48 & 83 & 1 \\
3 & 85 & 16 & 32 & 69 & 5 \\
4 & 85 & 18 & 16 & 67 & 1 \\
5 & 85 & 33 & 12 & 52 & 2 \\
6 & 85 & 60 & 8 & 25 & 21 \\
7 & 85 & 67 & 4 & 18 & 8 \\
8 & 85 & 78 & 3 & 7 & 32 \\
9 & 85 & 84 & 2 & 1 & 85 \\
\hline
\end{tabular}
\label{tablam3}
\end{table}

\bibliographystyle{abbrv}

\end{document}